\Crefname{thm}{Theorem}{Theorems}
\crefname{thm}{Thm.}{Thms.}
\Crefname{prop}{Proposition}{Propositions}
\crefname{prop}{Prop.}{Props.}
\Crefname{defi}{Definition}{Definitions}
\crefname{defi}{Def.}{Defs.}
\Crefname{prop}{Proposition}{Propositions}
\crefname{lemma}{Lemma}{Lemmas}
\crefname{coro}{Cor.}{Cors.}
\Crefname{coro}{Corollary}{Corollaries}
\crefname{exampe}{Ex.}{Exs.}
\Crefname{exampe}{Example}{Examples}
\crefname{remark}{Remark}{Remarks}
\crefname{conj}{Conj.}{Conjs.}
\Crefname{conj}{Conjecture}{Conjectures}
\newcommand{\MINO}{\ensuremath{\left[\!\begin{smallmatrix}M & N\end{smallmatrix}\!\right]}}
\newcommand{\MINOtype}[2]{\ensuremath{\left[\!\begin{smallmatrix}#1 & #2\end{smallmatrix}\!\right]}}
\title{A decompositional framework for process theories in spacetime}
\author{Matthias Salzger}
\email{matthias.salzger@outlook.com}
\affiliation{International Centre for Theory of Quantum Technologies, University of Gda\'nsk, Jana Ba\.zy\'nskiego 1A,
80-309 Gda\'nsk, Poland}
\author{John H. Selby}
\email{john.h.selby@gmail.com}
\affiliation{International Centre for Theory of Quantum Technologies, University of Gda\'nsk, Jana Ba\.zy\'nskiego 1A,
80-309 Gda\'nsk, Poland}
\affiliation{Theoretical Sciences Visiting Program, Okinawa Institute of Science and
Technology Graduate University, Onna, 904-0495, Japan}
\tikzstyle{frame}=[fill=red, draw=black, shape=circle]
\tikzstyle{internal}=[fill=black, draw=black, shape=circle]
\tikzstyle{smallblk}=[circle, fill, inner sep=1pt]
\begin{document}

\maketitle
\begin{abstract} 
There has been a recent surge of interest within the field of quantum foundations regarding incorporating ideas from general relativity and quantum gravity. However, many quantum information tools remain agnostic to the underlying spacetime. For instance, whenever we draw a quantum circuit the effective spacetime imposed by the connectivity of the physical qubits which will realize this circuit is not taken into account. In this work, we aim to address this limitation by extending the framework of process theories to include a background spacetime structure. We introduce the notion of process implementations, i.e., decompositions of a process. A process is then embeddable if and only if one of its implementations can be embedded in such a way that all the component processes are localized and all wires follow timelike paths. While conceptually simple, checking for embeddability is generally computationally intractable. We therefore work towards simplifying this problem as much as possible, identifying a canonical subset of implementations that determine both the embeddability of a process and the causal structures distinguishable at least in some process theory. Notably, we discover countably infinite ``zigzag'' causal structures beyond those typically considered. While these can be ignored in classical theory, they seem to be essential in quantum theory, as the quantum CNOT gate can be implemented by all zigzag structures but not in a standard causal structure, except in the trivial undecomposed way. These zigzags could be significant for quantum causal modeling and the study of novel quantum resources.
\end{abstract}
\newpage
\tableofcontents
\newpage
\section{Introduction}
Process theories provide an important framework for studying the foundations of physics as well as applied problems in quantum information theory \cite{coecke2015categorical,coecke2018picturing}. A process theory is comprised of two basic components, physical systems, which are denoted by labelled wires, and physical processes, which are denoted by labelled boxes with input systems at the bottom and output systems at the top. These can be wired together to form diagrams such as
\begin{equation}
\tikzfig{Diagrams/diagram},
\end{equation}
which themselves must be processes in the theory. That is, a process theory is a collection of processes which are closed under forming diagrams. What kinds of systems and processes we care about depends on the physics that we are trying to capture, but commonly, in the context of quantum information, we consider them to be finite dimensional quantum systems and completely positive trace preserving maps, and in the case of classical information, finite sets and stochastic maps.

It is natural to want to interpret these diagrams as living inside some background spacetime, and, in particular, to imagine that we have an arrow of time going from bottom to top according to the input-output structure of the processes. That is, we may want to think of the wires as representing world-lines of the systems and the processes as localised to particular spacetime points. Indeed, often it is very useful to make this interpretation of processes, in particular when thinking about information processing protocols where, for example, we rely on spacelike separation of various parties to ensure security in some cryptographic task \cite{kent2012unconditionally, kaniewski2013secure, lunghi2015practical, adlam2015deterministic, buhrman2014position, unruh2014quantum, vilasini2019composable}.
Such an interpretation, however, is not always valid. For example, process theories can represent physical systems which are not localised in space, such as quantum fields \cite{gogioso2021functorial} or higher-order processes in terms of time-delocalised subsystems \cite{oreshkov2019time}. In such cases it no longer makes sense to view the diagram as a depiction of a process in some background spacetime as the systems no longer trace out world-lines but some higher dimensional object. What this tells us, is that we have nothing built in to the mathematical formalism of process theories to enable us to talk (directly) about when and where these processes are taking place.

Of course, it is however, very useful to be able to talk about such things, hence in this paper we begin the project of extending process theories to also incorporate a background spacetime. We are not the first to have done such a thing, however, existing approaches have tried to capture fundamental physical theories (such as field theories) \cite{gogioso2021functorial}, whereas here we are taking a more pragmatic approach tailored towards information processing tasks. This leads us to a related but distinct formalism. An important consequence of this perspective, is that we are not necessarily interested in the fundamental spacetime describing our universe, but also in effective spacetimes which might be provided, for example, by a quantum communication network or the physical architecture of a quantum computer.

The focus of this paper is on the question of whether or not a given process in a process theory can be embedded in a distributed way in a given spacetime. A key example is the Popescu-Rohrlich (PR) box \cite{prbox}. It can be thought of as a classical process with two binary inputs and two binary outputs, which achieves perfect correlation but without signalling. We can easily implement this locally at a point, that is, if the two parties are free to communicate, but it becomes impossible (even in quantum theory) to do in certain delocalised ways, in particular, if the two parties are spacelike separated. Previous definitions regarding embeddability of processes within a spacetime, see, for example, Refs.~\cite{vilasini2022embedding, causalbox}, have focused only on the signalling structure of the processes, while the above example of the PR box highlights a deficiency in this perspective. That is, we find that these previous definitions captured only a necessary rather than a sufficient condition\footnote{We note that a necessary condition is appropriate for no-go theorems (the focus of \cite{vilasini2022embedding}).}. Additionally, there are process theories in which one might not even have a well-defined notion for signalling. This is the case in non-terminal process theories (also referred to as non-causal process theories) \cite{dariano2014determinism}. While it is natural to assume a terminal process theory when one is talking about spacetime, being able to study how non-terminal theories would behave in spacetime would nevertheless be interesting. 

The contributions of this paper can be summarised as follows: in \cref{sec:embedPR}, we discuss the issue with no-signalling as a principle for embeddability mentioned in the previous paragraph in detail. This motivates us to define embeddability in terms of the decompositional structure of processes instead in \cref{sec:implementations}. That is we embed a process by considering decompositions of it, choosing one and then localising each constituent process of this decomposition at a point in spacetime such that wires can be assumed to follow timelike paths. We call these decompositions implementations, the intuition being that a process represents some abstract protocol and the decompositions/implementations represent the circuits/procedures that implement them. This then formally gives us a way to convert ``process theories'' into ``processes theories living in spacetime'' up to the question of how these processes living in spacetime compose exactly, which will be the subject of a future work. 

The set of all implementations of a process, however, is in general an unwieldy object, which makes it difficult to decide when a process is actually embeddable. We thus turn our focus towards reducing the amount of information we need to know about the implementations of a process to always be able to decide embeddability. In \cref{sec:ReducedDescription}, we do this from a theory-independent perspective. First, we show that we only need the order-theoretic information about the implementations and that there exists a preorder over implementations, where implementations higher up are easier to embed and the implementation set of a process essentially forms a downward-closed set\footnote{A downward-closed set $X \subseteq Y$ has the property that for all $x, y \in Y$ such that $x \succ y$, if $x \in X$, then also $y \in X$.}. Then, we show that the equivalence classes of the aforementioned preorder have strictly minimal elements, which we can choose as canonical (or minimal) representative. We thus show that it suffices to know which minimal representatives correspond to implementations of the process. They thus completely characterise the theory-independent setting. 

To go further, we consider specific theories in \cref{sec:thspec}. We argue that in a theory-specific setting, some minimal representatives become irrelevant, because every process that can be implemented by one of them can also be implemented by one which is higher up in the preorder. We characterise the set of relevant minimal representatives for terminal theories, listing them explicitly for processes with few inputs and outputs. There we find that already for processes  with two inputs and two outputs, there exist countably infinite relevant minimal representatives, which we call zigzag minimal representatives due to their shapes, beyond the ones corresponding to the usual causal structures. We then find the set of minimal representatives for cd theories (i.e., theories with an appropriate notion of copying), which include classical theory, recovering a well-known result from classical causal modelling that unobserved nodes in a causal model can be assumed to be parentless variables \cite{evans2018margins}. Finally, we discuss the case of quantum theory, where we show that any Clifford gate is implemented by all of the zigzag minimal representatives. In addition, we show that the quantum CNOT gate cannot be implemented by a minimal representative that is higher up in the preorder. Hence, the zigzag minimal representative represent a set of causal structures that should not be ignored in quantum theory. Finally, we analyse some results on quantum theory from the literature \cite{dolev2019constraining, lorenz2021causal, renner2023commuting} from the point of view of our results. 

\subsection{Terminal theories and signalling condition}

We briefly review two important concepts for this work, which we alluded to in the introduction, terminal theories \cite{chiribella2010probabilistic,coecke2013causal,coecke2014terminality} and the (no-)signalling condition. For a more complete introduction to process theories we refer to \cref{app:PTs}.

\begin{restatable}{defi}{causal}\label{causal}
A process theory is terminal if there exists a unique effect called discard for each system which satisfies the following conditions
\begin{equation}\label{causalconditions}
\tikzfig{Diagrams/discardprocess} \quad = \quad \tikzfig{Diagrams/discard}\quad \text{ and } \quad \tikzfig{Diagrams/discardComposition}.
\end{equation}
\end{restatable}
We can cast quantum theory both as a terminal and as a non-terminal theory. In the former case, the processes are the completely positive trace preserving (CPTP) maps and the trace plays the role of the discard. In the latter case, the processes are the completely positive trace non-increasing maps. This is essentially equivalent to quantum theory with post-selection. When we refer to quantum theory in this work, we mean the terminal version of quantum theory. 

We note that in the framework of operational probabilistic theories (OPT) the terminality condition instead states there is a unique \textit{deterministic} effect \cite{chiribella2010probabilistic}. While this seems like a weaker requirement, both this and \cref{causal} are physically the same as they both formalise the same idea, namely that future measurements should not influence past outcomes. For more details, see \cref{app:PTs}.

In terminal theories, there is a simple condition for a process to be non-signalling. We say that a process
\begin{equation}
\tikzfig{Diagrams/generic22process}
\end{equation}
does not signal from $Y$ to $A$ if
\begin{equation}\label{sigcond}
\tikzfig{Diagrams/sigcond1} \quad = \quad \tikzfig{Diagrams/sigcond2}
\end{equation}
for some process $\mathtt{A}$. Note that the systems $X, Y, A, B$ may be composite systems.

We note here that the terminality condition is frequently referred to as the causality condition. We choose the less frequently used ``terminality here as terms like ``causal'', ``causality'' and ``causal structure'' can be ambiguous and confusing as different communities define and use them in different ways. In particular, the word ``causal structure'' can refer to the signalling structure of a process, the structure of a decomposition of a process, the lightcone structure of spacetime,... In this work, we will attempt to avoid this confusion as much as by avoiding terms like ``causal'', ``causality'' and ``causal structure'' as much as possible. 

\subsection{Motivating example: embeddability of PR boxes}\label{sec:embedPR}
As a motivating example, let us take a look at the PR box. We denote its input systems as $X, Y$ and its output output systems as $A, B$, all of which represent binary variables (whose values we will denote with $x, y, a, b$ respectively). The PR box can be written as a stochastic map between classical systems
\begin{equation}
\tikzfig{Diagrams/PRBox}
\end{equation}
defined by the conditional probability distribution $p_{\texttt{PR}}(a,b|x,y)= \frac{1}{2} \delta_{a\oplus b, xy}$. The PR box is a non-signalling resource. The input $X$ signals neither to the output $A$ nor to the output $B$ and similarly for $Y$ (however, both $X$ and $Y$ as well as the composite system $XY$ signals to the composite system $AB$).

Consider now Minkowski spacetime and a hypothetical embedding of the PR box in this spacetime as follows 
\begin{equation}\label{PRBoxImpossible}
\tikzfig{Diagrams/PRBoxinMink}
\end{equation}
where the points $\mathsf{p_1}, \mathsf{p_2}$ are spacelike separated from the points $\mathsf{q_1}, \mathsf{q_2}$ as indicated by the shaded light cones. Note here that we have only fixed the spacetime locations of the inputs and the outputs, while a priori we are not saying anything about where the ``box'' itself is located in spacetime (if it is embedded into spacetime at all). What we would like to answer is, \emph{is such an embedding consistent with a) the structure of the spacetime? and b) the particular process theory that we are considering?}

If we only require that the no-signalling principle is respected (such as is done in Ref.~\cite{vilasini2022embedding}, for example) such an embedding would be deemed consistent. Specifically, in this paradigm we can view an embedding as a map $\mathcal{E}$ from the inputs and outputs to the spacetime. We then say this embedding is consistent if and only if for any subset of inputs $S_I \subseteq \{X, Y\}$ that can signal to a subset of outputs $S_O \subseteq \{A, B\}$, there exist $s_I \in S_I$ and $s_O \in S_O$ s.t. $\mathcal{E}(s_I) \leq \mathcal{E}(s_O)$. This is the case here since for any $S_I$, the set $S_O = \{A, B\}$ is the only set of outputs $S_I$ signals to. Satisfying this condition is essential to ensure that, should such PR boxes exist embedded into a spacetime in this way, they cannot be used to send superluminal signals. 

What we are interested in here, however, is whether it is possible for us to actually implement a PR box in this way. This requires not only that we specify where the inputs and outputs go in the spacetime, but where the processes go as well. Intuitively speaking, due to the spacetime structure, the process would have to decompose as 
\begin{equation}\label{disjointprocess}
\tikzfig{Diagrams/PRBoxinMinkBell}.
\end{equation}
However, it is well known that the PR box cannot be realised in this way in quantum or classical theory \cite{prbox}. 

It is precisely this intuition which we capture and generalise in this paper. 

Indeed, the issue of \cref{PRBoxImpossible} and \cref{disjointprocess} is actually somewhat worse. Note that, within the no-signalling paradigm, the validity of the embedding in \cref{PRBoxImpossible} does not depend on whether there exists any common past of the named spacetime points. In fact, we could have a spacetime $\mathcal{M}$ consisting solely of two completely causally disconnected regions $\mathcal{M}_\mathtt{p}$ and $\mathcal{M}_\mathtt{q}$ such that $\mathtt{p}_1, \mathtt{p}_2 \in \mathcal{M}_{\mathtt{p}}$ and $\mathtt{q}_1, \mathtt{q}_2 \in \mathcal{M}_{\mathtt{q}}$. Here, with completely causally disconnected we mean that any two points $\mathtt{x} \in \mathcal{M}_\mathtt{p}, \mathtt{y} \in \mathcal{M}_\mathtt{q}$ are spacelike separated. Then, the embedding \cref{PRBoxImpossible} would still be valid as this changes nothing about the signalling relations. Thus, we would obtain perfect correlation in completely causally disconnected regions.

There is another aspect of the no-signalling paradigm that is potentially unsatisfactory. Note that at no point does the process theory come into play (beyond the fact that it is terminal). Indeed, the validity of the embedding \cref{PRBoxImpossible} depends only on the process itself and not on the underlying process theory. PR box correlations can be achieved in classical and quantum theory as well as boxworld \cite{barrett2007information}, but the former two require communication between the parties. Intuitively, the existence of an embedding \cref{PRBoxImpossible} should thus tell us that the underlying theory is boxworld (assuming there are no other candidates besides the aforementioned three) but we cannot make this statement by relying on the no-signalling principle alone. The same situation arises with Bell correlations. Again, by the non-signalling criterion, an embedding à la \cref{PRBoxImpossible} would be valid in both classical and quantum theory. Hence, our framework provides a formal tool for adjudicating between candidate theories of nature. In particular, whether or not some correlation is classical crucially depends on the underlying spacetime structure. 

\section{Embeddings from implementations}\label{sec:implementations}
Based on our discussion in the previous section, in order to understand whether or not a given process $\mathtt{F}$ within some process theory $\mathbf{Proc}$ can be implemented in a given way in a spacetime $\mathcal{M}$, it is not enough to think of the process as a monolithic block, for example,
\begin{equation}
\tikzfig{Diagrams/process},
\end{equation}
where $X_i, A_j$ are arbitrary systems (classical, quantum and/or post-quantum) of the theory.
Instead, we must consider the possible ways in which this process can be built out of other processes within the theory. For example, consider that the process $\mathtt{F}$ represents a process violating the standard Bell inequality, $\mathtt{F} = \mathtt{Bell}$. Then, in quantum theory it can be rewritten as 
\begin{equation}
\tikzfig{Diagrams/bell} \quad = \quad \tikzfig{Diagrams/belldecomp}
\end{equation}
where $\mathtt{s}$ represents a maximally entangled state and $\mathtt{f}, \mathtt{g}$ the appropriate measurements that allow Alice and Bob to violate the Bell inequality. 

Intuitively, this would allow us to embed the process $\mathtt{Bell}$ in a spacelike separated manner as in \cref{PRBoxImpossible}. Note that the above decomposition, however, would not be possible in classical theory.

To capture this, we introduce the notion of an implementation set for a given process within a given theory.
\begin{restatable}{defi}{def:implementationset}\label{def:implementationset} The implementation set $\mathcal{I}_\mathtt{F}$ of a process $\mathtt{F}\in \mathbf{Proc}$ is the set of diagrams, i.e., implementations, in $\mathbf{Proc}$ which are equal to $\mathtt{F}$. 
\end{restatable}
Note that in our formalism a process $\mathtt{F}$ comes equipped with a specified decomposition of the input into subsystems as this decomposition is crucial to deciding whether or not it is embeddable in spacetime\footnote{As we discuss in the appendix \ref{app:PTs}, this means that formally we consider a process theory to be a (coloured) Prop rather than a symmetric monoidal category, and that we can do so without any real loss of generality. }.

Note also that the implementation set for a process $\mathtt{F}$ is necessarily non-empty as, at a minimum, it must contain the process $\mathtt{F}$ itself. For example, for our Bell process in quantum theory, we would have an implementation set as
\begin{equation}\label{impset}
\mathcal{I}_\mathtt{Bell} = \left\{\tikzfig{Diagrams/bell},\ \ \tikzfig{Diagrams/belldecomp},\ \ \tikzfig{Diagrams/sequential1}, \tikzfig{Diagrams/sequential2}, \cdots \right\}.
\end{equation}
Intuitively, one should think about some $i_\mathtt{F} \in \mathcal{I}_\mathtt{F}$ as an experimental procedure, where each process in the diagram corresponds to an experimental component (or perhaps even a whole lab). In \cref{impset}, the second diagram would thus represent a genuine Bell violation, while the third and fourth represent a Bell violation exploiting the communication loophole, with either Alice or Bob being able to communicate to the other\footnote{Or more accurately, it represents an experimental realisation where we cannot \textit{exclude} the communication loophole. It could still be that the lower boxes $\mathtt{a}$ and $\mathtt{b'}$ actually represent the composition of a common cause and Alice/Bob's local lab, without the possibility of signalling.}. In order to embed in spacetime, we then localise each of these components at some spacetime location, in a way that is consistent with locality, i.e., such that wires can be thought of as following timelike paths. 

To formalise this, we conceptualise the spacetime $\mathcal{M}$ as a partial order. The elements of this partial order are spacetime points and the relations between these elements are given by the light cone structure. Indeed, as shown by Malament \cite{malament1977}, spacetime is little more than such a partial order. With the partial order between the spacetime points fixed, all that is left is a single metric parameter per spacetime point. Hence, the order relations capture spacetime up to a conformal mapping.

We can then also associate each possible implementation of $\mathtt{F}$ to a partial order with a little extra structure, which we will call (adapting the terminology of \cite{timetravel}) a \emph{framed} partial order (FPO). 

\begin{restatable}{defi}{fpodef}\label{fpodef}
A framed partial order is a triple $(S, \mathcal{I}, \mathcal{O})$ where $S$ is a finite partial order and $\mathcal{I}, \mathcal{O}$ are disjoint lists of elements in $S$ such that for all $I \in \mathcal{I}$ and all $O \in \mathcal{O}$, $I, O \in S$ and $I$ is minimal in $S$ and $O$ is maximal in $S$. We call $\mathcal{I}$ the set of inputs, $\mathcal{O}$ the set of outputs and both together the frame. Their elements are called frame elements and other elements of $S$ are called internal elements. We will generally treat $\mathcal{I}, \mathcal{O}$ to be implicit and denote the framed partial order simply with $S$.
\end{restatable}
For example, we have the following mapping from a diagram to a framed partial order
\begin{equation}\label{Gfig}
\tikzfig{Diagrams/diagram}\quad \mapsto \quad \tikzfig{Diagrams/fpodots} \quad \mapsto \tikzfig{Diagrams/fpodots2}, 
\end{equation}
where frame elements are red and internal, i.e., non-frame, elements are black dots and we added a pair of horizontal black lines to further indicate the frame. Note that the partial order is the one induced by either of the two graphs in the above equations (the arrow from $\mathtt{s}$ to $\mathtt{f}$, representing the corresponding wire in the diagram, in the graph in the middle is thus redundant and we can omit it, as is done in the graph to the right). 

We now give a more detailed definition of the conversion.
\begin{restatable}{defi}{G}\label{G}
The map $\mathcal{G}$ from the set of diagrams to the set of framed partial orders is defined in the following way: Given $i_\mathtt{F}$ a diagram, the input (output) elements of $\mathcal{G}(i_\mathtt{F})$ are in one-to-one correspondence to the inputs (outputs) of $i_\mathtt{F}$. The order of the two lists is the order of the corresponding inputs/outputs in the diagram. Additionally, the internal elements of $\mathcal{G}(i_\mathtt{F})$ are in one-to-one correspondence to boxes in the diagram $i_\mathtt{F}$ excluding boxes whose sole input is an input of the diagram or whose sole output is an output of the diagram. 


\end{restatable}
We illustrate this definition in the example below. We will also take this opportunity to simplify the notation we will use to depict FPOs a bit, removing the line on the frame and the explicit arrows, assuming implicitly that the arrows point upwards.
\begin{gather}\label{runningexample}
\begin{aligned}
\tikzfig{Diagrams/gexample1}  \mapsto \quad \tikzfig{Diagrams/gexample4}
\end{aligned}
\end{gather}
We explain in the detail why each part of the diagram has or does not have a corresponding element in the FPO. In the diagram on the left hand side we have the four inputs $I_i$ and four outputs $O_i$ which show up on the right hand side as the four corresponding input and four corresponding output elements. The ordering of the list of inputs/outputs is provided by the left-to-right order of these elements in the diagrammatic depiction (hence, swapping the position of $O_1$ and $O_2$ would yield a different FPO corresponding to a different diagram. On the other hand, the internal elements can be freely moved around). We see that the FPO only has two internal elements $\mathtt{k}, \mathtt{h}$. There are no internal elements for the other boxes as  their sole input/output is an input/output of the diagram. For $\mathtt{f}$, we see that it has only one input which is also an input of the diagram and only one output which is an output of the diagram. The box $\mathtt{g}$ has multiple outputs but its sole input is the diagram input $I_2$. The box $\mathtt{l}$ has a single input which is not a diagram input, but also only a single output which is the diagram output $O_3$. On the other hand, $\mathtt{k}$ has multiple inputs and while it has only one output this is not an output of the diagram. The process $\mathtt{h}$ has multiple inputs and outputs and is thus also not excluded.

We note that we could have also defined $\mathcal{G}(i_\mathtt{F})$ to have an internal element for each box, including ones whose sole input is an input of the diagram or whose sole output is an output of the diagram. All results would still hold in an appropriately modified form even with this alternate definition. However, the definition in \cref{fpodef} allows us to keep FPOs a bit more compact and to represent later results in a cleaner way. 

We will usually abuse notation by using the same names for systems and processes in the diagram as for elements in the framed partial order. This way it is easy to see which element in the partial order corresponds to which system or process in the diagram. On the other hand, as we will see in \cref{sec:ReducedDescription}, the labels of the elements in the FPO are essentially arbitrary anyway (intuitively, the important information for embeddability in spacetime should be the relationships between boxes/events, i.e., how they are connected, and not their names).

With these definitions in place we can now formally ask what it means for a process to be embedded in a particular way in spacetime. For this purpose, we need order-preserving maps. A map $\mathcal{E}: S_1 \rightarrow S_2$ where $S_1, S_2$ are partial orders is called order-preserving if for all $x, y \in S_1$, it holds that $x \leq y \implies \mathcal{E}(x) \leq \mathcal{E}(y)$.
\begin{restatable}{defi}{embimp}\label{embimp}
An embedded implementation of a process $\mathtt{F} \in \mathbf{Proc}$ in a spacetime $\mathcal{M}$ is a pair $(i_\mathtt{F}, \mathcal{E})$ where $i_\mathtt{F} \in \mathcal{I}_\mathtt{F}$ and $\mathcal{E}: \mathcal{G}(i_\mathtt{F}) \rightarrow \mathcal{M}$ is an order-preserving map.
\end{restatable}
Embedded implementations necessarily exist, as one could always map everything to the same point in spacetime. This is of course not particularly interesting, so in practice, we will therefore not be concerned with the existence of an arbitrary embedding, but whether one exists such that the input and output systems are embedded at particular spacetime points. 

\begin{restatable}{defi}{clocal}\label{clocal}
A localisation of endpoints is a map $\mathcal{C}$ from the inputs and outputs of a process $\mathtt{F} \in \mathbf{Proc}$ to the spacetime $\mathcal{M}$. Given a specific localisation of endpoints $\mathcal{C}$, we  say that an embedding $(i_\mathtt{F}, \mathcal{E})$ is $\mathcal{C}-$local if $\mathcal{C}(A) = \mathcal{E}(A)$ for all inputs and outputs $A$\footnote{Note that this constraint is making use of our abuse of notation in that the $A$ on the LHS is an input/output of a process, whilst the $A$ on the RHS is the associated element in the framed partial order.}. 
\end{restatable}

Now we have all of the machinery set up to formalise the question we intuitively were asking about embeddability of our process $\mathtt{Bell}$. Given the process $\mathtt{Bell}$ in quantum theory, we want to know if it can be embedded in Minkowski spacetime in a spacelike separated way. That is, does there exist a $\mathcal{C}$-local embedding where $\mathcal{C}$ is such that 
\begin{equation}\label{bellconstraint}
\tikzfig{Diagrams/Bellconstraint}.
\end{equation}

We then pick an implementation from the implementation set of $\mathtt{Bell}$ \cref{impset}, apply $\mathcal{G}$ to it to turn it into an FPO and find a $\mathcal{C}$-local embedding $\mathcal{E}$,

\begin{equation}
\tikzfig{Diagrams/bell} \quad = \quad \tikzfig{Diagrams/belldecomp} \quad \substack{\mathcal{G} \\ \rightarrow} \quad \tikzfig{Diagrams/bellminrep} \quad \substack{\mathcal{E} \\ \rightarrow} \quad \tikzfig{Diagrams/BellinMink}.
\end{equation}
Note that it is crucial that we are in quantum theory, as one would expect. If we were instead asking if the process $\mathtt{Bell}$, as a process in classical theory has a $\mathcal{C}$-local embedding, the answer would be negative as the common cause implementation above is not an implementation of the \textit{classical} process $\mathtt{Bell}$. In order to $\mathcal{C}$-locally embed the classical $\mathtt{Bell}$, the localisation $\mathcal{C}$ must be such that at least one input is mapped to a spacetime point in the past of the spacetime localisations of both outputs, i.e., such that the communication loophole is not closed. The framework thus gives us a way to distinguish between classical and quantum theory (or generally, between two process theories), just as we desired.

\subsection{Revisiting the PR box}\label{sec:revisit}
We motivated our framework by pointing out that  ``unphysical'' scenarios such as perfect correlation without even a common past cannot be excluded by just appealing to no-signalling as a physical principle. We now show how this motivation is addressed in our framework and how, in particular, space-like extended PR boxes are impossible in quantum theory in the framework. In fact, this follows from a more general statement, namely that in all terminal theories a process can be embedded into spacelike separated regions iff it has a common cause decomposition.
\begin{restatable}{prop}{primpossible}\label{primpossible}
Let $\mathtt{F} \in \mathbf{Proc}$, with $\mathbf{Proc}$ terminal, be a process with two inputs $X, Y$ and two outputs $A, B$. There exists a $\mathcal{C}$-local embedded implementation of $\mathtt{F}$ into Minkowski spacetime with the localisation of endpoints $\mathcal{C}$
\begin{equation}\label{prconstraint}
\tikzfig{Diagrams/Bellconstraint}
\end{equation}
if and only if $\mathtt{F}$ has an implementation
\begin{equation}
\tikzfig{Diagrams/belldecomp} \in \mathcal{I}_\mathtt{F}
\end{equation}
for some $\mathtt{f}, \mathtt{g}, \mathtt{s} \in \mathbf{Proc}$.
\end{restatable}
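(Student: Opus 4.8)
I would prove the two implications separately, with the bulk of the work in the ``only if'' direction.

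\emph{The ``if'' direction.} Given an implementation $\tikzfig{Diagrams/belldecomp}\in\mathcal I_\mathtt F$ with $\mathtt f,\mathtt g,\mathtt s\in\mathbf{Proc}$, I would apply $\mathcal G$ to it. After the two identification steps, $\mathtt f$ is merged into $A$ and $\mathtt g$ into $B$, leaving an FPO with a single internal element $\mathtt s$ and order relations only $X\leq A$, $Y\leq B$, $\mathtt s\leq A$, $\mathtt s\leq B$. I would then define $\mathcal E$ by sending $X,Y,A,B$ to the points $\mathcal C(X),\mathcal C(Y),\mathcal C(A),\mathcal C(B)$ prescribed by the Bell-type localisation (which in particular places Alice's pair $\{X,A\}$ and Bob's pair $\{Y,B\}$ in mutually spacelike regions with $\mathcal C(X)\leq\mathcal C(A)$ and $\mathcal C(Y)\leq\mathcal C(B)$), and sending $\mathtt s$ to any point $r$ in the common causal past $J^-(\mathcal C(A))\cap J^-(\mathcal C(B))$, which is non-empty in Minkowski spacetime. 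Order-preservation of $\mathcal E$ then reduces to the four facts $\mathcal C(X)\leq\mathcal C(A)$, $\mathcal C(Y)\leq\mathcal C(B)$, $r\leq\mathcal C(A)$, $r\leq\mathcal C(B)$, all of which hold by construction, so $(\tikzfig{Diagrams/belldecomp},\mathcal E)$ is a $\mathcal C$-local embedded implementation.

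\emph{The ``only if'' direction.} Suppose $(i_\mathtt F,\mathcal E)$ is a $\mathcal C$-local embedded implementation. Writing $p\rightsquigarrow q$ for the existence of a future-directed path from $p$ to $q$ in the diagram $i_\mathtt F$, order-preservation of $\mathcal E$ together with $\mathcal C$-locality gives that $p\rightsquigarrow q$ implies $\mathcal E(p)\leq\mathcal E(q)$ in $\mathcal M$; in particular, since $\mathcal C(X)$ is spacelike to $\mathcal C(B)$ and $\mathcal C(Y)$ to $\mathcal C(A)$, no box $w$ can satisfy both $X\rightsquigarrow w$ and $w\rightsquigarrow B$, and no box can satisfy both $Y\rightsquigarrow w$ and $w\rightsquigarrow A$. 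The plan is to carve the boxes of $i_\mathtt F$ into $\mathsf A:=\{w: X\rightsquigarrow w\}$, $\mathsf B:=\{w: Y\rightsquigarrow w\}$ and the complementary set $\mathsf R$, and to check — each point being a one-line consequence of transitivity of $\rightsquigarrow$ and the two forbidden-path facts — that $\mathsf R$ is downward closed with no open inputs (hence a state, which becomes $\mathtt s$), that every wire entering $\mathsf A$ originates in $X$, $\mathsf A$ or $\mathsf R$ and every wire leaving $\mathsf A$ enters only the output $A$ (giving $\mathtt f$), and symmetrically for $\mathsf B$ and $\mathtt g$; e.g.\ a wire from $\mathsf A$ to $B$ would yield $X\rightsquigarrow w\rightsquigarrow B$. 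Closure of the process theory under forming diagrams then gives $\mathtt f,\mathtt g,\mathtt s\in\mathbf{Proc}$ and $i_\mathtt F=\tikzfig{Diagrams/belldecomp}$.

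\textbf{Main obstacle.} The carving above is not quite well defined: a box $w$ may be reachable from \emph{both} $X$ and $Y$, and one can check this genuinely occurs in valid embeddings (its image may lie in the common future of the two spacelike input points). Here the two forbidden-path facts come to the rescue: such a $w$ has no future-directed path to $A$ or to $B$, so the collection $D^{\ast}$ of all such boxes is upward closed among the boxes and has no output legs, i.e.\ it forms an effect. I would then invoke that in a causal theory every effect equals the discard (an immediate consequence of \cref{causal}) to replace $D^{\ast}$ by discards on its incoming wires; this does not change the process, so the new diagram is still in $\mathcal I_\mathtt F$ and now contains no box reachable from both $X$ and $Y$, making the carving well defined. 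A final cosmetic step absorbs any residual boxes all of whose outputs are discarded, and inserts identity boxes on wires running directly from $\mathcal I$ or $\mathsf R$ to $\mathcal O$, so that the diagram takes \emph{literally} the form $\tikzfig{Diagrams/belldecomp}$. I expect this discard-replacement step, together with the bookkeeping of exactly which wires can occur between the three groups, to be the only non-routine part of the argument.
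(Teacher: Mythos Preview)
Your argument is correct, but the carving you use is genuinely different from the paper's. The paper partitions the boxes of $i_\mathtt F$ by where $\mathcal E$ lands them in Minkowski spacetime: the causal diamond between $\mathcal C(X)$ and $\mathcal C(A)$ gives $\mathtt f$, the diamond between $\mathcal C(Y)$ and $\mathcal C(B)$ gives $\mathtt g$, the rest of $J^-(\mathcal C(A))\cup J^-(\mathcal C(B))$ gives $\mathtt s$, and everything else is grouped into a single box $\mathtt a$ which one checks has no wire to $A$ or $B$, is therefore an effect, and by causality equals the discard and is absorbed into $\mathtt f,\mathtt g$. You instead partition by reachability inside the diagram, with your leftover effect being $D^*=\mathsf A\cap\mathsf B$. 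These two leftover sets are not the same (the paper's $\mathtt a$ contains $D^*$ but also, for instance, boxes of $\mathsf R$ embedded outside the joint past of $\mathcal C(A),\mathcal C(B)$), though both arguments ultimately pivot on the same use of causality to erase the leftover.

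The paper's geometric partition is a bit cleaner for this particular statement: the four spacetime regions are disjoint by the Bell-type localisation, so no overlap has to be resolved, no cascading absorption of newly created effects is needed, and the wires between the four groups automatically run in the right directions. Your connectivity partition, on the other hand, is intrinsic to the diagram and uses the embedding only through the two forbidden-path facts; it would therefore transfer unchanged to any target partial order satisfying $\mathcal C(X)\not\leq\mathcal C(B)$ and $\mathcal C(Y)\not\leq\mathcal C(A)$, without needing a convenient geometric decomposition of the ambient space.
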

\begin{proof} 
``$\Rightarrow$'': We show below that if there exists an implementation that can be $\mathcal{C}$-locally embedded into Minkowski spacetime,
\begin{equation}
\tikzfig{Diagrams/FinMink},
\end{equation}
then
\begin{equation}\label{PRbelldecomp}
\tikzfig{Diagrams/belldecomp}
\end{equation}
is an implementation of $\mathtt{F}$ for some processes $\mathtt{f}, \mathtt{g}, \mathtt{s}$. 

Suppose that we have a $\mathcal{C}$-local embedded implementation $(i_\mathtt{F}, \mathcal{E})$. Consider the following partition of the spacetime
\begin{equation}
\tikzfig{Diagrams/correspondingregions}.
\end{equation}
For each of the regions in this diagram, we will compose all boxes that get mapped into this region into a single box. That is we compose all boxes $\mathtt{x}$ in the implementation $i_\mathtt{F}$ which satisfy $\mathcal{C}(X) \leq \mathcal{E}(\mathtt{x}) \leq \mathcal{C}(A)$ into a single box $\mathtt{f}$, all boxes $\mathtt{y}$ that satisfy $\mathcal{C}(Y) \leq \mathcal{E}(\mathtt{y}) \leq \mathcal{C}(B)$ into a single box $\mathtt{g}$, all other boxes $\mathtt{z}$ that satisfy $\mathcal{E}(\mathtt{z}) \leq \mathcal{C}(A)$ or $\mathcal{E}(\mathtt{z}) \leq \mathcal{C}(B)$ into a single box $\mathtt{s}$ and finally all remaining boxes into a single box $\mathtt{a}$. 

Note that this implies that $\mathtt{a}$ is not connected via one of its outputs to the outputs $A$ or $B$. If it were, then there would exist a box $\mathtt{b}$ which got composed into $\mathtt{a}$ in the original diagram connected in this way to w.l.o.g. $A$. But then $\mathcal{E}(\mathtt{b}) \leq \mathcal{E}(A) = \mathcal{C}(A)$ due to $\mathcal{E}$ being order-preserving. But then, $\mathtt{b}$ would have been boxed into either $\mathtt{f}$ (if additionally, $\mathcal{C}(X) \leq \mathcal{E}(\mathtt{b})$) or $\mathtt{s}$ (otherwise). Hence, it could not have been boxed into $\mathtt{a}$. Therefore, $\mathtt{a}$ is an effect.

The process theory $\mathbf{Proc}$ is terminal, so $\mathtt{a}$ is the discard and can be decomposed as
\begin{equation}
\tikzfig{Diagrams/effectfactorize} = \tikzfig{Diagrams/effectfactorize2}.
\end{equation}
We can now absorb the discards into the boxes $\mathtt{f}$ and $\mathtt{g}$.  We end up with a diagram of the form of \cref{disjointprocess}. Since we only used composition of processes and \cref{causalconditions} , this diagram must still be equal to $\mathtt{F}$, and is thus an implementation of it.

``$\Leftarrow$'': Trivial.
\end{proof}
Note that whether the PR box (or any other process) can be embedded in a $\mathcal{C}$-local way according to \cref{prconstraint} depends on the process theory under consideration. For example, in boxworld PR boxes can be decomposed as in \cref{PRbelldecomp} and can thus be embedded in the desired way \cite{barrett2007information}. The assumption of terminality is also critical as otherwise we could not have absorbed the effect $\mathtt{a}$ in the above proof. As alluded to before, allowing non-terminal effects is akin to allowing post-selection and it is known that in such settings PR boxes are achievable \cite{marcovitch2007quantum}. 

If the two spacetime regions in which we embed are completely disjoint (i.e., there exists no common past or future), we find an analogous equivalence which even holds for non-terminal theories.

\begin{restatable}{prop}{sepspace}\label{sepspace}
Let $\mathtt{F} \in \mathbf{Proc}$ be a process with two inputs $X, Y$ and two outputs $A, B$ in a process theory $\mathbf{Proc}$. Let $\mathcal{M}$ be the disjoint union of two spacetimes $\mathcal{M}_p$ and $\mathcal{M}_q$ such that $\mathcal{C}(X) \leq \mathcal{C}(A) \in \mathcal{M}_p$ and $\mathcal{C}(Y) \leq \mathcal{C}(B) \in \mathcal{M}_q$. There exists a $\mathcal{C}$-local embedded implementation of $\mathtt{F}$ into $\mathcal{M}$ iff 
\begin{equation}\label{separableimp}
\tikzfig{Diagrams/separableimp} \in \mathcal{I}_\mathtt{F}
\end{equation}
for some $\mathtt{f}, \mathtt{g} \in \mathbf{Proc}$.
\end{restatable}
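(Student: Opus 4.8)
The plan is to follow the template of the proof of \cref{primpossible}, with the complete causal disconnection of $\mathcal{M}$ now playing the role that causality of $\mathbf{Proc}$ played there. For the ``$\Leftarrow$'' direction I would argue directly: given the implementation \cref{separableimp}, applying the map $\mathcal{G}$ of \cref{G} identifies each of $\mathtt{f}$ and $\mathtt{g}$ with its output wire, so that $\mathcal{G}(i_\mathtt{F})$ is the framed partial order consisting of the two disjoint two-element chains $X \le A$ and $Y \le B$. Since by hypothesis $\mathcal{C}(X) \le \mathcal{C}(A)$ inside $\mathcal{M}_p$ and $\mathcal{C}(Y) \le \mathcal{C}(B)$ inside $\mathcal{M}_q$, sending each frame element to its prescribed location gives an order-preserving map $\mathcal{E}$, and $(i_\mathtt{F}, \mathcal{E})$ is then the desired $\mathcal{C}$-local embedded implementation.

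For ``$\Rightarrow$'', suppose $(i_\mathtt{F}, \mathcal{E})$ is a $\mathcal{C}$-local embedded implementation. The crucial structural fact is that in the disjoint union $\mathcal{M} = \mathcal{M}_p \sqcup \mathcal{M}_q$ no point of $\mathcal{M}_p$ is comparable to any point of $\mathcal{M}_q$; hence whenever $x \le y$ in $\mathcal{G}(i_\mathtt{F})$ the images $\mathcal{E}(x)$ and $\mathcal{E}(y)$ lie in the same half, and anything comparable (in either direction) to a point of $\mathcal{M}_p$ lies in $\mathcal{M}_p$. First I would partition the boxes of $i_\mathtt{F}$ into the ``$p$-boxes'', mapped into $\mathcal{M}_p$, and the ``$q$-boxes'', mapped into $\mathcal{M}_q$, and observe that no wire of $i_\mathtt{F}$ can join a $p$-box to a $q$-box, since such a wire would force a comparability in $\mathcal{G}(i_\mathtt{F})$ and hence in $\mathcal{M}$. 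Next, using $\mathcal{C}$-locality with $\mathcal{C}(X), \mathcal{C}(A) \in \mathcal{M}_p$ and $\mathcal{C}(Y), \mathcal{C}(B) \in \mathcal{M}_q$: everything above the minimal element $X$ or below the maximal element $A$ maps into $\mathcal{M}_p$, so the external wires $X$ and $A$ attach only to $p$-boxes, and symmetrically $Y$ and $B$ attach only to $q$-boxes; in particular no wire runs directly from $X$ to $B$ or from $Y$ to $A$.

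Putting these together, $i_\mathtt{F}$ is, as a diagram, the parallel composite of the sub-diagram $\mathtt{f}$ collecting all $p$-boxes, whose only external wires are $X$ in and $A$ out, with the sub-diagram $\mathtt{g}$ collecting all $q$-boxes, whose external wires are $Y$ in and $B$ out. Composing each block into a single process of $\mathbf{Proc}$ is legitimate since it uses only sequential and parallel composition and leaves the value of the diagram unchanged, so we obtain an implementation of the form \cref{separableimp} that equals $\mathtt{F}$. The main subtlety I anticipate is the bookkeeping in this last step: checking that $\mathtt{f}$ and $\mathtt{g}$ genuinely carry the asserted input/output systems --- closed sub-diagrams that happen to land in either half being harmlessly absorbed as scalar factors, and degenerate cases such as $X$ being wired straight through to $A$ being allowed for. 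Unlike in \cref{primpossible}, no appeal to causality of $\mathbf{Proc}$ is needed, because complete disconnection of the two regions leaves no residual effect that would have to be discarded.
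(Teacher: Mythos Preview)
Your proof is correct and follows essentially the same approach as the paper's: partition the boxes of $i_\mathtt{F}$ according to which component of $\mathcal{M}$ they land in, observe that no wire can cross between the two groups by order-preservation and disjointness, and compose each group into a single process. The paper phrases the partition slightly differently (via causal relations to the endpoints $\mathcal{C}(X),\mathcal{C}(A),\mathcal{C}(Y),\mathcal{C}(B)$ rather than simply by membership in $\mathcal{M}_p$ versus $\mathcal{M}_q$), but your formulation is if anything cleaner, since it is manifestly a partition. One inconsequential slip: in the ``$\Leftarrow$'' direction, the identification step in \cref{G} merges $\mathtt{f}$ with its \emph{input} frame element first rather than its output, but the resulting FPO is the same pair of two-element chains either way, so the argument is unaffected.
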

\begin{proof}
``$\Rightarrow$'': Let $(i_\mathtt{F}, \mathcal{E})$ be a $\mathcal{C}$-local embedded implementation. Compose all boxes $\mathtt{x}$ which satisfy $\mathcal{E}(\mathtt{x}) \leq \mathcal{C}(A)$ or $\mathcal{C}(X) \leq \mathcal{E}(\mathtt{x})$ into a single box $\mathtt{f}$. Then, compose all boxes $\mathtt{y}$ which satisfy $\mathcal{E}(\mathtt{y}) \leq \mathcal{C}(B)$ or $\mathcal{C}(Y) \leq \mathcal{E}(\mathtt{y})$ into a single box $\mathtt{g}$. Note that due to order-preservation of $\mathcal{E}$, the disjointness of $\mathcal{M}$ and our choice for $\mathcal{C}$ all boxes fall into exactly one of the these two cases. The resulting implementation is of the form of \cref{separableimp}. Any wires from $\mathtt{f}$ to $\mathtt{g}$ or vice versa would contradict order-preservation of $\mathcal{E}$ and disjointness of $\mathcal{M}$.

``$\Leftarrow$'': Trivial.
\end{proof}
From this it follows that the PR box has no such embedded implementation, as long as the process theory has a full classical subtheory \cite{gogioso2017categorical,selby2021reconstructing}. 
\begin{restatable}{coro}{primpossible2}\label{primpossible2}
There exists no embedded implementation $(i_\mathtt{PR}, \mathcal{E})$ of the PR box into a spacetime $\mathcal{M}$, which is the disjoint union of two spacetimes $\mathcal{M}_p$ and $\mathcal{M}_q$, and a localisation of endpoints such that $\mathcal{C}(X) \leq \mathcal{C}(A) \in \mathcal{M}_p$ and $\mathcal{C}(Y) \leq \mathcal{C}(B) \in \mathcal{M}_q$ in any process theory with a full classical subtheory.
\end{restatable}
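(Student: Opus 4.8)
The plan is to obtain \cref{primpossible2} directly from \cref{sepspace}. First observe that the PR box is a process in $\mathbf{Proc}$: it is the stochastic map $p_{\texttt{PR}}(a,b\mid x,y)=\tfrac12\delta_{a\oplus b,\,xy}$ between classical (bit) systems, and $\mathbf{Proc}$ contains a full classical subtheory. Suppose, for contradiction, that there were an embedded implementation $(i_{\mathtt{PR}},\mathcal{E})$ of the PR box into a disjoint union $\mathcal{M}=\mathcal{M}_p\sqcup\mathcal{M}_q$ with a localisation of endpoints satisfying $\mathcal{C}(X)\le\mathcal{C}(A)\in\mathcal{M}_p$ and $\mathcal{C}(Y)\le\mathcal{C}(B)\in\mathcal{M}_q$. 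Then \cref{sepspace}, applied with $\mathtt{F}=\mathtt{PR}$, yields processes $\mathtt{f}\colon X\to A$ and $\mathtt{g}\colon Y\to B$ in $\mathbf{Proc}$ with $\mathtt{PR}=\mathtt{f}\otimes\mathtt{g}$ (parallel composition), the implementation \cref{separableimp} having no wire connecting its two halves. Everything after this point is about ruling out such a parallel decomposition.

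Next I would extract the conditional probabilities. Since $X,Y,A,B$ are classical systems and the classical subtheory is full, the processes $\mathtt{f}$ and $\mathtt{g}$ are themselves stochastic maps; write $p_{\mathtt{f}}(a\mid x)$ and $p_{\mathtt{g}}(b\mid y)$ for their entries. Composing the identity $\mathtt{PR}=\mathtt{f}\otimes\mathtt{g}$ with classical point states on $X,Y$ and classical point effects on $A,B$, and using that the diagram $\mathtt{f}\otimes\mathtt{g}$ disconnects into the product of its two halves, yields
\[
p_{\texttt{PR}}(a,b\mid x,y)\;=\;p_{\mathtt{f}}(a\mid x)\,p_{\mathtt{g}}(b\mid y)\qquad\text{for all }a,b,x,y\in\{0,1\}.
\]
Specialising to $x=y=1$ (so $xy=1$) gives $p_{\mathtt{f}}(0\mid1)\,p_{\mathtt{g}}(1\mid1)=\tfrac12$ and $p_{\mathtt{f}}(1\mid1)\,p_{\mathtt{g}}(0\mid1)=\tfrac12$, so $p_{\mathtt{f}}(0\mid1)>0$ and $p_{\mathtt{g}}(0\mid1)>0$; but also $p_{\mathtt{f}}(0\mid1)\,p_{\mathtt{g}}(0\mid1)=p_{\texttt{PR}}(0,0\mid1,1)=0$, a contradiction. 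Hence no such embedded implementation exists.

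I expect the only delicate point to be the step that promotes $\mathtt{f}$ and $\mathtt{g}$ to bona fide stochastic maps — equivalently, that lets us conclude ``a product of two of the extracted scalars vanishes $\Rightarrow$ one of them vanishes'': this is exactly where fullness of the classical subtheory is used (classical probabilities form a zero-divisor-free rig, whereas a priori the scalars $p_{\mathtt{f}}(a\mid x)$ could be arbitrary scalars of $\mathbf{Proc}$). The remaining manipulations — invoking \cref{sepspace} and plugging in classical states and effects to turn the diagrammatic identity into an identity of probability tables — are routine, since parallel (unconnected) composition of processes factorises the resulting scalar. Note also that, unlike \cref{primpossible}, no causality assumption enters here, consistent with \cref{sepspace} holding for non-causal theories as well.
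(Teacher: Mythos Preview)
Your proof is correct and follows essentially the same route as the paper: invoke \cref{sepspace} to obtain a product decomposition $\mathtt{PR}=\mathtt{f}\otimes\mathtt{g}$, use fullness of the classical subtheory to conclude that $\mathtt{f}$ and $\mathtt{g}$ are stochastic maps, and then derive a contradiction from the impossibility of writing the PR box as a product of stochastic maps. The paper simply asserts this last impossibility as a known fact, whereas you spell it out explicitly via the $x=y=1$ inputs; both are fine.
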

\begin{proof}
Assume the statement is false. Then, by \cref{sepspace}, the PR box has an implementation $i_\mathtt{PR}$ of the form of \cref{separableimp}. However, as the PR box and thus $\mathtt{f}$ and  $\mathtt{g}$ have only classical inputs and outputs and the classical subtheory is, by assumption, full, $\mathtt{f}$ and  $\mathtt{g}$ must be stochastic maps. Thus, $i_\mathtt{PR}$ decomposes as a product of stochastic maps, but then $i_\mathtt{PR}$ cannot possibly be an implementation of the PR box since the PR box cannot be written as such a product. This is a contradiction, so the statement must be true.
\end{proof}

\subsection{Embeddability implies no-signalling}\label{sec:EmbedAndSignalling}
So far, we have argued that consistency with no-signalling is not sufficient to decide whether a process can be realised in spacetime. However, at least in causal theories, it is, as one might expect, necessary.
\begin{restatable}{prop}{nosignec}
Let $\textbf{Proc}$ be a terminal theory, $\mathtt{F} \in \textbf{Proc}$ such that it has a $\mathcal{C}$-local embedding. Let $S_I$ be a subset of inputs and $S_O$ be a subset of outputs of $\mathtt{F}$, then if $\mathcal{C}(A) \not \leq \mathcal{C}(B)$ for all $A \in S_I, B \in S_O$, $S_I$ does not signal to $S_O$ (that is the composite system $\bigotimes_{A \in S_I} A$ does not signal to the composite system $\bigotimes_{B \in S_O} B$).
\end{restatable}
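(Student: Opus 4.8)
The plan is to mimic the region-grouping strategy used in the proofs of \cref{primpossible} and \cref{sepspace}. Fix a $\mathcal{C}$-local embedded implementation $(i_\mathtt{F}, \mathcal{E})$ of $\mathtt{F}$, which exists by hypothesis. Partition the boxes of $i_\mathtt{F}$ into two groups: let $\mathtt{G}$ collect all boxes $\mathtt{z}$ with $\mathcal{E}(\mathtt{z}) \leq \mathcal{C}(B)$ for some $B \in S_O$ (those mapped into the causal past of a targeted output), and let $\mathtt{H}$ collect the rest. The first job is to check that each group can legitimately be fused into a single box. This follows because order-preservation of $\mathcal{E}$, together with \cref{G} (a future-directed path corresponds to the order relation), forbids any future-directed path from a box of $\mathtt{H}$ into a box of $\mathtt{G}$: such a path would force the $\mathtt{H}$-box below some $\mathcal{C}(B)$ with $B\in S_O$, hence into $\mathtt{G}$, a contradiction. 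Since no path from $\mathtt{H}$ re-enters $\mathtt{G}$, neither group admits a ``detour'' through the other, so both fuse, and the implementation can be rewritten with $\mathtt{G}$ entirely in the past of $\mathtt{H}$, with all crossing wires running $\mathtt{G}\to\mathtt{H}$.

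Next I would record three consequences of the hypothesis $\mathcal{C}(A)\not\leq\mathcal{C}(B)$ for all $A\in S_I$, $B\in S_O$, each a one-line order-preservation argument using $\mathcal{E}(A)=\mathcal{C}(A)$ and $\mathcal{E}(B)=\mathcal{C}(B)$: (i) no input in $S_I$ is joined by a future-directed path to any box of $\mathtt{G}$, nor directly (via an identity wire) to any output in $S_O$; (ii) every wire feeding an output in $S_O$ issues either from a box of $\mathtt{G}$ or directly from an input outside $S_I$; (iii) every output wire of the fused box $\mathtt{H}$ leads to a global output lying outside $S_O$ (an $\mathtt{H}$-output cannot feed a box of $\mathtt{G}$ by the previous paragraph, and by (ii) it cannot feed an output in $S_O$). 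So the $S_I$-inputs only ever feed $\mathtt{H}$ or outputs outside $S_O$, and the $S_O$-outputs are fed only by $\mathtt{G}$ and by inputs outside $S_I$.

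Finally I would discard all outputs not in $S_O$. By (iii) this discards every output of $\mathtt{H}$, so the causality axioms of \cref{causalconditions} (discarding all outputs of a process equals discarding its inputs) collapse $\mathtt{H}$ to the discard of its inputs; those inputs are outputs of $\mathtt{G}$ together with some global inputs, so in particular every $S_I$-input that fed $\mathtt{H}$ is now discarded, and by (i) the remaining $S_I$-inputs fed only already-discarded outputs — hence all of $S_I$ is discarded. What survives is the box $\mathtt{G}$ (with the outputs that fed $\mathtt{H}$ or fed outputs outside $S_O$ now discarded) composed with the surviving identity wires: a single process $\mathtt{A}$ from the inputs outside $S_I$ to $S_O$, which by (i)--(ii) nowhere touches $S_I$. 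This is precisely the factorisation of \cref{sigcond} (in its multi-system form) witnessing that $S_I$ does not signal to $S_O$.

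The main obstacle I anticipate is bookkeeping rather than conceptual: one must treat the various ``bare'' wires — input-to-box, box-to-output, and identity input-to-output wires — uniformly, and must be sure that fusing $\mathtt{G}$ and $\mathtt{H}$ does not silently introduce a cycle. The latter is exactly what the ``no future-directed path from $\mathtt{H}$ to $\mathtt{G}$'' claim rules out, which is why I would establish and state that claim cleanly before anything else. A minor point to spell out is that the argument uses only \cref{causal}: the uniqueness and compositionality of discards, and nothing further about the theory.
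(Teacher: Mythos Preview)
Your proof is correct and follows essentially the same strategy as the paper's: group the boxes of a $\mathcal{C}$-local embedded implementation into two blocks, show that the hypothesis forces all cross-wires to run in one direction only, fuse each block into a single process, and then discard the outputs outside $S_O$ so that causality collapses the ``upper'' block. The only difference is the choice of partition --- you take $\mathtt{G}$ to be the boxes in the causal past of $S_O$ and $\mathtt{H}$ the rest, whereas the paper takes $\mathtt{i}$ to be the boxes in the causal future of $S_I$ and $\mathtt{o}$ the rest --- but these are dual choices leading to the same two-block diagram and the same discard step.
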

\begin{proof}
Let $\mathtt{F}$ be a process and $(i_\mathtt{F}, \mathcal{E})$ a $\mathcal{C}$-local embedding. Let $A \in S_I, B \in S_O$. The assumption that $\mathcal{C}(A) \not \leq \mathcal{C}(B)$ together with the fact that $\mathcal{E}$ is an order-preserving map implies that $A \not \leq B$ as elements of $\mathcal{G}(i_\mathtt{F})$. Going back to the diagram, by the definition of $\mathcal{G}$, there are therefore no future-directed paths from system $A$ to the system $B$ (even indirectly via other processes) and since this holds for all $A, B$, there are not even wires going from the subset of systems $S_I$ to the subset of systems $S_O$. This then implies that by rewriting $i_\mathtt{F}$, we obtain a new implementation of $\mathtt{F}$ 
\begin{equation}\label{eq:nosignec}
\tikzfig{Diagrams/nosignec},
\end{equation}
up to a reordering of the inputs and outputs. To see this, consider the fact that since $S_I$ and $S_O$ are not connected, any box in the implementation $i_\mathtt{F}$ can only be connected to $S_I$ or $S_O$ (or possibly neither). We can now compose all boxes connected to $S_I$ into a single process $\mathtt{i}$ and all other processes into another process $\mathtt{o}$. Then $\mathtt{i}$ and $\mathtt{o}$ can only be connected via an output from $\mathtt{o}$ to an input to $\mathtt{i}$ (otherwise $\mathtt{i}$ would be connected to both $S_I$ and $S_O$) and as such we obtain the form in \cref{eq:nosignec}. Such a process does not signal from $S_I$ to $S_O$, i.e., it satisfies \cref{sigcond}, which can be straightforwardly shown by discarding the outputs $O \backslash S_O$,
\begin{equation}
\tikzfig{Diagrams/nosignec1} \quad = \quad \tikzfig{Diagrams/nosignec2}
\end{equation}
\end{proof}
This can be seen as a further formalisation of the results of \cite{kissinger2017equivalence} on the relationship between process terminality and the lightcone structure of spacetime. This result also implies that any process in a terminal process theory that is $\mathcal{C}$-locally embeddable in our framework, is also embeddable with the same localisation of the inputs and outputs in signalling-based frameworks like Ref.~\cite{vilasini2022embedding}.

\subsection{Undecidability of embeddings}
 
 The general problem of deciding whether or not a given process $\mathtt{F}\in \mathbf{Proc}$ is $\mathcal{C}$-locally embeddable turns out to be computationally undecidable. To see this, we can consider the case where $\mathbf{Proc}$ is quantum theory,  $\mathtt{F}$ is a bipartite nonsignalling classical channel, and we consider $\mathcal{C}$-localisation into Minkowski spacetime as described in \cref{bellconstraint}. \Cref{primpossible} above tells us that such an embedding exists iff there is a decomposition of $\mathtt{F}$ as 
\beq
\tikzfig{Diagrams/belldecomp}.
\eeq 
In other words, iff this classical correlation admits of a quantum realisation in a Bell scenario. This is a problem which is known to be undecidable \cite{fu2021membership}. 
 
That the general problem is undecidable, however, does not mean that all particular instances are. Indeed, if we consider the same problem as above but where $\mathbf{Proc}$ is taken to be classical theory then testing this is a simple linear program \cite{kaszlikowski2000violations,zukowski1999strengthening}. It is therefore worthwhile to try to simplify the problem as much as possible, which is exactly what we do in the following sections.

\section{What is needed to decide embeddability?}\label{sec:ReducedDescription}
When we ask whether a process $\mathtt{F}$ is embeddable with $\mathcal{C}$-local inputs and outputs, we can think of this as the question of whether any of its implementations $i_\mathtt{F} \in \mathcal{I}_\mathtt{F}$ yield a valid $\mathcal{C}$-local embedding. It is not necessarily always the same implementation that gives us a valid $\mathcal{C}$-local embedding. For example, consider our implementations of the Bell process from \cref{impset}
\begin{equation}\label{simpleimps}
s_\mathtt{Bell} \quad = \quad \tikzfig{Diagrams/sequential1}, \quad s'_\mathtt{Bell} \quad = \quad \tikzfig{Diagrams/sequential2}, \quad p_\mathtt{Bell} \quad = \quad \tikzfig{Diagrams/belldecomp}.
\end{equation}
The implementation $s_\mathtt{Bell}$, requires Alice to communicate to Bob while $s'_\mathtt{Bell}$ requires the opposite. Thus, in order to find a valid embedding we would need Alice in the past of Bob (or formally, $\mathcal{C}(X) < \mathcal{C}(B)$) in the case of $s_\mathtt{Bell}$ and Bob in the past of Alice ($\mathcal{C}(Y) < \mathcal{C}(A)$) in the case of $s'_\mathtt{Bell}$. Thus, we cannot ignore either implementation as sometimes one condition may be fulfilled and sometimes the other one. When deciding embeddability we would thus always have to check both implementations. On the other hand, when it comes to the implementation $p_\mathtt{Bell}$, Alice and Bob can be spacelike separated. The constraints for this implementation to yield a valid embedding thus seem to be strictly weaker. Indeed, it suffices to check whether $p_\mathtt{Bell}$ can be embedded in a $\mathcal{C}$-local way. Hence, if $p_\mathtt{Bell}$ is an implementation, $s_\mathtt{Bell}$ and $s'_\mathtt{Bell}$ can safely be ignored.

The question is then, given some set of implementations $\mathcal{I}_\mathtt{F}$ what is the minimal subset that still always allows us to decide whether $\mathtt{F}$ can be $\mathcal{C}$-locally embedded in spacetime or not? Is there some condition to find it? Since embeddability depends on the (framed) partial order associated to an implementation, it makes sense to look at framed partial orders directly to answer this question. 

\begin{restatable}{defi}{def:MINO}\label{def:MINO}
For all $M, N \in \mathbb{N}$, we define the set of framed partial orders of class \MINO  
\beq\mathcal{S}_{\MINO} := \{(S, \mathcal{I}, \mathcal{O}) \text{ framed partial order}||\mathcal{I}| = M, |\mathcal{O}| = N\}.
\eeq
\end{restatable}
Additionally, we will also need an abstract equivalent of spacetime embeddings.

\begin{restatable}{defi}{FOP}\label{FOP}
Let $(S_1, \mathcal{I}_1, \mathcal{O}_1), (S_2, \mathcal{I}_2, \mathcal{O}_2) \in \mathcal{S}_{\MINO}$ be framed partial orders. We call $\mathcal{E}: S_1 \rightarrow S_2$ a frame- and order preserving (FOP) map if it is order-preserving, i.e., $x \leq y \Rightarrow \mathcal{E}(x) \leq \mathcal{E}(y)$, and maps the $n$-th element of $\mathcal{I}_1$ ($\mathcal{O}_1$) to the $n$-th element of $\mathcal{I}_2$ ($\mathcal{O}_2$) for $1 \leq n \leq M$ ($1 \leq n \leq N$).
\end{restatable}
The notion of a FOP map, then immediately gives us a very useful preorder.

\begin{restatable}{defi}{def:succ}\label{def:succ}
We say that $S_1$ is embeddable in $S_2$ or that $S_1$ is more useful than $S_2$, writing $S_1 \succ S_2$, if there exists a FOP map $\mathcal{E}: S_1 \rightarrow S_2$.
\end{restatable} 
This is indeed a preorder as compositions of FOP maps are again FOP, ensuring transitivity, and the identity is a valid FOP map, ensuring reflexivity.

We illustrate these two definitions with an example:
\begin{equation}
\tikzfig{Diagrams/fopexample}.
\end{equation}
Here, the arrows indicate the FOP map. The red arrows indicate the mapping of frame elements which are fixed by the requirement that the map is frame-preserving (note that we also cannot map $A$ to $\delta$ as the ordering of the frame elements on the page matters). The black arrow indicates the mapping of the internal element $\mathtt{s}$ which in this case must be mapped to $\alpha$ to preserve the order (in the general case, we may have more freedom as we will see later on).

Note that if we have two implementations of the same process, $i_\mathtt{F}, j_\mathtt{F}\in\mathcal{I}_\mathtt{F}$, then their associated FPO types, $\mathcal{G}(i_\mathtt{F}), \mathcal{G}(j_\mathtt{F})$, belong to the same \MINO class. That is for any $\mathtt{F}$ there exists $M$ and $N$ such that $\mathcal{G}(\mathcal{I}_\mathtt{F})\subseteq \mathcal{S}_{\MINO}$.
Interestingly, this puts implementations and spacetimes on the same level, in a sense. We can embed implementations into spacetime but we can also embed one implementation into another (more accurately into the associated FPO). We can then treat the spacetime as an FPO by considering the image of the constraint $\mathcal{C}$ as the frame. This means that our two notions of embeddability \cref{clocal,def:succ} are equivalent in some sense.

The formalisation of this idea leads to a hierarchy where FPOs are more useful for deciding embeddability the higher up they are in terms of $\succ$.
\begin{restatable}{thm}{embedimplication}\label{embedimplication}
Given two implementations $i_\mathtt{F}, j_\mathtt{F}\in\mathcal{I}_\mathtt{F}$, $\mathcal{G}(i_\mathtt{F}) \succ \mathcal{G}(j_\mathtt{F})$ iff $i_\mathtt{F}$ is $\mathcal{C}$-locally embeddable whenever $j_\mathtt{F}$ is $\mathcal{C}$-locally embeddable.
\end{restatable}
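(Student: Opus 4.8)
The plan is to prove the two implications separately; both are short once the frames of $\mathcal{G}(i_\mathtt{F})$ and $\mathcal{G}(j_\mathtt{F})$ are correctly identified. \textbf{(``$\Rightarrow$'')} Assume $\mathcal{G}(i_\mathtt{F}) \succ \mathcal{G}(j_\mathtt{F})$, witnessed by a FOP map $\varphi\colon \mathcal{G}(i_\mathtt{F}) \to \mathcal{G}(j_\mathtt{F})$, and let $(j_\mathtt{F}, \mathcal{E}_j)$ be a $\mathcal{C}$-local embedding of $j_\mathtt{F}$ into a spacetime $\mathcal{M}$. I would check that $(i_\mathtt{F}, \mathcal{E}_j \circ \varphi)$ is a $\mathcal{C}$-local embedding of $i_\mathtt{F}$: the composite of order-preserving maps is order-preserving, and since $i_\mathtt{F}$ and $j_\mathtt{F}$ implement the same $\mathtt{F}$, unwinding \cref{G} shows that the $n$-th input (output) of $\mathcal{G}(i_\mathtt{F})$ and of $\mathcal{G}(j_\mathtt{F})$ are literally the $n$-th input (output) of $\mathtt{F}$, so $\varphi$ fixes every frame element and $(\mathcal{E}_j\circ\varphi)(A) = \mathcal{E}_j(A) = \mathcal{C}(A)$ for each input/output $A$.

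\textbf{(``$\Leftarrow$'')} The key move is that a framed partial order is a finite partial order, hence a legitimate spacetime in the sense of the paper. So I would instantiate the hypothesis with $\mathcal{M} := \mathcal{G}(j_\mathtt{F})$ (forgetting the frame) and $\mathcal{C}$ the map sending each input/output $A$ of $\mathtt{F}$ to the corresponding frame element of $\mathcal{G}(j_\mathtt{F})$. Then $j_\mathtt{F}$ is $\mathcal{C}$-locally embeddable via the identity on $\mathcal{G}(j_\mathtt{F})$ (order-preserving, fixes the frame); by hypothesis $i_\mathtt{F}$ is too, via some order-preserving $\mathcal{E}\colon\mathcal{G}(i_\mathtt{F})\to\mathcal{G}(j_\mathtt{F})$ with $\mathcal{E}(A)=\mathcal{C}(A)=A$ on every frame element. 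Using once more that the two frames are indexed identically by the inputs/outputs of $\mathtt{F}$, such a map is exactly a FOP map in the sense of \cref{FOP}, so $\mathcal{G}(i_\mathtt{F})\succ\mathcal{G}(j_\mathtt{F})$.

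The mathematical content is light, so the one thing I would be careful about is the bookkeeping that identifies ``FOP map $\mathcal{G}(i_\mathtt{F})\to\mathcal{G}(j_\mathtt{F})$'' with ``order-preserving map $\mathcal{G}(i_\mathtt{F})\to\mathcal{G}(j_\mathtt{F})$ fixing the inputs and outputs of $\mathtt{F}$''. This relies on $\mathcal{G}$ never altering the frame — its two identification steps only merge internal elements into frame elements — and on the convention that the $n$-th frame element of any implementation of $\mathtt{F}$ is named by the $n$-th input/output of $\mathtt{F}$; I would state this identification cleanly once up front, after which both directions are one-line verifications. The only other point deserving an explicit sentence is that using $\mathcal{G}(j_\mathtt{F})$ as the target spacetime $\mathcal{M}$ is legitimate, since \cref{clocal} only requires $\mathcal{M}$ to be a partial order — matching the paper's own remark that a spacetime can be treated as an FPO.
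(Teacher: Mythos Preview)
Your proposal is correct and follows essentially the same approach as the paper's own proof: compose the FOP map with the given embedding for the forward direction, and instantiate the hypothesis with $\mathcal{M}=\mathcal{G}(j_\mathtt{F})$ and the identity embedding for the converse. Your extra care about the frame bookkeeping is exactly the abuse-of-notation convention the paper relies on implicitly.
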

\begin{proof}
``$\Rightarrow$'': By assumption, there exists a FOP map $\mathcal{E}_1: \mathcal{G}(i_\mathtt{F}) \rightarrow \mathcal{G}(j_\mathtt{F})$. Let now $\mathcal{E}_2: \mathcal{G}(j_\mathtt{F}) \rightarrow \mathcal{M}$ be a $\mathcal{C}$-local embedding. As compositions of order-preserving maps are again order-preserving maps, we have that $\mathcal{E} := \mathcal{E}_2 \circ \mathcal{E}_1$ is an order-preserving map from $\mathcal{G}(i_\mathtt{F})$ to $\mathcal{M}$. Since $\mathcal{E}_1$ preserves the frame and $\mathcal{E}_2$ is $\mathcal{C}$-local, we have for any input or output $A$ 
\begin{equation}
\mathcal{C}(A) = \mathcal{E}_2(A) = \mathcal{E}_2 \circ \mathcal{E}_1(A) = \mathcal{E}(A)
\end{equation}
and hence $\mathcal{E}$ is also $\mathcal{C}$-local.

``$\Leftarrow$'': The implementation $j_\mathtt{F}$ is trivially embeddable into the ``spacetime'' $\mathcal{M}:=\mathcal{G}(j_\mathtt{F})$ with the localisation $\mathcal{C}(A) = A$ for all inputs and outputs $A$. By assumption, we can then also $\mathcal{C}$-locally embed $i_\mathtt{F}$ into $\mathcal{G}(j_\mathtt{F})$, i.e., there exists an order-preserving map $\mathcal{E}: \mathcal{G}(i_\mathtt{F}) \rightarrow \mathcal{G}(j_\mathtt{F})$ which due to $\mathcal{C}$-locality satisfies $\mathcal{E}(A) = A$ for all inputs and outputs $A$. Thus, $\mathcal{E}$ is FOP and by definition, we have $\mathcal{G}(i_\mathtt{F}) \succ \mathcal{G}(j_\mathtt{F})$.
\end{proof}
Note that we can now view \cref{primpossible,sepspace} as corollaries of the above theorem. Additionally, the preorder $\succ$ gives us a lot of information about the contents of the implementation set. 
If $S_1 \succ S_2$ and $S_1$ corresponds to an implementation of $\mathtt{F}$, then so does $S_2$ up to relabelling of the FPO. We formalise the notion of relabelling in the definition below.

\begin{restatable}{defi}{relabel}\label{relabel}
A frame-preserving order embedding (FOE) $\mathcal{E}: S_1 \rightarrow S_2$ is a FOP map that is also order-reflecting, i.e., $x \leq y \Leftarrow \mathcal{E}(x) \leq \mathcal{E}(y)$. A relabelling or frame-preserving order isomorphism is a FOE map which is surjective. 
\end{restatable}
Note that an order-reflecting map is necessarily injective. An equivalent definition for a relabelling is to say that it is an isomorphism which is FOP  with a FOP inverse. An example of an FOE is
\begin{equation}
\tikzfig{Diagrams/foeexample}
\end{equation}

The set of FPOs associated to implementations $\mathcal{G}(\mathcal{I}_{\mathtt{F}})$ would thus form a downwards closed set under $\succ$ if it were not for the ambiguity of the labels. The labels of the FPO are ultimately arbitrary and irrelevant. Hence, we could instead think about ``unlabelled'' FPOs which we can formalise as the equivalence classes under relabelling.

\begin{restatable}{defi}{FPOtype}\label{FPOtype}
Two FPOs $(S_1, \mathcal{I}_1, \mathcal{O}_1), (S_2, \mathcal{I}_2, \mathcal{O}_2) \in \mathcal{S}_{\MINO}$ belong to the same framed partial order type, writing $S_1 \cong S_2$, if there exists a relabelling $\mathcal{E}: S_1 \rightarrow S_2$. 
We define the set of framed partial order types of class \MINO, $\mathcal{T}_{\MINO}$ as the quotient set of $\mathcal{S}_{\MINO}$ under the equivalence relation of relabelling, i.e., $\mathcal{T}_{\MINO} := \mathcal{S}_{\MINO}/\cong$. For $S \in T \in \mathcal{T}_{\MINO}$, we call $S$ a labelling of $T$.
\end{restatable}
Graphically, we can depict this as follows
\begin{equation}
\tikzfig{Diagrams/bellminrep2} = \left\{\tikzfig{Diagrams/fpotypeexample},\ \ \tikzfig{Diagrams/fpotypeexample2},\ \ \tikzfig{Diagrams/fpotypeexample3}, \cdots \right\}.
\end{equation}

Since the difference between different FPOs belonging to the same FPO type is essentially just the labels of elements, many properties define invariants. That is if for some FPO type $T$, all $S \in T$ have some property, it is well-defined to say that the FPO type itself has this property. For example, all framed partial orders of the same FPO type have the same order, i.e., number of elements, due to relabellings being isomorphisms, meaning it is also well-defined to speak about the order of an FPO type. Importantly, it is also well-defined to say an FPO type $T_1$ is embeddable into an FPO type $T_2$, $T_1 \succ T_2$. This simply means that for any or equivalently for all $S_i \in T_i$, it holds that $S_1 \succ S_2$. 

We will also always refer to an FPO $S \in T$ as a labelling, never as an element, to avoid confusion with the elements of an FPO. We will still sometimes speak of elements of the FPO type, by which we mean the elements of the labellings (e.g., ``all internal elements of the FPO type $T$ are minimal elements of the FPO type $T$'' should be read as ``for all $S \in T$, the internal elements of $S$ are minimal elements of $S$).

The set of FPO types associated to implementations of a process $\mathcal{G}(\mathcal{I}_\mathtt{F}) / \cong$ is then actually a downwards closed set under $\succ$. In fact, we can also prove a converse of this statement, yielding an equivalence between $T_1 \succ T_2$ and the possibility of converting diagrams associated with $T_1$ into a diagram associated with $T_2$. This is formalised and shown in the theorem below.

\begin{restatable}{thm}{conversion}\label{conversion}
Let $T_1, T_2 \in \mathcal{T}_{MN}$. The following two statements are equivalent.
\begin{enumerate}
\item $T_1 \succ T_2$
\item For all processes $\mathtt{F}$ in all process theories, if $\mathtt{F}$ has an implementation $i_\mathtt{F}$ such that $\mathcal{G}(i_\mathtt{F}) \in T_1$, then $\mathtt{F}$ also has an implementation $j_\mathtt{F}$ such that $\mathcal{G}(j_\mathtt{F}) \in T_2$
\end{enumerate}
\end{restatable}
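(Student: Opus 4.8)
The plan is to prove the two implications separately: the direction $(1)\Rightarrow(2)$ by a diagrammatic ``coarsening'' construction that works inside an arbitrary theory, and the direction $(2)\Rightarrow(1)$ by a ``universal example'' argument that tests statement (2) against a suitably chosen freely generated process theory.

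For $(1)\Rightarrow(2)$: suppose $T_1\succ T_2$, witnessed by a FOP map $\mathcal{E}:S_1\to S_2$ between chosen labellings $S_i\in T_i$, and let $\mathtt{F}$ have an implementation $i_\mathtt{F}$ with $\mathcal{G}(i_\mathtt{F})\cong S_1$; after relabelling I may assume $\mathcal{G}(i_\mathtt{F})=S_1$. I would build $j_\mathtt{F}$ from $i_\mathtt{F}$ by \emph{merging}, for each element $t$ of $S_2$, all boxes of $i_\mathtt{F}$ whose associated element lies in the fibre $\mathcal{E}^{-1}(t)$ into a single box --- exactly the move used in the proofs of \cref{primpossible,sepspace}. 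The key point making this well defined as a diagram is that each fibre is \emph{convex} in the causal order of $i_\mathtt{F}$: if $x,y\in\mathcal{E}^{-1}(t)$ and $x\le w\le y$, then $t=\mathcal{E}(x)\le\mathcal{E}(w)\le\mathcal{E}(y)=t$, so $\mathcal{E}(w)=t$; hence no wire leaves a fibre and later re-enters it, and the merged block is an honest sub-diagram with a well-defined input/output interface. Since $j_\mathtt{F}$ is obtained from $i_\mathtt{F}$ purely by sequential and parallel composition of processes, it still equals $\mathtt{F}$, so $j_\mathtt{F}\in\mathcal{I}_\mathtt{F}$. It then remains to check that $\mathcal{G}(j_\mathtt{F})\cong S_2$.

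For $(2)\Rightarrow(1)$: the natural strategy is to exhibit, for the given $T_1$, a single process $\mathtt{F}$ in a single theory whose implementation types form \emph{exactly} the down-set $\{T:T_1\succ T\}$; then (2), applied to this $\mathtt{F}$ with the chosen $T_1$, forces $T_2$ into that down-set, which is precisely $T_1\succ T_2$. I would take $\mathtt{F}$ to be the generic diagram of shape $S_1$, for $S_1$ a $\mathcal{G}$-reduced labelling of $T_1$ (the case of interest --- if $T_1$ admits no $\mathcal{G}$-reduced labelling the hypothesis of (2) is never met), living in the process theory freely generated by one box $g_x$ per internal element $x$ of $S_1$, the arity of $g_x$ being read off from the covering edges at $x$ and all wire types chosen to be distinct fresh generators. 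The fresh-types assumption buys rigidity: by the coherence theorem for free symmetric monoidal categories, any diagram equal to $\mathtt{F}$ is just $\mathtt{F}$ with some of its generator-boxes grouped together in the only way the types allow, so its framed partial order is a fibrewise (hence FOP) quotient of $S_1$; this gives $\mathcal{G}(\mathcal{I}_\mathtt{F})/{\cong}\subseteq\{T:T_1\succ T\}$, while applying the $(1)\Rightarrow(2)$ construction inside this very theory gives the reverse inclusion. Feeding the resulting $T_2$-implementation back into the definition of $\succ$ yields $T_1\succ T_2$.

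The two places I expect friction are both bookkeeping rather than conceptual. In $(1)\Rightarrow(2)$, one must verify that the coarsened diagram's framed partial order is genuinely $T_2$ and not merely something lying above $T_2$ in $\succ$: this requires pinning down the interaction between $\mathcal{E}$, the fibre partition and the wire-level connectivity, correctly handling the two identification steps in the definition of $\mathcal{G}$ (blocks sitting directly above an input or below an output), and, when $\mathcal{E}$ is not surjective or $S_2$ has causally isolated components, padding out the missing elements with trivial processes --- possibly under a mild standing convention that process theories contain the requisite trivial processes. In $(2)\Rightarrow(1)$, the delicate step is making the rigidity claim for the free theory precise enough to exclude any unexpected implementations of $\mathtt{F}$. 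Neither step needs ideas beyond those already deployed for \cref{primpossible,sepspace,embedimplication}; the content is in the combinatorics.
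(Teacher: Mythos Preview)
Your overall strategy --- merge fibres for $(1)\Rightarrow(2)$, test against a free theory for $(2)\Rightarrow(1)$ --- is exactly the paper's, but both halves have a gap that is more than bookkeeping.

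For $(1)\Rightarrow(2)$: merging along $\mathcal{E}$ and padding empty fibres with trivial scalars only guarantees that the resulting FPO maps into $S_2$ in a FOP way, not that it equals $S_2$. The merged diagram inherits an order relation between two blocks only when there was an actual wire in $i_\mathtt{F}$ between their preimages; if $x<y$ in $S_2$ but every element of $\mathcal{E}^{-1}(x)$ is unrelated in $S_1$ to every element of $\mathcal{E}^{-1}(y)$ (which order-preservation permits), that relation is simply missing. Your ``padding with trivial processes'' handles empty fibres but not this. The paper's fix is an extra step: for every pair $x<y$ in $S_2$ whose blocks are not yet connected, insert a wire carrying the \emph{trivial system} between them. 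This adds the required relation without altering the process, and is what actually forces $\mathcal{G}(j_\mathtt{F})\cong S_2$.

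For $(2)\Rightarrow(1)$: your rigidity claim --- that any diagram equal to $\mathtt{F}$ in the free theory has FPO a fibrewise quotient of $S_1$ --- is false as stated. Even with all wire types distinct, the basic process-theory axioms let one insert identities on wires, add trivial scalars, and add trivial-system wires, any of which enlarges the FPO beyond any quotient of $S_1$. The distinct-types trick buys nothing here. The paper instead lists the four rewrite moves available in a free theory (compose constituents, add trivial system, add trivial scalar, add identity), and checks case by case that each move individually gives a FOP map from the old FPO into the new one; transitivity over a finite rewrite sequence then yields $T_1\succ T_2$. This step-by-step analysis is the actual content of the direction you flagged as delicate, and your coherence sketch does not supply it.
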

The proof of this statement can be found in \cref{conversionappendix}.

Finally, let us briefly remark on the term ``usefulness'' as a synonym for ``embeddability'' in \cref{def:succ} in the light of the two main results in this section \cref{embedimplication,conversion}. We will extend this terminology to implementations as well, i.e., we will say $i_\mathtt{F}$ is more useful than $j_\mathtt{F}$ if $\mathcal{G}(i_\mathtt{F}) \succ \mathcal{G}(i_\mathtt{F})$. Given a process $\mathtt{F}$, implementations of $\mathtt{F}$ which are more useful then deserve this label because a) they are better at helping us decide whether the process is $\mathcal{C}$-local embeddable into a given spacetime due to \cref{embedimplication} and b) they tell us more about the decompositional structure of the process due to \cref{conversion}.

\subsection{Equivalence classes of implementations}\label{sec:minreps}
In the previous section, we have defined a preorder on framed partial orders and shown that implementations higher up in this preorder are more useful for deciding embeddability. This already quite significantly reduces the information necessary from $\mathcal{I}_{\mathtt{F}}$. We can go a step further by considering equivalence classes under $\succ$.
\begin{restatable}{defi}{eqclasses}\label{eqclasses}
We say that two FPOs $S_1, S_2$ are equivalent $S_1 \sim S_2$, if $S_1 \succ S_2$ and $S_2 \succ S_1$. We say that two FPO types $T_1, T_2$ are equivalent, $T_1 \sim T_2$, if $T_1 \succ T_2$ and $T_2 \succ T_1$. For two equivalence classes of FPO types $\mathcal{T}_1, \mathcal{T}_2$ we say that $\mathcal{T}_1 \succ \mathcal{T}_2$ if for any $T_1 \in \mathcal{T}_1, T_2 \in \mathcal{T}_2$ it holds that $T_1 \succ T_2$. On the set of equivalence classes, $\succ$ is a partial order.
\end{restatable}
Due to \cref{embedimplication}, it suffices to check a single element of an equivalence class when deciding embeddability. Additionally, due to \cref{conversion} if an FPO type corresponds to an implementation, then all FPO types in the same equivalence class correspond to an implementation too. Thus, it suffices to talk about equivalence classes instead of individual FPO types.

Note that this actually simplifies our problem considerably as every equivalence class has an infinite number of FPO types. To see this, let $S_0$ be an arbitrary FPO, e.g.
\begin{equation}
\tikzfig{Diagrams/bellminrep}.
\end{equation}
Now define $S_{n+1}$ recursively by adding a disjoint internal element $x_{n+1}$ to $S_n$
\begin{equation}
\tikzfig{Diagrams/Sn} \quad \rightarrow \quad \tikzfig{Diagrams/Sn1}
\end{equation}
These FPOs are all equivalent ($S_0 \succ S_n$ because $S_0 \subseteq S_n$ and $S_n \succ S_0$ because the elements $x_n$ can be mapped to arbitrary elements in $S_0$ without violating order relations). Thus, by \cref{conversion} if $S_0$ corresponds (up to relabelling) to an implementation of $\mathtt{F}$, then so do all $S_n$. 

To take this even further, we would also like canonical representatives with nice properties to deal with these equivalence classes in an elegant manner. Going back to our previous example of the FPO $S_n$, if these made up the entire equivalence class, the obvious choice would be to pick $S_0$ (or its FPO type) as a representative since it is the smallest FPO and also a strict subset of all the other FPOs. Or in other words, all the other FPOs are simply $S_0$ with redundant information (the elements $x_n$).

It turns out that this idea can be generalized.
\begin{restatable}{thm}{minrep}\label{minrep}
Let $\mathcal{T}$ be an equivalence class of FPO types under $\sim$. Then, there exists a unique FPO type $R \in \mathcal{T}$ which we call the minimal representative such that for all $S \in R$, all $T \in \mathcal{T}$ and all $S' \in T$, any FOP map $\mathcal{E}: S \rightarrow S'$ is an order-embedding. In particular, this means that $|R| \leq |T|$, where $|T|$ is the order, i.e., the number of elements, of $T$,  with equality iff $R = T$.
\end{restatable}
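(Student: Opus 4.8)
The plan is to let $R$ be the type of any framed partial order in $\mathcal{T}$ with the fewest elements — such a minimum exists since $\mathcal{T}$ is nonempty and cardinalities are natural numbers — and then show that this minimal‑order type has exactly the stated property and is the unique type with that property. Note that all types in $\mathcal{T}$ share the same frame sizes $M,N$, since FOP maps are bijective on the frame, so sizes are genuinely comparable.

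The heart of the argument is establishing the property for a minimal‑order labelling $S \in R$. Take any $S' \in T$ with $T \in \mathcal{T}$ and any FOP map $\mathcal{E}\colon S \to S'$. Since $S \sim S'$ there is also a FOP map $\mathcal{F}\colon S' \to S$, so $g := \mathcal{F}\circ\mathcal{E}\colon S \to S$ is FOP. I would first show $g$ is injective: if not, its image $S'' := g(S)$, with the induced order and the (unchanged, since $g$ fixes the frame) frame of $S$, is again a framed partial order — inputs stay minimal and outputs stay maximal in a sub‑poset — with $|S''| < |S|$; the inclusion $S'' \hookrightarrow S$ is FOP (indeed an order‑embedding) and the corestriction $S \to S''$ of $g$ is FOP, giving $S \sim S''$, so $S''$'s type lies in $\mathcal{T}$, contradicting minimality of $|S|$. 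Hence $g$ is an injective self‑map of a finite poset, so it is bijective, and a bijective order‑preserving endomorphism of a finite poset is an order‑automorphism (a suitable power is the identity, so $g^{-1}$ is a composite of order‑preserving maps). In particular $g$ is order‑reflecting; since $\mathcal{F}$ is order‑preserving, $\mathcal{E}$ then inherits injectivity and order‑reflection from $g$, i.e. $\mathcal{E}$ is an order‑embedding.

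From this the remaining clauses follow quickly. Applying the property to any $T \in \mathcal{T}$ and a FOP map $S \to S'$ (which exists as $R \succ T$) forces it to be injective, so $|R| = |S| \le |S'| = |T|$; and if $|R| = |T|$ that order‑embedding is a bijection, hence a relabelling (its inverse is order‑preserving because $\mathcal{E}$ is order‑reflecting), so $R = T$ as types. For uniqueness, any type $R'$ satisfying the stated property must itself have minimal order — otherwise a $T$ with $|T| < |R'|$ together with a FOP map out of a labelling of $R'$ contradicts injectivity — and then two minimal‑order types coincide by the equality clause just proved; so $R$ is unique.

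The main obstacle is the injectivity of $g$: the whole theorem hinges on the fact that collapsing any element would yield a strictly smaller but still $\sim$‑equivalent framed partial order, so the care needed is in checking that the image $S''$ really is a legitimate FPO (minimality/maximality of the frame preserved, disjointness inherited) and that both the inclusion into $S$ and the corestriction of $g$ onto $S''$ are genuine FOP maps. Once that is secured, the finite‑poset fact that an injective order‑preserving endomorphism is an automorphism delivers the rest essentially for free.
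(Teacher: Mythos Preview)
Your proposal is correct and follows essentially the same approach as the paper: pick a minimal-cardinality type $R$, compose an arbitrary FOP map $\mathcal{E}\colon S\to S'$ with a return FOP map to get a self-map on $S$, use minimality of $|S|$ to force this self-map to be bijective, use finiteness (some power is the identity) to upgrade it to an order-automorphism, and then read off that $\mathcal{E}$ is an order-embedding. The only cosmetic difference is that the paper first argues $|\mathrm{Im}(\mathcal{E})|=|S|$ directly (so $\mathcal{E}$ is injective) before analysing the composite, whereas you argue injectivity of $g=\mathcal{F}\circ\mathcal{E}$ first and deduce injectivity and order-reflection of $\mathcal{E}$ from that; both routes amount to the same thing.
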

\begin{proof}
Consider a minimal representative $R\in \mathcal{T}$, that is, we assume that $|R|$ is minimal in $\mathcal{T}$. We will shortly show that for all $S \in R$ and any other $S'\in T \in \mathcal{T}$ any FOP map $\mathcal{E}:S\rightarrow S'$ is an order-embedding. It then immediately follows from the notion of an order-embedding that $|S|\leq |S'|$ and that $|S|=|S'|$ iff $S$ and $S'$ are order isomorphic, which implies $R = T$. Hence, minimal representatives are unique.

Now let us show that any FOP map $\mathcal{E}:S\to S'$ for a labelling of a minimal representative $S$ and arbitrary $S'\in T \in \mathcal{T}$ is an order-embedding. To begin, note that as $S$ and $S'$ are both from the same equivalence class under $\sim$, there necessarily exists such a FOP map $\mathcal{E}:S\rightarrow S'$ and, also, there exists a FOP map $\mathcal{E}':S'\rightarrow S$.

Let us, moreover, define $S_1:=\text{Im}(\mathcal{E})$ and $S_2:=\text{Im}(\mathcal{E}'\circ\mathcal{E})$. As these are defined as the images of FOP maps, it immediately follows from the definition of $\succ$ that $S\succ S_1 \succ S_2$. Moreover, as $S_2\subseteq S$ and $S_2$ contains the frame of $S$, we also have that $S_2\succ S$. Putting all this together we have that $S\succ S_1\succ S_2\succ S$, hence, we have that the $S_i$ are also in the same equivalence class under $\sim$.

Next, as the cardinality of images can be at most the cardinality of the domain, we have that $|S_i| \leq |S|$ for $i=1,2$. But by assumption, the order of $S$ is already minimal in the equivalence class and so $|S_i| = |S|$ for $i=1,2$. Recalling that $S_1=\text{Im}(\mathcal{E})$, the fact that $|S_1|=|S|$ immediately means that $\mathcal{E}:S\rightarrow S'$ is injective, and restricting the codomain of $\mathcal{E}:S'\to S$ to $S_1$, which we denote as $\mathcal{E}|^{S_1}:S\to S_1$, we find that this is bijective.

Next, we restrict the domain of $\mathcal{E}':S'\rightarrow S$ to $S_1$, which we denote as $\mathcal{E}'|_{S_1}:S_1\rightarrow S$. Noting that $\mathcal{E}'\circ \mathcal{E} = \mathcal{E}'|_{S_1}\circ \mathcal{E}|^{S_1}$, and recalling that  $S_2=\text{Im}(\mathcal{E}'\circ \mathcal{E})$ and $|S_2|=|S|$, we therefore have that $\mathcal{E}'|_{S_1}\circ \mathcal{E}|^{S_1}$ is bijective. Given that we already know that $\mathcal{E}|^{S_1}$ is a bijection then this immediately tells us that $\mathcal{E}'|_{S_1}$ is too. Additionally, we also have $S = S_2$ since $S_2 \subseteq S$.

Further, since $S$ is finite, $\mathcal{E}' \circ \mathcal{E}$ can thus be seen as a permutation of a finite number of elements. Hence, there exists $n$ such that $(\mathcal{E}' \circ \mathcal{E})^n = \text{id}$. From this it follows that $(\mathcal{E}' \circ \mathcal{E})^{n-1} \circ \mathcal{E}'|_{S_1}$ is an order-preserving inverse of $\mathcal{E}|^{S_1}: S \rightarrow S_1$. A bijective order-preserving map with an order-preserving inverse is an order isomorphism. Thus, $S \cong S_1 \subseteq S'$, which means that $\mathcal{E}$ is an order-embedding from $S$ to $S'$. Finally, if $|R| = |T|$, then $|S| = |S'|$ by definition, and $\mathcal{E}$ is a surjective FOE, i.e., a relabelling. Hence, $S \cong S'$ and thus $R = T$.
\end{proof} 

In many ways, minimal representatives are the ``correct'' way to think about implementations as a whole. For example, one might be interested in the upper bound for the number of gates (i.e., elements in the partial order) needed to embed some process. But if one were to look at all partial orders, this question is not particularly well-defined since, as we mentioned earlier, we can always add additional elements to any partial order without changing anything about its embeddability. Of course, since these additional elements correspond to trivial processes, the intuitive thing is to disregard them or alternatively, strip away all the redundant elements, but neither of these is a very rigorous notion a priori. Formalisation of these ideas leads one directly to minimal representatives. Minimal representatives also give us an appealing interpretation of equivalence classes as all the ways one can add redundant information to the minimal representatives. 

The information we need to know about a process for our purposes is then simply the set of associated minimal representatives.

\begin{restatable}{defi}{imprepset}\label{imprepset}
We denote the set of minimal representatives of class \MINO with $\mathcal{R}_{\MINO} \subseteq \mathcal{T}_{\MINO}$. The set of the minimal representatives of the implementations of a process $\mathtt{F}$ (we will usually simply say the minimal representatives of $\mathtt{F}$ for brevity) is the set
\begin{equation}
\mathcal{R}_\mathtt{F} = \{R \in \mathcal{R}_{\MINO} | \exists i_\mathtt{F} \in \mathcal{I}_\mathtt{F}: \mathcal{G}(i_\mathtt{F}) \in R\}.
\end{equation}
\end{restatable}
In analogy to the above nomenclature, for some $R \in \mathcal{R}_\mathtt{F}$, we will say that $R$ is a minimal representative of $\mathtt{F}$.

\Cref{minrep} also immediately gives us a useful criterion to check if an arbitrary partial order is a minimal representative.
\begin{restatable}{coro}{mincrit}\label{mincrit}
Let $T \in \mathcal{T}_{\MINO}$. Then $T$ is a minimal representative iff for all $S, S' \in T$ all FOP maps $\mathcal{E}: S \rightarrow S'$ are relabellings\footnote{We note here that if $S$ and $S'$ belong to the same FPO type, this means by definition that there exists a relabelling $\mathcal{E}: S \rightarrow S'$. However, this does not imply a priori that all FOP maps from $S$ to $S'$ are relabellings. For example, think of the FPO type corresponding to an input and a single internal element connected to it. Then, mapping both the input and the internal element to the input defines a valid FOP map but is not a relabelling as it is not surjective}.
\end{restatable}
\begin{proof}
``$\Rightarrow$'': Since $T$ is a minimal representative and $S \sim S'$, by \cref{minrep}, $\mathcal{E}$ is FOE. Further, $|S| = |S'|$ as they have the same order type, which implies that $\mathcal{E}$ is bijective and thus a relabelling.

``$\Leftarrow$'': By \cref{minrep}, there exists $R \sim T$ a minimal representative. Then, again by \cref{minrep}, for any $\tilde{S} \in R$, there exists a FOP map $\mathcal{F}: S \rightarrow \tilde{S}$ and an FOE $\tilde{\mathcal{F}}: \tilde{S} \rightarrow S$. By assumption, we have that $\mathcal{E} := \tilde{\mathcal{F}}\circ \mathcal{F}$ is a relabelling as it is a FOP self-map on $S$. This implies that $\mathcal{F}$ is an FOE as well. To see this, let $x,y\in S$ such that $\mathcal{F}(x) \leq \mathcal{F}(y)$. Then, since $\tilde{\mathcal{F}}$ is order-preserving we find that $\tilde{\mathcal{F}} \circ \mathcal{F}(x) \leq \tilde{\mathcal{F}} \circ \mathcal{F}(y)$ and since $\tilde{\mathcal{F}} \circ \mathcal{F}$ is a relabelling, and thus, in particular order-reflecting, we find $x \leq y$. Thus, we have $\mathcal{F}(x) \leq \mathcal{F}(y) \Rightarrow x \leq y$ and $\mathcal{F}$ is FOE. This then implies $|S| \leq |\tilde{S}|$ and thus $|T| \leq |R|$. But since $R$ is a minimal representative, by \cref{minrep}, this is only possible if $|T| = |R|$ and thus $T = R$, i.e., $T$ is a minimal representative. 
\end{proof}

In \cref{sec:addprop}, we show that a minimal representative is minimal within its equivalence class in some other senses as well. Minimal representatives have strictly the fewest order relations between elements and their Hasse diagrams have the smallest size (i.e., number of edges). Their height and width (i.e., length of the longest chain and antichain respectively) are minimal (although, not strictly). 

In the aforementioned appendix, we also analyse some features which are useful for finding minimal representatives. We show that there always exists a map from an FPO to its minimal representative which intuitively behaves like a projection when one identifies elements in the minimal representative with elements in the FPO via an order emebedding. We also analyse how one can construct more complex minimal representatives by essentially ``composing'' simpler ones. In particular, we show that any connected component of a minimal representative is itself a minimal representative. Hence, when in \cref{sec:simplemin} we explicitly list minimal representatives it will suffice to list only those with a single connected component (i.e., they are fully connected).

\section{Theory-specific considerations}\label{sec:thspec}

Up to this point, we have managed to reduce the set of FPO types we need to consider as far as is possible in a theory-independent setting. Essentially, instead of all FPO types, it is only the ones which are minimal representatives that are actually relevant. That we cannot drop any more FPO types in the theory-independent setting follows from \cref{conversion} and the fact that $\succ$ is, by definition, a partial order, instead of a preorder, on minimal representatives. However, we can go further if we specify the set of theories that we are actually interested in. If we do this, it can turn out that even some minimal representatives are not actually relevant. We formalise this notion now. 

\begin{restatable}{defi}{irrelevant}\label{irrelevant}
Let $\mathcal{P}$ be a collection of process theories. Let $R, R' \in \mathcal{R}_{\MINO}$ be two minimal representatives. We say that $R$ supersedes $R'$ for $\mathcal{P}$, $R \succ_\mathcal{P} R'$ if $R \succ R'$ and for any $\mathtt{F} \in \mathbf{Proc} \in \mathcal{P}$ it holds that $R' \in \mathcal{I}_F \implies R\in \mathcal{I}_F$. We call a minimal representative $R$ 
\begin{itemize}
\item \textbf{relevant} for $\mathcal{P}$ if it is a maximal element in $\mathcal{R}_{\MINO}$ under $\succ_\mathcal{P}$, 
\item \textbf{irrelevant} for $\mathcal{P}$ if there exists a maximal element in $\mathcal{R}_{\MINO}$, $R' \neq R$, under $\succ_\mathcal{P}$ and $R' \succ_\mathcal{P} R$ and 
\item \textbf{quasi-relevant} for $\mathcal{P}$ if $R$ is neither relevant nor irrelevant.
\end{itemize}
\end{restatable}

We will often drop the specification ``for $\mathcal{P}$'' if it is clear which process theories we are considering.

The intuition here is the following: if $R' \succ R$, then we only need $R$ when discussing a process which does not have $R'$ as an implementation. If $R$ is now irrelevant (and $R'$ is the relevant minimal representative from the definition), then such processes do not exist. Hence, we can essentially ignore $R$ entirely. 

Quasi-relevant minimal representatives are an interesting edge case in this regard. We could have also defined them equivalently as follows: We call $R$ quasi-relevant for $\mathcal{P}$ if the following holds: $R$ is not a maximal element in $\mathcal{R}_{\MINO}$ under $\succ_\mathcal{P}$, but there also exists no maximal element $R'$ in $\mathcal{R}_{\MINO}$ under $\succ_\mathcal{P}$ such that $R' \succ_\mathcal{P} R$. This equivalent framing also makes it immediately clear that any minimal representative which supersedes $R$ is itself quasi-relevant. In fact, the existence of a quasi-relevant minimal representative $R$ immediately implies the existence of an infinite sequence of distinct quasi-relevant minimal representatives $... \succ_\mathcal{P} R_{n+1} \succ_\mathcal{P} R_n \succ_\mathcal{P} ... \succ_\mathcal{P} R_0 = R$. A simple example of this exists in theories with cups and caps\footnote{When working with a theory with cups and caps, one can essentially forget about the distinction between inputs and outputs as an input can always be turned into an output by attaching a cup or vice versa by attaching a cap. The Choi-Jamiołkowski representation in quantum theory can be seen as a particular example of this. In many applications, one can equivalently use a channel or its Choi-Jamiołkowski representation. We note, however, that in the formalisation of quantum theory as a process theory used in the present work, the cup is not a valid state and the cap is not a valid effect due to terminality.}, i.e., theories that have a bipartite state called a cup and a bipartite effect called a cap,
\begin{equation}
\tikzfig{Diagrams/cup} \quad \tikzfig{Diagrams/cap}
\end{equation}
satisfying the snake equation
\begin{equation}
\tikzfig{Diagrams/snake} \quad = \quad \tikzfig{Diagrams/tensorsystemstriv2}.
\end{equation}
Since we can always add an identity to a process we have for all processes $\mathtt{F}$ in a theory with cups and caps that
\begin{equation}\label{cupcap}
\tikzfig{Diagrams/trivcomp2} \quad = \quad \tikzfig{Diagrams/fsnake}.
\end{equation}
The RHS is easier to embed than the LHS (intuitively, as we no longer need that the input is embedded into the past of the output). It is also easy to verify that the corresponding FPO types are inequivalent minimal representatives
\begin{equation}
\tikzfig{Diagrams/11minrepsnake} \quad \succ \quad \tikzfig{Diagrams/11minrep}.
\end{equation}
Since \cref{cupcap} holds for all processes, we can then also replace $\succ$ with $\succ_\mathcal{P}$ where $\mathcal{P}$ is now the collection of process theories with cups and caps. Hence, it appears we do not need the trivial minimal representatives in such theories as every process has a more useful minimal representative. However, we can play the same game as above an arbitrary number of times, i.e., we can add an arbitrary number of identities to $\mathtt{F}$ and replace them with snakes 
\begin{equation}\label{cupcap}
\tikzfig{Diagrams/trivcomp2} \quad = \quad \tikzfig{Diagrams/fnsnakes}
\end{equation}
where the ellipsis stand for an arbitrary number of snakes.

Hence, we find that in a theory with cups and caps that 
\begin{equation}\label{cupcap}
... \quad \succ_{\mathcal{P}} \quad \tikzfig{Diagrams/11minrep3snakes} \quad \succ_{\mathcal{P}} \quad \tikzfig{Diagrams/11minrep2snakes}\quad \succ_{\mathcal{P}} \quad \tikzfig{Diagrams/11minrepsnake}
\end{equation}
and there exists no minimal representative which is greater under $\succ_{\mathcal{P}}$ than all of these (to see this, consider that the minimal representatives of the identity in a theory that consists of only the identity, the cup and the cap are exactly the above).

Physically, what all this means is that in a theory with cups and caps we can $\mathcal{C}$-locally embed all processes for any $\mathcal{C}$ that does not map inputs and outputs into completely disconnected parts of the spacetime. 

We will see another (possible) example of quasirelevant minimal representatives in \cref{sec:quantum}.

Note further that the condition $R' \succ R$ is indeed needed to ensure that $R$ is actually ``irrelevant''. For example, consider the following minimal representatives which are unrelated under $\succ$\footnote{We stress that these FPO types are indeed inequivalent as the frames are lists and FOP maps map the first element of the list of an FPO to the first element of the list of the other FPO, the second element to the second element, etc. In our graphical notation, exchanging the positions of frame elements on the page hence changes the equivalence class, while exchanging the positions of internal elements can always be done freely. Physically, this captures the idea that there is a difference whether Alice can send a message to Bob (which, choosing to arbitrarily associate Alice with the left half of an FPO and Bob with the right half of an FPO, we could associate to the FPO on the left of \cref{leftrightchannel}) or Bob can send a message to Alice (which would then correspond to the FPO on the right of \cref{leftrightchannel}.}
\begin{equation}\label{leftrightchannel}
\tikzfig{Diagrams/leftrightchannel}.
\end{equation}
We could imagine a process theory where every process which has the minimal representative on the left also always has the minimal representative on the right. Still, the minimal representative on the right is not irrelevant as it can be $\mathcal{C}$-locally embedded in situations where the one on the right cannot be.

No minimal representative is irrelevant for all process theories. The process theory $\mathbf{FreeProc}$ which we define in \cref{conversionappendix} is a process theory for which all minimal representatives are relevant. Indeed, this fact is essentially what we use to prove the converse direction of \cref{conversion}.

On the other hand, we can also come up with an example of a process theory where the only relevant minimal representative is the most useful one (i.e., the FPO type that contains only the frame and no order relations between them). The process theory $\mathbf{TrivProc}$ consists of the closure under parallel and sequential composition of the following: a primitive system $A$, the discard effect on $A$ and preparation of a fixed state on $A$. This still works if we add more primitive systems or allow for general effects and states as long as we do not allow any non-separable states or effects.

\subsection{Terminal theories}\label{sec:causaltheories}
A class of process theories of particular interest are terminal theories \cite{chiribella2010probabilistic,coecke2013causal,coecke2014terminality}. When working with spacetime structure, very often one will want to work with these theories and so it makes sense to give them special consideration in our framework as we did already in \cref{sec:EmbedAndSignalling}.

Given some implementation $i_\mathtt{F}$, we can use the properties of the discard \cref{causalconditions} to get rid of any effects in the diagram as illustrated in the example below. 
\begin{gather}\label{causalabsorb}
\begin{aligned}
\tikzfig{Diagrams/causaldecompex1} \quad &= \quad \tikzfig{Diagrams/causaldecomp2} \\
&= \quad \tikzfig{Diagrams/causaldecomp3} \\
&= \quad \tikzfig{Diagrams/causaldecomp4}
\end{aligned}
\end{gather}
First, factorize all effects into separate discards as in the first line above. Now merge each discard with the box it is connected to as in the second line above. This can then be repeated for any boxes that are now effects, for example, for $\mathtt{F}'$ in the last line above, which can be turned into a pair of discards. The end result is a new implementation $j_\mathtt{F}$ where the only remaining effects are ones directly connected to an external input.

Such implementations correspond to framed partial orders where all maximal elements of the FPO are inputs or outputs (note that any discard connected to an external input is identified with the corresponding frame element under $\mathcal{G}$). The relevant minimal representatives for terminal theories are then the ones where all maximal elements of the minimal representative are part of the frame\footnote{We remind the reader of our convention of saying an FPO type has some property if the property defines an invariant on the FPO type, i.e., all its labellings have this property. Hence, the above statement is explicitly: The relevant minimal representatives for terminal theories are then the ones where for every labelling of the FPO type all maximal elements of the FPO are part of the frame.}. Note that any minimal representative with this property is indeed relevant for the collection of terminal process theories. This is analogous to the proof of the converse direction of \cref{conversion}. We can consider the same free process theory but impose terminality on it (this is also known as the affine reflection of the process theory \cite{huot2019universal}). Then, for any minimal representative without internal maximal elements we can construct a diagram which cannot be rewritten into a more useful implementation.

\subsubsection{Lists of relevant minimal representatives for terminal theories}\label{sec:simplemin}

In this section, we list the relevant minimal representatives for terminal theories of the classes  \MINOtype{0}{2}, \MINOtype{0}{3}, \MINOtype{1}{3}, \MINOtype{2}{2} (the cases with less than two outputs are trivial and so we omit them). Minimal representatives that can be obtained via ``parallel composition'' of minimal representatives (see \cref{parmin} in \cref{sec:addprop}) or via permutation of the frame elements of the listed minimal representatives will not be listed explicitly (except in the case of \MINOtype{0}{2}). The minimal representatives are represented by their Hasse diagrams.

The proofs that all of these FPO types are indeed minimal representatives and that there are no further minimal representatives can be found in \cref{sec:minproofs}. We omit the proofs for the ordering relationship between these minimal representatives as it is easy to verify them.

\begin{restatable}{thm}{thm:simplemin}\label{simplemin}
The set of relevant minimal representatives for terminal process theories of the classes \MINOtype{0}{2}, \MINOtype{0}{3}, \MINOtype{1}{2}, \MINOtype{1}{3} and \MINOtype{2}{2} are exactly those in \cref{eq:0I2O,eq:0I3O,eq:1I2O,eq:1I3O,eq:2I2O} and those which can be obtained from these via permutation of the frame elements or via \cref{parmin}.
\end{restatable}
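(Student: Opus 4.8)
The plan is to split the statement into three parts and prove each of them class by class for the five classes \MINOtype{0}{2}, \MINOtype{0}{3}, \MINOtype{1}{2}, \MINOtype{1}{3}, \MINOtype{2}{2}: (a) every FPO type appearing in the relevant list is a minimal representative; (b) each such type is relevant for causal theories; and (c) no relevant minimal representative of these classes is missing, up to permutation of the frame and \cref{parmin}. Two reductions come first. By \cref{parmin} every minimal representative is a parallel composition of connected ones, so the classification only has to enumerate connected minimal representatives. And, by the discussion in \cref{sec:causaltheories}, the minimal representatives that are relevant for the collection of causal theories are precisely those all of whose maximal elements belong to the frame; so it suffices to range over connected FPO types of the given class with no internal maximal element.

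For (a) I would invoke \cref{mincrit}: an FPO type $T$ is a minimal representative iff every FOP map between labellings of $T$ is a relabelling. For each listed type this is a finite check — a FOP self-map fixes the frame, and the image of an internal element $x$ must lie below every frame output above $x$, above every frame input below $x$, and compatibly with the images of the neighbours of $x$; for the listed shapes the only solution is the identity, which is bijective and hence a relabelling. For the zigzag family this verification should be carried out uniformly in the length, using the alternating ``fence'' structure to pin down each internal element in turn. For (b) I would reuse the mechanism behind the converse direction of \cref{conversion}: in the free causal process theory (the affine reflection of $\mathbf{FreeProc}$ from \cref{conversionappendix}), each listed type $R$ is realised by a process whose implementations, once all effects are absorbed via \cref{causalabsorb}, reduce only to $R$ up to $\sim$; hence nothing strictly above $R$ can supersede it, so $R$ is maximal under $\succ_\mathcal{P}$ and therefore relevant.

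The substantive part is (c), completeness. Here I would prove a structural lemma characterising when a connected FPO with no internal maximal element \emph{fails} to be a minimal representative: by \cref{mincrit} this happens exactly when some labelling admits a non-injective FOP self-map, and the simplest such map collapses one internal element $x$ onto another element $y$, which is possible precisely when every element strictly below $x$ is $\leq y$ and every element strictly above $x$ is $\geq y$. Thus in a minimal representative every internal element is ``non-absorbable'' in this sense (and, for larger merges, no proper frame-containing retract exists). Imposing this on connected FPOs with the small numbers of inputs and outputs at hand, together with connectedness, bounds the possible Hasse diagrams: for \MINOtype{0}{2}, \MINOtype{0}{3}, \MINOtype{1}{2}, \MINOtype{1}{3} only finitely many shapes survive, and one checks by inspection that — after deleting the ones with an internal maximal element and grouping by frame permutation and parallel factorisation — they are exactly those listed. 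For \MINOtype{2}{2} the same constraints additionally permit a chain of internal elements in an alternating fence pattern, each new one sitting above a previously constructed element and below the next while threading the two inputs toward the two outputs; this is precisely the countable zigzag family, and the lemma forces every connected minimal representative of this class to be one of finitely many exceptional shapes or a member of this family. The ordering relations among all the listed representatives, needed only to confirm that each is maximal under $\succ_\mathcal{P}$ (relevant) rather than superseded, are then read directly off their Hasse diagrams.

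The main obstacle is the completeness step for \MINOtype{2}{2}: one has to be sure the structural lemma is tight enough to exclude every infinite family other than the zigzags and every stray finite shape, and — dually — that the zigzags really are distinct minimal representatives for every length, which needs a uniform, length-independent argument rather than a case list. A secondary but genuinely fiddly point is the bookkeeping around ``up to permutation of the frame and \cref{parmin}'': several listed shapes coincide after permuting inputs or outputs, or differ only by a disjoint factor, so this clause must be applied carefully to avoid both over- and under-counting. By contrast, the verifications in (a) and the relevance arguments in (b) are essentially mechanical once the lists are fixed.
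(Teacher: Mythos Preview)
Your overall three-part architecture (minimality via \cref{mincrit}, relevance via the free causal theory, completeness) matches the paper's, and your two preliminary reductions are exactly right. However, there is a concrete factual error in your completeness step: you assert that for \MINOtype{1}{3} ``only finitely many shapes survive''. This is false. The paper's \cref{eq:1I3O} already contains a countably infinite zigzag family in the \MINOtype{1}{3} class, not only in \MINOtype{2}{2}; the ellipsis in the statement refers to both. Your proposed argument for \MINOtype{1}{3}, which ``checks by inspection'' a finite list, would therefore miss infinitely many relevant minimal representatives and cannot be completed as written.

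Beyond this, your structural lemma is weaker than what the paper actually uses. Your ``non-absorbability'' condition (an internal $x$ cannot be collapsed onto a single $y$) is correct but not sharp enough on its own; you gesture at ``larger merges'' but leave that vague. The paper instead relies on \cref{2parents} (every internal element has, for each $y$ above it, some $z$ above it unrelated to $y$, and dually), together with the collapse lemma of \cref{collapse}, and these do most of the work in bounding the shape of internal elements. For the infinite families the paper does not proceed by inspection at all: it runs a shortest-internal-path argument from one frame endpoint to another to show that any labelling of a minimal representative with a single internal connection component must be exactly a zigzag, and then invokes \cref{onecon} (each internal connection component of a minimal representative is itself one, and distinct components are $\succ$-incomparable) to exclude multiple components. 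Your plan mentions none of this machinery, and without it the completeness argument for \MINOtype{1}{3} and \MINOtype{2}{2} has a genuine gap.
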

The minimal representatives for the \MINOtype{0}{2} case are up to permutation of frame elements
\begin{equation}\label{eq:0I2O}
\tikzfig{Diagrams/0I2O}.
\end{equation}
The minimal representatives for the \MINOtype{0}{3} case are up to permutation of frame elements and \cref{parmin}
\begin{equation}\label{eq:0I3O}
\tikzfig{Diagrams/0I3O}.
\end{equation}
The minimal representatives for the \MINOtype{1}{2} case are up to permutation of frame elements and \cref{parmin}
\begin{equation}\label{eq:1I2O}
\tikzfig{Diagrams/1I2O}.
\end{equation}
The minimal representatives for the \MINOtype{1}{3} case are up to permutation of frame elements and \cref{parmin}
\begin{equation}\label{eq:1I3O}
\tikzfig{Diagrams/1I3O_v3}.
\end{equation}
The ellipsis represents the fact that the ``zigzag'' structure formed by the internal elements in these minimal representatives can be arbitrarily long. We can define labellings of the zigzags (and thus the FPO types of the zigzags) explicitly as follows: The $N$-zigzag of class \MINOtype{2}{2} consists of the input $I$, the outputs $O_1, O_2, O_3$ and $2N$ internal elements labelled $x_1,..., x_{2N}$ such that $I < x_1$ and $x_{2N} < x_{2N-1}, O_3$ and $x_{2n}< x_{2n-1}, x_{2n+1} < O_1, O_2$ for all $n=1,...,N-1$. The longer the zigzag structure is (i.e., the more elements it contains), the more useful it is under $\succ$. For \MINOtype{2}{2}, we get an overall simpler structure, but we have similar zigzag minimal representatives,
\begin{equation}\label{eq:2I2O}
\tikzfig{Diagrams/2I2O_v3}.
\end{equation}
Again we can define labellings of the zigzags (and thus the FPO types of the zigzags) explicitly as follows: The $N$-zigzag of class \MINOtype{2}{2} consists of the inputs $I_1, I_2$, the outputs $O_1, O_2$ and $2N+1$ internal elements labelled $x_1,..., x_{2N+1}$ such that $I_1 < x_1$ and $I_2 < x_{2N+1}$ and $x_{2n}< x_{2n-1}, x_{2n+1} < O_1, O_2$ for all $n=1,...,N$. 

All of these causal structures have received much attention in the literature, with the exception of the zigzag ones (although, the least useful zigzag, i.e., the 1-zigzag, minimal representative has been studied under the name non-local computation \cite{ishizaka2008asymptotic}). It is therefore an interesting open question as to whether the zigzag causal structures are also relevant for quantum information processing.  We make some progress on this question in \cref{sec:quantum}.

\subsection{Theories with copying}\label{sec:markov}

If a theory allows for an appropriate notion of copying, together with a property that intuitively means that any randomness can ultimately be viewed as uncertainty about the initial conditions, we can reduce the number of relevant minimal representatives significantly, namely to the ones where the internal elements are all minimal elements of the minimal representative. This reduction is essentially equivalent to the concept of exogenisation in classical causal modelling \cite{evans2018margins} which states that all the latent nodes (internal elements in our language) in a classical causal model can always be assumed to be parentless variables (i.e., they are minimal elements). All other minimal representatives are irrelevant. Let us first discuss what this statement means for the minimal representatives from the previous section before we go on to formalise it. For the cases \MINOtype{0}{2}, \MINOtype{1}{2} and \MINOtype{0}{3}, all minimal representatives are potentially relevant. However, for \MINOtype{1}{3} and \MINOtype{2}{2}, the zigzags become irrelevant. What is more, for \MINOtype{2}{2} the trivial minimal representative is also irrelevant, meaning \textit{all} minimal representatives less useful than the two-way communication minimal representative are irrelevant. Thus, in theories with copying \MINOtype{2}{2} processes can always be achieved via two-way communication.

A copying process is a process with one input and two outputs

\begin{equation}
\tikzfig{Diagrams/copy}
\end{equation}

with the properties

\begin{equation}
\tikzfig{Diagrams/copycopyright} \quad = \quad \tikzfig{Diagrams/copycopyleft}, \quad \tikzfig{Diagrams/copydiscleft} \quad = \quad \tikzfig{Diagrams/identityLong} = \quad \tikzfig{Diagrams/copydiscright}, \quad \tikzfig{Diagrams/copyswap} \quad = \quad \tikzfig{Diagrams/copylong}.
\end{equation}

Categories that admit such a process are called cd categories \cite{cho2019disintegration} and we will extend this nomenclature to process theories (i.e., a process theory is cd if the underlying category is cd).

A process $\mathtt{F}$ in such a process theory is called deterministic iff
\begin{equation}
\tikzfig{Diagrams/Fthencopy} \quad = \quad \tikzfig{Diagrams/copythenF}.
\end{equation}

A deterministic dilation of a process $\mathtt{f}$ is a deterministic process $\mathtt{F}$ such that
\begin{equation}
\tikzfig{Diagrams/dilationlhs} \quad = \quad \tikzfig{Diagrams/dilationrhs}
\end{equation}
for some state $\rho$.

\begin{restatable}{prop}{markovreduction}\label{markovreduction}
 Let $\mathbf{Proc}$ be a process theory that is cd and where every process has a deterministic dilation. Then, the relevant minimal representatives are (a subset of) those where all internal elements are minimal elements of the minimal representative. All other minimal representatives are irrelevant and there are no quasi-relevant minimal representatives. 
 
 If, moreover, the theory is deterministic, i.e., one in which every process is deterministic, then the relevant minimal representatives are (a subset of) those without any internal elements. All other minimal representatives are irrelevant and there are no quasi-relevant minimal representatives.
\end{restatable}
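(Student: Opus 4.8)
\emph{Plan.} I will prove the two halves of the statement by the same argument, so let me describe the nontrivial case (Markov with deterministic dilations; the deterministic case is the obvious specialisation). The two things to establish are: (a) every minimal representative having a non-minimal internal element is irrelevant, and (b) no minimal representative is quasi-relevant; the description of the relevant ones then follows. The engine is an \textbf{exogenisation procedure} turning any implementation $i_\mathtt{F}$ of $\mathtt{F}$ into an implementation $j_\mathtt{F}$ of $\mathtt{F}$ all of whose internal elements are minimal, with $\mathcal{G}(j_\mathtt{F})$ depending (up to relabelling) only on $\mathcal{G}(i_\mathtt{F})$. Run through the boxes that correspond to internal elements of $\mathcal{G}(i_\mathtt{F})$ in a topological order. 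To process such a box $\mathtt{h}$: using the deterministic-dilation hypothesis write $\mathtt{h}=\mathtt{H}\circ(\mathrm{id}\otimes\rho_\mathtt{h})$ with $\mathtt{H}$ deterministic and $\rho_\mathtt{h}$ a fresh state. Because $\mathtt{H}$ is deterministic it interchanges with copying; if $\mathtt{h}$ has $m$ successors, copy each input of $\mathtt{H}$ $m$-fold, run on the $k$-th branch the map $\mathtt{H}$ followed by discarding the outputs not consumed by the $k$-th successor (discards exist since a Markov theory is causal), and merge each branch into the corresponding successor box. By the two identification steps of $\mathcal{G}$, an auxiliary copy sitting above a frame input is read as that input, while one sitting above the fresh state $\rho_\mathtt{h}$ or above an already-processed internal box (now a source state) is absorbed into it. What is left in place of $\mathtt{h}$ is a single multipartite source state $\widetilde\rho_\mathtt{h}$ with one output per successor of $\mathtt{h}$, a minimal internal element. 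When $\mathtt{h}$ is processed its topological predecessors have all been processed, so its parents are already minimal; the procedure terminates with every internal element minimal, and tracking the wires shows the resulting FPO is, up to relabelling, the one — call it $R^{\mathrm{exo}}$ — obtained from $R:=\mathcal{G}(i_\mathtt{F})$ by replacing each internal element $\mathtt{h}$ with a source below exactly the outputs reachable from $\mathtt{h}$ in $R$, keeping each input below exactly the outputs reachable from it, and then passing to the minimal representative $\widehat R$ of the result.

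Since only equalities of diagrams were used (the dilation equation, the copy/determinism interchange law, and the discard axioms \cref{causalconditions}), $j_\mathtt{F}$ still equals $\mathtt{F}$, so by \cref{conversion} we have $\widehat R\in\mathcal{R}_\mathtt{F}$ whenever $R\in\mathcal{R}_\mathtt{F}$, for every $\mathtt{F}$ in the class $\mathcal{P}$ of Markov theories with deterministic dilations. Moreover the assignment $\widetilde\rho_\mathtt{h}\mapsto\mathtt{h}$, identity on the frame, is a FOP map $R^{\mathrm{exo}}\to R$, so $\widehat R\succ R$; and $\widehat R\neq R$ as soon as $R$ has a non-minimal internal element, since $\widehat R$ has none. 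Hence for such $R$ we get $\widehat R\succ_\mathcal{P}R$ with $\widehat R$ strictly above $R$ and all of $\widehat R$'s internal elements minimal. In particular the relevant minimal representatives are contained in those whose internal elements are all minimal.

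Next, within a fixed class $\mathcal{S}_{\MINO}$, the minimal representatives all of whose internal elements are minimal form a \emph{finite} set: a minimal internal element can only lie below frame outputs (any element above it would be internal with a parent, impossible), and by \cref{mincrit} two distinct such source elements must have distinct sets of children and none may have empty children, so there are at most $2^N$ of them and the order on the representative is then fixed by finitely much data. Restricting the partial order $\succ_\mathcal{P}$ to this finite set, each element lies below a $\succ_\mathcal{P}$-maximal one $R^{\ast}$, and $R^{\ast}$ is in fact maximal under $\succ_\mathcal{P}$ among \emph{all} minimal representatives: any $R''\neq R^{\ast}$ with $R''\succ_\mathcal{P}R^{\ast}$ is either all-internal-minimal — impossible, $R^{\ast}$ being maximal in the finite set — or has a non-minimal internal element, in which case the previous paragraph supplies an all-internal-minimal $\widehat{R''}\neq R^{\ast}$ with $\widehat{R''}\succ_\mathcal{P}R''\succ_\mathcal{P}R^{\ast}$, again contradicting maximality in the finite set. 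Therefore a minimal representative $R$ with a non-minimal internal element is superseded, via the $\widehat R$ of the previous paragraph, by a relevant representative $R^{\ast}\neq R$, hence is irrelevant; an all-internal-minimal representative that is not $\succ_\mathcal{P}$-maximal lies strictly below such an $R^{\ast}$, hence is irrelevant; so every minimal representative is relevant or irrelevant, the relevant ones lying among those with all internal elements minimal. For the deterministic case the dilation step is vacuous ($\rho_\mathtt{h}$ is trivial), so exogenisation deletes internal boxes outright instead of turning them into sources; one repeats the argument verbatim with ``no internal elements'' replacing ``all internal elements minimal'' and the (finite) set of internal-element-free representatives replacing the finite set above.

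The main obstacle is making the exogenisation rewriting of the first paragraph fully rigorous: certifying that the rewritten diagram still equals $\mathtt{F}$ is routine string-diagram manipulation, but verifying that its framed partial order is exactly $R^{\mathrm{exo}}$ for \emph{every} choice of $\mathtt{F}$ and of implementation requires careful bookkeeping around the two $\mathcal{G}$-identification steps — in particular around the auxiliary copies introduced when a deterministic part is pushed through into several successors simultaneously. Everything downstream of that is order-theoretic and finite.
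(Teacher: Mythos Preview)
Your proposal is correct and uses the same core ideas as the paper: exogenise via the deterministic-dilation equation followed by the copy/determinism interchange, then use finiteness of the class of all-internal-minimal representatives to rule out quasi-relevance. The difference is only in execution. The paper eliminates a \emph{single} non-minimal internal element $x$ at a time: one application of the dilation/copy rewrite at $x$ produces an implementation whose FPO is equivalent to $S\setminus\{x\}$ (with the order relations through $x$ replaced by their transitive closure), so the new minimal representative $R'$ satisfies $|R'|\leq|S\setminus\{x\}|<|R|$ and $R'\succ_\mathcal{P}R$; iterating until no non-minimal internal element remains terminates by the strict cardinality drop. This completely sidesteps the bookkeeping you flag, because a single step introduces only one layer of copies (whose parents are still the original predecessors of $x$), and one does not need to track what happens to towers of copies stacked above a frame input over many steps. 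Your single-pass procedure is more ambitious but not more informative, and the obstacle you identify is precisely the price of doing it that way; the paper's one-element-at-a-time rewrite is the clean fix.

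Two small remarks on the downstream argument. First, your claim that $\widehat{R''}\neq R^{\ast}$ is not justified as written; what you get from the previous paragraph is $\widehat{R''}\neq R''$. The case $\widehat{R''}=R^{\ast}$ is handled separately: it gives $R^{\ast}\succ_\mathcal{P}R''\succ_\mathcal{P}R^{\ast}$, and since $\succ$ is antisymmetric on minimal representatives this forces $R''=R^{\ast}$, contradicting the hypothesis. Second, your finiteness count is right in spirit but the sharper observation (also implicit in the paper) is that a minimal representative with all internal elements minimal is determined by which nonempty subsets of the $N$ outputs occur as children-sets of internal elements together with the input--output incidence, which gives the needed finite bound without further work.
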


The proof of this statement can be found in \cref{relevancecq}.

From this it also immediately follows that any process in such a process theory, can be implemented by the minimal representative which consists only of the frame and every input is in the past of every output (depicted below for the case \MINOtype{3}{4})

\begin{equation}\label{classtriv}
\tikzfig{Diagrams/classtriv3i4o}.
\end{equation}

One example of a theory that is cd and admits of deterministic dilations is (probabilistic) classical theory. In this case, we can show that the relevant minimal representatives are precisely the set where all internal elements are minimal elements of the minimal representative.
\begin{restatable}{prop}{classicalall}\label{classicalall} 
Let $R$ be a minimal representative such that all its internal elements are minimal elements of the minimal representative. Then, $R$ is a relevant minimal representative for probabilistic classical theory.
\end{restatable}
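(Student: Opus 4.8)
The plan is to prove relevance by exhibiting, for the given minimal representative $R$, one probabilistic classical process $\mathtt{F}_R$ which has an implementation of FPO type $R$ but whose entire set of minimal representatives $\mathcal{R}_{\mathtt{F}_R}$ is dominated by $R$ under $\succ$. Indeed, instantiating the definition of $\succ_{\mathcal{P}}$ at $\mathtt{F}_R$, any $R'$ with $R' \succ_{\mathcal{P}} R$ would satisfy both $R' \succ R$ and $R' \in \mathcal{R}_{\mathtt{F}_R}$, so domination forces $R' = R$, i.e.\ $R$ is maximal (in fact the maximum) under $\succ_{\mathcal{P}}$ and hence relevant. The first step is to read off the structure that being a minimal representative with all internal elements minimal forces on $R$: its minimal elements are exactly the inputs $\mathcal{I}$ together with the internal elements, all outputs $\mathcal{O}$ are maximal, and the only order relations are of the form $I_m < O_n$ or $z < O_n$ (nothing lies above an internal element except outputs, nothing lies below an internal element at all, and no two inputs or two outputs are comparable). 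Moreover each internal element $z$ must lie below a distinct set $D_z \subseteq \mathcal{O}$ of size at least two, since otherwise the FOP map sending $z$ to the unique output above it (or to another internal element below the same outputs), composed with the inclusion the other way, would show $R$ equivalent to a strictly smaller FPO, contradicting \cref{minrep}.

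Next I would define $\mathtt{F}_R$: give every input and output a bit; for each internal element $z$ draw an independent uniform bit $\lambda_z$; and let the output $O_n$ be the tuple listing $x_m$ for each $I_m < O_n$ and $\lambda_z$ for each $z < O_n$ (a constant bit if $O_n$ has nothing below it). This is manifestly a stochastic map, hence a process of probabilistic classical theory. For the forward direction I would give the implementation that prepares each $\lambda_z$ with a single source box emitting $|D_z|$ perfectly correlated uniform bits routed one per output of $D_z$, copies each input bit into as many copies as it has outputs above it using a single $k$-fold copy box, and assembles the wires entering each $O_n$ into the output with one box; applying $\mathcal{G}$, the assembly boxes merge into the output frame elements, the copy boxes merge into the input frame elements, and the source boxes survive as internal minimal elements with exactly the right relations, so $\mathcal{G}$ of this diagram equals $R$ and $R \in \mathcal{R}_{\mathtt{F}_R}$.

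The crux is the reverse direction: for \emph{any} implementation $i$ of $\mathtt{F}_R$ I would build a FOP map $S_R \to \mathcal{G}(i)$, giving $R \succ \mathcal{G}(i)$ and hence that $R$ dominates $\mathcal{R}_{\mathtt{F}_R}$. Keeping the frame fixed, two things must be checked. First, every edge $I_m < O_n$ of $R$ must be present in $\mathcal{G}(i)$: using the observation from the proof of \cref{embedimplication} that $i$ is trivially $\mathcal{C}$-locally embeddable into $\mathcal{M}:=\mathcal{G}(i)$ with $\mathcal{C}=\mathrm{id}$ on the frame, the no-signalling result of \cref{sec:EmbedAndSignalling} says that $I_m \not\leq O_n$ in $\mathcal{G}(i)$ would force $x_m$ not to signal to $O_n$, whereas in $\mathtt{F}_R$ the bit $x_m$ really does signal to $O_n$ whenever $I_m < O_n$. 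Second, each internal $z$ of $R$ must be sent to an element of $\mathcal{G}(i)$ below every output in $D_z$. Here I would invoke the classical fact that if a collection of output wires is, for every value of the inputs, deterministically post-processed to one and the same uniformly random bit (which is exactly the situation of the $\lambda_z$-strands of the outputs in $D_z$), then any classical diagram producing them has a single box in the common causal past of all of them; the short proof is that the diagram's wire distribution is a Bayesian network, so if the sets of ancestral source boxes of these outputs had empty intersection one could vary a source on which one strand depends nontrivially without touching another strand's sources, contradicting that the strands are forced equal (stochastic boxes are handled by passing to a deterministic dilation, which classical theory admits). That common-past box, or the frame element into which $\mathcal{G}$ absorbed it, is a valid target for $z$; since the internal elements of $R$ are pairwise incomparable these choices are unconstrained among themselves, so together with the frame they define the FOP map.

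I expect the main obstacle to be bookkeeping in the two reconstruction steps rather than any conceptual difficulty: on the forward side, checking meticulously that the two $\mathcal{G}$-identification steps leave no stray internal element and produce $R$ on the nose; and on the reverse side, the common-past lemma — stating and proving the genuinely multipartite (not merely pairwise) version, and tracking how it interacts with stochastic boxes, with the absorption of copy and assembly boxes under $\mathcal{G}$, and with the possibility that the common-past element is itself an input or output frame element. Both are routine given the definitions of $\mathcal{G}$ and of signalling, but they are where the care is needed.
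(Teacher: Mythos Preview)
Your approach is sound and the outline would go through, but it is a genuinely different route from the paper's.

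The paper argues by contradiction: assuming $R$ irrelevant, it picks a relevant $R'\succ_{\mathcal P} R$, invokes \cref{markovreduction} to assume $R'$ also has only minimal internal elements, and then splits into two pieces. For $\MINOtype{0}{N}$ it cites Proposition~6.9 of Evans~\cite{evans2016graphs}, which directly furnishes a classical distribution compatible with $R$ but not with $R'$. For general $\MINO$ it first matches the input--output edges of $R$ and $R'$ by a signalling argument (the process ``forward $I_i$ to $O_j$''), and then discards all inputs to reduce the remaining discrepancy in internal elements to the $\MINOtype{0}{N}$ case just handled.

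Your proof instead exhibits an explicit witness $\mathtt{F}_R$ and shows $R\succ\mathcal G(i)$ for every implementation $i$. The input--output part is the same signalling idea as the paper's. Your treatment of the internal elements, however, is a self-contained common-past (Reichenbach-type) argument: after a deterministic dilation, the shared uniform bit $\lambda_z$ is a deterministic function that factors through each $A_n$ and hence through $\bigcap_n A_n$, which must therefore contain a source; composing with the FOP map $\mathcal G(i')\succ\mathcal G(i)$ gives the target for $z$. This is correct, and in effect reproves the special case of Evans' result that the paper cites. What you gain is a self-contained, constructive proof that also establishes maximality under $\succ_{\mathcal P}$ directly (the paper literally proves ``not irrelevant'' and tacitly uses \cref{markovreduction} to rule out quasi-relevance). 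What the paper gains is brevity: by outsourcing the combinatorial core to \cite{evans2016graphs} and reducing via input-discarding, its argument is a few lines. Your acknowledged bookkeeping (the $\mathcal G$-absorption steps and the multipartite form of the common-past lemma) is exactly where the work hides, and it is where the paper's citation does the heavy lifting.
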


The proof of this statement can be found in \cref{relevancecq}.

In the introduction, we argued that no-signalling is not sufficient as a principle for embeddability. In the very restricted setting of deterministic classical theory our notion of embeddability matches the one under the no-signalling principle, that is processes are embeddable iff the no-signalling principle is satisfied. Note that this is already not the case anymore for probabilistic classical theory. For example, 
\begin{equation}
\tikzfig{Diagrams/splitminrep22} \quad \succ \quad \tikzfig{Diagrams/bellminrep2}
\end{equation}
are both relevant minimal representatives for probabilistic classical theory, hence, for either there exist processes for which these are the most useful minimal representatives. Both of these minimal representatives imply no-signalling between the two sides but the one on the LHS is easier to embed. Hence, there exist processes in probabilistic classical theory which have the same signalling relations but do not embed in the same way.

Finally, to illustrate the results of this section, we depict graphically how the relevant minimal representatives in cd theories with deterministic dilation compare to the ones for all terminal theories for the case $\MINOtype{2}{2}$, i.e., \cref{eq:2I2O},
\begin{equation}\label{eq:relevant2I2Oclass}
\tikzfig{Diagrams/relevant2I2Oclass}
\end{equation}
\subsection{Quantum theory}\label{sec:quantum}

\subsubsection{Classical processes in quantum theory}

In quantum theory, the story is, perhaps not surprisingly, significantly more complex than in classical theory. However, the fact that classical theory is a subtheory of quantum theory gives us some information about processes with classical inputs and outputs (i.e., the device-independent case). As we will argue in more detail below, it follows from this result that the device-independent processes in quantum theory can always be implemented with the minimal representative of \cref{classtriv} (or its generalisation to the arbitrary \MINO  case) and that the zigzag minimal representatives are never the most useful minimal representatives of a process in the $\MINOtype{1}{3}$ and $\MINOtype{2}{2}$ cases. Whether the same holds true for generalisations of the zigzags in the general $\MINO$ case remains an open question.

If a process $\mathtt{F}$ with classical inputs and outputs has an implementation corresponding to a minimal representative $R$ in classical theory, then the same is true in quantum theory. The converse does not hold, hence, in general $\mathtt{F}$ viewed as a classical process within quantum theory can have more useful minimal representatives than the ones it has when viewed as a classical processes. The most-well known example is a conditional probability distribution which violates a Bell inequality. The most useful minimal representative that implements this process in classical theory is the one-way communication minimal representative. From this it trivially follows that the one-way communication minimal representative can also be achieved in quantum theory. However, we know that we can also realize it with the Bell minimal representative. We can generally say that there exists a quantum advantage for a device-independent process if the process has a minimal representative when viewed as a process within quantum theory, which it does not have when viewed as a process within classical theory. 

From this fact, it also immediately follows that any device-independent process in quantum theory can be implemented by the minimal representative which consists only of the frame and where every input is in the past of every output (cf. \cref{classtriv}). For the \MINOtype{2}{2} case, this is the two-way communication minimal representative. Hence, in a slight abuse of our terminology, we could say that for $\MINOtype{2}{2}$ device-independent processes, the zigzags (as well as the trivial minimal representative) from \cref{eq:2I2O} are irrelevant. 

One can now wonder if the zigzags are relevant in the case of \MINOtype{1}{3}, since the trivial minimal representative
\begin{equation}
\tikzfig{Diagrams/1i3otriv}
\end{equation}
is already the least useful classically relevant minimal representative. In principle, it could be that there exists a probability distribution, $P(o_1, o_2, o_3 | i)$, where $o_1, o_2, o_3$ are the values of the random variables associated with the ouputs and $i$ is the value of the random variable associated with the input, only realisable in the trivial way in classical theory, but which can be realised with a zigzag in quantum theory, i.e.,
\begin{equation}
P(o_1, o_2, o_3 | i) \quad = \quad \tikzfig{Diagrams/1I3O1zigzag}
\end{equation}
if $\mathtt{s}, \mathtt{f}$ are allowed to be a quantum state and a quantum map respectively, but not if they are required to be stochastic maps.  However, notice that any such probability distribution cannot signal from the input $I$ to the output on the right $O_3$,
\begin{equation}
\tikzfig{Diagrams/1I3O1zigzagdiscard} \quad = \quad \tikzfig{Diagrams/1I3O1zigzagdiscard2}.
\end{equation}
Hence, the probability distribution must factorise according to\footnote{To see this let $P(ab|c)$ be a probability distribution such that $c$ does not signal to $b$, i.e., $\sum_a P(ab|c) = P(b)$. Let $P(c)$ be an arbitrary probability distribution for $c$. Note that $P(bc) = P(b|c) P(c) = P(b) P(c)$ by assumption. Then, $P(ab|c) P(c) = P(abc) = P(a|bc) P(bc) = P(a|bc) P(b) P(c)$. Since this holds for arbitrary $P(c)$, we have that $P(ab|c) = P(a|bc) P(b)$.}
\begin{equation}
P(o_1, o_2, o_3 | i) = P(o_1, o_2| o_3, i) P(o_3).
\end{equation}
However, this decomposition of the probability distribution defines a classical implementation according to the 1-zigzag
\begin{equation}
P(o_1, o_2, o_3 | i) \quad = \quad  \tikzfig{Diagrams/1I3Ozigzagclass}.
\end{equation}
Hence, any probability distribution which is implemented by the 1-zigzag in quantum theory is also implemented by the 1-zigzag in classical theory. But then, since the 1-zigzag (as well as all other zigzags) is irrelevant in classical theory, the process must have an implementation according to the minimal representative just above the zigzags,
\begin{equation}
\tikzfig{Diagrams/1I3Otriplecc} \quad \in \quad \mathcal{R}_{P(o_1, o_2, o_3 | i)}.
\end{equation}
This then also holds for quantum theory. Therefore, the zigzags are also irrelevant for device-independent processes in the case \MINOtype{1}{3}.

Note that this does not imply that the set of minimal representatives that is relevant for device-independent processes is (a subset of) the one that is relevant for classical theory. Indeed, the authors of \cite{centeno2024significance} give a counter-example for the \MINOtype{0}{4} minimal representative 
\begin{equation}
\tikzfig{Diagrams/danisminrep}.
\end{equation}
Since this minimal represenative has an internal element which is not minimal, it is not classically relevant. However, the authors show that there exists a probability distribution which in quantum theory can be implemented with this minimal representative, but there exists no implementation corresponding to a minimal representative that is both more useful and classically relevant.

All other minimal representatives from \cref{sec:simplemin} are relevant for device-independent processes. This follows from no-signalling requirements and arguments analogous to the one we used to show that all these minimal representatives are relevant in classical theory in \cref{classicalall}. 

\subsubsection{Zigzag minimal representatives are (quasi-)relevant}

Given the results of our discussion of device-independent processes, it is a natural question to ask whether the zigzags are irrelevant for quantum theory as a whole. This is not the case. Indeed, as it turns out and as we will show in the following, the quantum CNOT gate can be implemented using any zigzag minimal representative but not with the two-way communication minimal representative\footnote{Note that the classical CNOT gate can be implemented with the two-way communication minimal representative since it is a classical process. However, the classical CNOT gate is not the same as the quantum CNOT gate so our earlier comments about device-independent processes do not apply here.}. This implies that the zigzags are either all quasi-relevant or some subset (or even all of them) are relevant. 

Generally, we can show that all Clifford gates\footnote{A Pauli string is a state $P =  e^{i k \pi/2} \sigma_{i_0} \otimes ... \otimes \sigma_{i_n}$ for $k, i_0,...,i_n = 0,1,2,3$ and $n$ is arbitrary. The unitary $U: A \otimes B \rightarrow C \otimes D$ is a \MINOtype{2}{2} Clifford gate if for all Pauli strings $P_A \in A, P_B \in B$, there exist Pauli strings $P_C \in C, P_D \in D$ such that $U (P_A \otimes P_B) U^\dagger = P_C \otimes P_D$.} \cite{gottesman1998theory} can be implemented with all zigzag minimal representatives.

\begin{restatable}{prop}{cliffcasc}\label{cliffcasc}
In quantum theory, any \MINOtype{2}{2} unitary $U$ belonging to the Clifford group has an implementation associated with every zigzag minimal representative
\end{restatable}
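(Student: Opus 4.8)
The plan is to build the required implementations by \emph{gate teleportation}: we teleport the two inputs of $U$ into parts of a fixed resource state obtained from $U$, and then undo the Pauli byproduct operators. The key point is that when $U$ is Clifford, the byproduct produced by teleporting a Pauli-string error through the Choi state of $U$ is again the Pauli string $U(P_X\otimes P_Y)U^\dagger$, which by the very definition of a $\MINOtype{2}{2}$ Clifford gate factorises as $P_A\otimes P_B$ with $P_A$ a Pauli on the left output and $P_B$ a Pauli on the right output. Thus the only nonlocal data that has to be moved around in order to correct the outputs is a bounded classical register recording the two teleportation outcomes, and the final correction on each output is a single-system Pauli that can be applied ``at'' that output --- precisely the kind of information flow that a zigzag order accommodates, and which is \emph{not} available in a standard (e.g. common-cause or one-way) causal structure.

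Concretely, for the shortest zigzag I would take the following diagram. One internal box on the left ($A$-)side prepares the resource state (the Choi state of $U$, together with any extra maximally entangled pairs needed to chain teleportations), performs the Bell measurement between $X$ and its half of the resource, forwards its half of the resource together with the classical outcome to a box on the right ($B$-)side, and leaves the not-yet-corrected $A$ system for later. The right-hand box performs the Bell measurement between $Y$ and the resource; it now holds \emph{both} outcomes, so it computes $U(P_X\otimes P_Y)U^\dagger=P_A\otimes P_B$, applies $P_B^\dagger$ and outputs $B$, then sends the classical label of $P_A$ back across to a final box on the left, which applies $P_A^\dagger$ and outputs $A$. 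Because $U$ is Clifford every wire along this left--right--left path carries only a fixed-size register, and because $P_A,P_B$ are single-system Paulis the two boundary boxes genuinely produce their outputs locally. Applying $\mathcal{G}$ to this diagram --- in particular using the identification steps of \cref{G} that absorb the boxes adjacent to the frame --- one checks that it lands in the desired zigzag FPO type; for a general $n$-zigzag one simply inserts identity boxes (equivalently, trivial teleportations) along the wires of this path, lengthening the fence by the required amount without changing the process or anything else.

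Finally, since nothing in the construction used more than the hypotheses that $U$ is unitary and that conjugation by $U$ sends $P_X\otimes P_Y$ to a factorised Pauli $P_A\otimes P_B$ --- i.e.\ exactly the definition of a $\MINOtype{2}{2}$ Clifford gate --- the same argument applies verbatim to every such $U$, with CNOT the special case of interest. I expect the main obstacle to be the bookkeeping that shows the byproduct correction, which a priori depends on \emph{both} measurement outcomes at \emph{both} outputs, can be resolved by a single left-to-right-and-back pass of classical data along the fence, so that the resulting implementation genuinely fits inside the zigzag order rather than only inside the (strictly less useful) two-way-communication order; getting the $\mathcal{G}$-images to match the listed zigzag types on the nose, including the boundary identifications, is the other place where care is needed.
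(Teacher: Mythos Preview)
Your proposal has two genuine gaps, one in the base case and one in the inductive step.

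\textbf{The left--right--left construction does not land in the 1-zigzag.} With your three boxes, Box~1 has $X$ as its only parent and Box~3 has $A$ as its only child, so the identification steps of $\mathcal{G}$ absorb Box~1 into $X$ and Box~3 into $A$. The surviving internal element is Box~2, and because you forward the first Bell outcome from Box~1 to Box~2, we have $X<\text{Box~2}$ as well as $Y<\text{Box~2}$, together with $\text{Box~2}<A,B$. This is exactly the FPO with a single internal element sandwiched between both inputs and both outputs --- what the paper identifies in the proof of \cref{2I2O} as \emph{the least useful} (``trivial'') minimal representative, not the 1-zigzag. The 1-zigzag (the non-local-computation structure) requires the two Bell measurements to be \emph{unrelated} in the partial order, with both classical outcomes fanned out to both correction boxes; your sequential routing of the outcome collapses that structure.

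\textbf{Identity boxes (or trivial teleportations) do not lengthen the fence.} Inserting an identity on a wire, or a teleportation gadget, adds elements that sit on an existing chain; the resulting FPO admits an obvious non-injective FOP self-map back to the original, so its minimal representative is unchanged. In particular you never leave the equivalence class of the 1-zigzag this way, and you certainly do not reach the strictly more useful $n$-zigzags. The paper handles this with a genuinely recursive device (\cref{zigzagcascade}): it exhibits a 1-zigzag implementation in which $U$ itself appears as one of the boxes, so that substituting an $N$-zigzag implementation of $U$ for that box yields an $(N{+}1)$-zigzag. Your Choi-state construction hides $U$ inside the resource preparation and does not give you a sub-box to recurse on; even if you repair the base case, you still need either this self-referential form or some other mechanism that actually increases the minimal representative.
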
 

The proof can be found in \cref{relevancecq}. 

One way to look at this proposition is that we are using gate teleportation \cite{jozsa2006introduction} on the Clifford gates.

For the quantum CNOT gate, the zigzags are the most useful FPO types, proving that they are at least quasi-relevant, if not relevant.

\begin{restatable}{prop}{zigzagcnot}\label{zigzagcnot}
The \MINOtype{2}{2} quantum CNOT gate
\begin{equation}
\text{CNOT}(\ket{i} \otimes \ket{j}) = \ket{i} \otimes \ket{i \oplus j}
\end{equation}
has an implementation associated with every zigzag minimal representatives, but not the two-way communication minimal representative.
\end{restatable}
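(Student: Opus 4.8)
The plan is to treat the two conjuncts separately: the first is essentially a corollary of \cref{cliffcasc}, while the second requires a genuinely new argument.

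For the first part I would simply observe that CNOT is a \MINOtype{2}{2} unitary lying in the Clifford group (its conjugation action on the Pauli group is $X_1\mapsto X_1X_2$, $Z_1\mapsto Z_1$, $X_2\mapsto X_2$, $Z_2\mapsto Z_1Z_2$, all of which are again products of Paulis), so \cref{cliffcasc} applies verbatim and already furnishes an implementation associated with every zigzag minimal representative. If one wanted a self-contained argument here, one would spell out the gate-teleportation scheme underlying \cref{cliffcasc} in this special case: take the resource state to be CNOT applied to a pair of maximally entangled states, distribute one output leg and one half-Bell system to each of the two zigzag ``sides'', let each side teleport in its input by a Bell measurement, and use that (since CNOT is Clifford) the accumulated Pauli corrections are again products of Paulis and can be realised inside the local operations together with the extra entanglement that the zigzag structure makes available.

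For the second part I would argue by contradiction. Suppose CNOT had an implementation $i$ with $\mathcal{G}(i)$ at least as useful as the two-way communication minimal representative $R_{2w}$. By \cref{conversion} (downward-closure of $\mathcal{G}(\mathcal{I}_{\mathrm{CNOT}})/\!\cong$ under $\succ$) this is equivalent to CNOT having an implementation $i$ with $\mathcal{G}(i)\cong R_{2w}$, and by \cref{minrep} a labelling of $R_{2w}$ order-embeds into $\mathcal{G}(i)$. Coarse-graining the boxes of $i$ region by region, exactly as in the proof of \cref{primpossible}, then yields a decomposition of CNOT of the shape dictated by $R_{2w}$: CNOT written as a composite of operations local to the two sides, interleaved with message systems passed in the fixed causal order that $R_{2w}$ prescribes. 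Since CNOT is unitary, the discard-absorption of \cref{causalabsorb} lets us take every coarse-grained box to be an isometry, so the whole composite is an isometric bipartite circuit with a \emph{definite} causal order across the $A|B$ cut. The contradiction then comes from the Pauli-conjugation rules $\mathrm{CNOT}(X\otimes I)\mathrm{CNOT}^\dagger=X\otimes X$ and $\mathrm{CNOT}(I\otimes Z)\mathrm{CNOT}^\dagger=Z\otimes Z$: these show that CNOT spreads operators from the control side to the target side \emph{and} from the target side to the control side, and one shows that no fixed-order isometric bipartite circuit of the $R_{2w}$-type can realise both spreadings at once without decohering one of its two outputs — equivalently, CNOT is not in the image of the relevant ``one round in each direction'' LOCC-with-quantum-messages construction.

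The hard part will be exactly this last step: precisely characterising which \MINOtype{2}{2} unitaries admit an $R_{2w}$-shaped decomposition and verifying that CNOT is not among them. I expect this to reduce to a short, explicit no-go for a small bipartite circuit identity — most likely phrased via the Choi state of CNOT being maximally entangled across the $A|B$ cut, or via the two commuting ``logical'' actions $X_1X_2$ and $Z_1Z_2$ — but getting the book-keeping of message dimensions and the fixed causal order right is where the real work lies; everything before it is routine given \cref{primpossible,conversion,minrep}. It is worth flagging that this is precisely the place where the zigzags behave differently: they impose no fixed causal order between the two sides, which is exactly what lets the gate-teleportation corrections in the first part be handled ``simultaneously''.
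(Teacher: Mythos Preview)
Your first part is correct and matches the paper: CNOT is Clifford, so \cref{cliffcasc} gives the zigzag implementations directly.

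Your second part, however, has a genuine gap and a wrong step. First, the claim that ``the discard-absorption of \cref{causalabsorb} lets us take every coarse-grained box to be an isometry'' is incorrect: \cref{causalabsorb} only absorbs discard effects into boxes, it does not upgrade CPTP maps to isometries. The component maps in an $R_{2w}$-decomposition of a unitary need not be isometric, and purifying them introduces extra environment systems whose structure is exactly what has to be controlled. Second, your operator-spreading intuition misreads $R_{2w}$: the two-way communication minimal representative \emph{does} allow each input to influence both outputs (that is why it sits above the zigzags), so the conjugation rules $X_1\mapsto X_1X_2$ and $Z_2\mapsto Z_1Z_2$ are perfectly compatible with its coarse-grained signalling structure and do not by themselves yield a contradiction. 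You acknowledge that the ``real work lies'' in the final step, but as written there is no argument there, only a hope.

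The paper's approach is quite different and avoids these issues. It proves a separate lemma (\cref{evcond}) giving a necessary condition for an isometry $U$ to admit an $R_{2w}$-decomposition: the characteristic polynomials $p_{\psi,\phi}(\lambda)$ of the reduced output states $\mathrm{tr}_D\,U(\ket{\psi}\otimes\ket{\phi})$ must satisfy $p_{\psi,\phi}\,p_{\psi',\phi'}=p_{\psi',\phi}\,p_{\psi,\phi'}$ for all product inputs. The derivation purifies the component CPTP maps, uses that any purification of an isometry is product with a fixed environment state, and iterates this to extract the polynomial identity. CNOT then fails the test immediately: $\mathrm{CNOT}\ket{00}$, $\mathrm{CNOT}\ket{0+}$, $\mathrm{CNOT}\ket{++}$ are product, but $\mathrm{CNOT}\ket{+0}$ is maximally entangled, so the identity cannot hold. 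This is an entanglement-of-outputs obstruction rather than an operator-spreading one, and it is what actually closes the argument.
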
 

The proof can again be found in \cref{relevancecq}. 

An interesting implication of the above result is that any embedded implementation of the CNOT gate can be improved in the sense of our ordering $\succ$. That is if one builds a CNOT gate (i.e., implements and embeds it in spacetime), then there always exists another implementation higher up $\succ$ and thus embeddable into more spacetime configurations. This does, however, come at the cost of increasing the number of components (i.e., boxes or nodes).

Finally, the question remains whether the zigzags are relevant or just quasi-relevant. It could be that any quantum process, like the CNOT, which can be implemented by one zigzag can be implemented by all of them. Further, if the zigzags are indeed relevant, then  in the \MINOtype{2}{2} case the set of relevant minimal representative for quantum theory matches exactly with the one for terminal theories in general. An interesting question would then be if this holds for the general \MINOtype{M}{N} case as well.  We leave both of these questions as open problems. 

On another note, we can ask if there are any zigzag minimal representatives which are relevant or at least quasi-relevant for device-independent processes. As discussed earlier, there are minimal representatives which are irrelevant for classical theory, which are nevertheless relevant for device-independent processes, however, the known examples do not include zigzag structures. A process, like the CNOT gate, for which the zigzags are the best implementations, could serve as a building block for an implementation of a larger device-independent process. It might then not be possible to remove the zigzag structure from the implementation, without making it less useful. 

Let us again illustrate our results by comparing how the minimal representatives from \cref{eq:2I2O} appear in quantum theory,
\begin{equation}
\tikzfig{Diagrams/relevant2I2Oquant}.
\end{equation}
The fact that the trivial minimal representative is relevant follows from the so-called position-based cryptography protocol \cite{chandran2009position}. The protocol is supposed to verify that an agent is at the position they claim and can be abstractly viewed as a CPTP map with two inputs, where one input is a quantum state and the other is a classical bit determining the measurement basis and two outputs, each of which output the outcome of the measurement in the chosen measurement basis on the quantum state. It is possible to achieve this protocol approximately with the 1-zigzag, which means that two agents, sharing sufficient entanglement, can collaborate to spoof a different position. However, it is not possible to achieve the protocol exactly with the 1-zigzag \cite{buhrman2014position}. 
\subsubsection{Other decompositional works through the lens of minimal representatives}

Finally, to finish out this section, we will discuss a number of other works, analysing them from the lens of our framework and in particular relate them to our notion of relevant and irrelevant minimal representatives.

\

In Ref. \cite{dolev2019constraining}, the author argues that the only thing that matters for the realisability of a quantum circuit is its input/output ordering as long as one has enough entanglement available. Two quantum circuits (or implementations, FPOs, minimal representatives) have the same input/output ordering if the ordering relationships between inputs and outputs is the same in both. In particular, this implies that the trivial minimal representative for the \MINOtype{2}{2} case would be irrelevant as it can be realised with the 1-zigzag.  There is, however, a key caveat to this. Ref. \cite{dolev2019constraining} uses both regular quantum teleportation and port-based quantum teleportation \cite{ishizaka2008asymptotic} to convert circuits into the desired form. In port-based teleportation, the teleported state ends up in one of $N$ ports, say the port $i$, and the value of $i$ is only known to the sender, not the receiver. However, if $N$ is not infinite, this protocol can only work probabilistically (otherwise, this protocol would allow for signalling as the receiver can pick one of the $N$ ports at random and choose the right one with probability $1/N > 0$). Thus, with our implicit requirement for equalities to be exact, this reduction does not work. However, this does pose an interesting future direction that consists of defining a suitable notion of approximation for our framework. Another limitation is that  Ref. \cite{dolev2019constraining} essentially only considers the case where there exists some global common cause which needs to be in the past of all elements which are not inputs. Considering our minimal representatives for the \MINOtype{2}{2}, we see that this is the case only for the Bell minimal representative, the 1-zigzag and the trivial minimal representative. Hence, even though the input/output ordering of the two-way communication minimal representative is the same as that of all the zigzags and the trivial minimal representative, we cannot apply the results of \cite{dolev2019constraining} to them. 

\

Ref. \cite{lorenz2021causal} posits the conjecture that for every unitary, there exists a faithful circuit consisting of unitaries, that is to say the circuit matches the unitary's signalling structure (they prove this conjecture for the cases where there are either at most three inputs or three outputs or precisely four inputs and four outputs). In terms of minimal representatives, this would imply a unitary $U: I_1 \otimes ... \otimes I_N \rightarrow O_1 \otimes ... \otimes O_N$ has some minimal representative such that $O_i > I_j$ in the minimal representative iff $I_j$ signals to $O_i$. One may therefore wonder if the signalling structure is sufficient to decide embeddability for unitaries at least. This is not the case. The reason is that that many minimal representatives can have the same input/output ordering, hence, many minimal representatives are faithful for a unitary, but the conjecture only tells us that at least one of them is a valid implementation. Consider, for example, the trivial minimal representative, the zigzags and the two-way communication minimal representative in \MINOtype{2}{2}. All of these have the same input/output ordering. In this case, the conjecture tells us (trivially) that any unitary with two-way signalling has the trivial minimal representative as an implementation. However, as we have seen, the CNOT gate can be implemented by any zigzag and has thus more useful faithful decompositions. Generally, two unitaries with the same signalling structure can have different implementation sets. For example, the CNOT gate and the process $\ket{ij}^A \ket{kl}^B \mapsto \ket{ik}^C \ket{jl}^D$ both have two-way signalling but the latter can be implemented by the two-way communication minimal representative.

Furthermore, it could also be that $U$ has a minimal representative $S$ which is not faithful, but also not less useful than any of its faithful minimal representatives. Since signalling between an input-output pair implies that the pair is related in any implementation, we then have that $S$ has at least one input-output pair which does not signal in $U$ but is related in $S$. This implies that $S$ can also not be more useful than the faithful minimal representatives. Thus, $S$ would have to be unrelated to the faithful minimal representatives. 

Finally, a small caveat to add here is that \cite{lorenz2021causal} only make their claim if the circuit is a routed quantum circuit \cite{vanrietvelde2021routed}. In a routed quantum circuit, wires carry indices which correspond to subspaces of the wire. For example, wire $A$ might carry the index $i$, wire $B$ the index $ij$ and wire $C$ the index $j$. Then, if the state in $A$ is in the subspace corresponding to, say $i=0$, and the state in $C$ is in the subspace $j=1$ then the state on $B$ will also be in the subspace corresponding to $ij = 01$. Such a routed quantum circuit can always be converted into a circuit made up of CPTP maps with the same circuit structure.  

\

Ref. \cite{renner2023commuting} shows that if two CPTP maps $\mathtt{F}: A \otimes B \rightarrow C \otimes B$ and $\mathtt{G}: B \otimes D \rightarrow B \otimes E$ commute, i.e., they fulfill the first equality in \cref{commuting}, and the resulting map is unital (i.e., maps the maximally mixed state to itself), then the composition factorises, i.e., there exists some appropriate CPTP $\mathtt{H}: B \rightarrow B \otimes B$ such that the second equality in \cref{commuting} is fulfilled. 
\begin{equation}\label{commuting}
\tikzfig{Diagrams/commutingops1} \quad = \quad \tikzfig{Diagrams/commutingops2} \quad = \quad \tikzfig{Diagrams/commutingops3}
\end{equation}
Note that this does not imply that whenever a process has both one-way communication minimal representatives (cf. \cref{leftrightchannel}), then it also has the Bell minimal representative.
The reason is that in addition to the process having implementations associated to both minimal representatives, it is necessary that both implementations use the same maps $\mathtt{F}, \mathtt{G}$ as seen in \cref{commuting}. A concrete counter-example is the PR box, which in quantum theory can be implemented with either one-way communication minimal representative but we cannot use the same maps for both.

This is an interesting case as similar to the case of \cref{zigzagcascade}, information about how exactly the process decomposes into less useful minimal representatives allows us to conclude that there must also be a more useful minimal representative.

\section{Conclusions}
\subsection{Summary}
In this work, we have developed a framework for studying process theories in a fixed background spacetime. We have argued that only relying on the signalling structure of processes for this purpose does not necessarily give us a sufficient condition for embeddability of processes into spacetime. Indeed, as we have shown this is not even the case in classical theory, unless we only consider deterministic processes. Instead, one must use the decompositional structure of processes. We then related the question of embeddability to a purely order-theoretic concept, defining a preorder on the framed partial orders associated to the decompositions of processes and showing the existence of minimal representatives. The latter tell us everything we need to know, not just about embeddability but also the decompositional structure of processes. Hence, in this work, order theory has proved a key tool for studying the decomposition of processes in process theories. 

We have also considered which minimal representatives are irrelevant when restricting to specific (classes of) process theories, due to there existing a more useful minimal representative that is an implementation whenever the irrelevant minimal representative is an implementation. We found all the relevant minimal representatives for terminal theories for zero or one input(s) and less than three outputs as well as for two inputs and two outputs. In particular, we found that for one input and three outputs as well as two inputs and two outputs, there is an infinite number of minimal representatives, which we called zigzags, whose causal structure has not been studied before. We showed that for theories with an appropriate notion of copying all minimal representatives which have internal non-minimal elements are irrelevant and for quantum theory we showed that the zigzag minimal representatives are not irrelevant. 

\subsection{Discussion}

Our framework gives us a clearer picture on how (non-)classicality depends critically on the spacetime embedding of processes. The interesting thing about Bell correlations or PR boxes is not, strictly speaking, the correlations themselves but that they can be achieved with the parties spacelike separated in some theories, but not in others. In our framework, the implementation sets of processes encode this theory-dependence. Additionally, the generality of our framework would allow us to go beyond the standard Bell scenario. In general, we can say that there is a quantum advantage for a process which is present in both classical and quantum theory if it has a more useful minimal representative in quantum theory. An analogous statement can be made for post-quantum advantages.

In general (and at least in quantum theory), the task of deciding embeddability is a computationally undecidable task as it subsumes a known undecidable task, namely deciding whether some correlations have a Bell realisation \cite{fu2021membership}. Various computational techniques, for example, from classical and quantum causal modelling may help here. Additionally, it is also likely that finding minimal representatives and in particular verifying that all minimal representatives have been found is similarly a difficult task, which may evade an analytical solution, but more research is required. 

The next step in this research programme is to define the compositional structure of process theories in spacetime. This quickly becomes quite involved when the embedded processes are ``overlapping'' in the sense that they cannot be viewed as strictly timelike or spacelike separated. For example, we can have situations where parts of one process are embedded in the future of another whilst other parts are embedded in its past.  We believe that the formalism for generalised process theories developed in Ref. \cite{selby2025generalised} will provide a suitable framework for studying this kind of compositional structure.

Many of our techniques could be further generalized, for example, one could replace localisation of endpoints with some other constraint and minimal representatives would still exist. This would potentially allow our framework to be extended to other interesting settings, like indefinite causal order processes in a background spacetime, with the right constraint.

In this work, we have assumed that processes can be localised to an arbitrary degree. In practice, the experimental setups which these processes represent will take up some finite spacetime volume and, unless these volumina are small compared to the distances in space and time between them, this assumption is not well-founded. To account for this fully, we would likely have to go beyond our current framework, which models spacetime as a partial order, for example, by considering spacetime to consist of regions which can be split and combined. Each box in an implementation would then be associated to a region of an appropriate volume. To avoid issues like Sorkin's paradox \cite{sorkin1993impossible} we might require that our embedding is still order-preserving, in the sense that if $\mathtt{f} \leq \mathtt{g}$ in the FPO, then the region associated to $\mathtt{f}$ lies fully in the past lightcone of the region associated to $\mathtt{g}$. In the case where disjoint regions are far apart or more accurately that the overlap of a region with the lightcone of another disjoint region is either the whole region or the empty set this simplifies again to our current setup.

Similarly, one could attempt to identify the set of relevant minimal representatives for other interesting collections of process theories than the ones we considered in \cref{sec:thspec}.

Another avenue for further research is studying what is the effect of weakening our embeddability condition again. In particular, one could study the spacelike extended PR boxes discussed in \cref{sec:embedPR}). To this end, one could take quantum theory in spacetime but then supplement it with such PR boxes. The advantage of using our framework is that there is a clear distinction between the ``normal'', i.e., embeddable via decomposition, and the ``exotic'', i.e., embeddable by some other or even unknown condition, processes.

As pointed out in \cref{sec:quantum}, extending the framework such that implementations only need to approximate the process would allow us to use the results of \cite{dolev2019constraining} to reduce the set of minimal representatives for quantum theory, for example, making all trivial \MINOtype{M}{N} minimal representative for $M, N > 1$ irrelevant. It would also be interesting to see if there are any other cases of minimal representatives besides the trivial minimal representative that are relevant in ``exact'' quantum theory, but become irrelevant in ``approximate'' quantum theory. An immediate question in that regard is whether the quantum CNOT gate can be approximately implemented with the two-way communication minimal representative, or more generally, if any quantum process that has all the zigzag minimal representatives can be approximately implemented with the two-way communication channel. A promising path in that direction would be checking \cref{evcond} for its robustness against small errors. Additionally, allowing approximations would also go nicely with our intuition to view implementations as experimental realisations of an abstract protocol, which are necessarily subject to error.

This framework may also lead to new protocols, or a cleaner formalisation of existing protocols, in relativistic quantum information processing. The zigzag minimal representative are again a potentially useful starting point for such a project. What is more, as these are irrelevant in classical theory, processes involving these minimal representatives have a strong chance to exhibit interesting non-classical features. The general \MINOtype{M}{N} case then also promises to hold even more exotic causal structures.

Minimal representatives may also be a useful tool in the field of causal modelling, in particular, for quantum (and post-quantum) causal modelling. An important open question in this field is which causal models are ``interesting'' if some of the nodes in a causal model are allowed to be quantum (cf, for example, \cite{henson2014theory}). A minimal representative can essentially be viewed as a causal model (up to the ambiguity of which graph a partial order corresponds to). They can thus serve as a starting point to find such interesting causal models. A first taste of this is our example of the quantum CNOT gate and the zigzag minimal representatives. This example also shows a need to go beyond the device-independent setting, as the zigzag minimal representatives only become ``interesting'' when we allow quantum inputs and outputs. However, it is likely that for more inputs and outputs we will find an even richer structure and, in particular, we might find device-independent protocols that require generalisations of the zigzags for their causal models. 

Another recent and independent work proposing an order-theoretic framework to study decompositions of processes is that of Ref. \cite{van2025order}. Indeed, the author there defines the concepts of circuits and morphisms which capture the same idea as our FPOs and FOP maps respectively. Further, the author essentially shows that circuits/FPOs arising as the so-called concept lattice of an input/output ordering are always minimal representatives and moreover they are the least useful minimal representative with that input/output ordering (this makes it the minimal representative for the most processes with a given input/output ordering, which is what the author is interested in). Applying this to the \MINOtype{2}{2} case would yield the trivial minimal representative and the two one-way communication minimal representatives (as well as some more minimal representatives which are not relevant in terminal theories) but not the Bell minimal representative, the two-way communication or the zigzags. The Bell minimal representative could however be obtained fairly straightforwardly from one of these ``concept lattice'' minimal representatives which is not relevant in terminal theories by removing its maximal internal elements. It would be interesting to see if the zigzags could be viewed as arising, not as a concept lattice of course, but some appropriately generalised mathematical structure.

\

{\it Acknowledgements---} 
MS is supported by the National Science Centre, Poland (Opus project, Categorical Foundations of the Non-Classicality of Nature, project no. 2021/41/B/ST2/03149). JHS was funded by the European Commission by the QuantERA project ResourceQ under the grant agreement UMO-2023/05/Y/ST2/00143. JHS conducted part of this research while visiting the Okinawa Institute of Science and Technology (OIST) through the Theoretical Sciences Visiting Program (TSVP). The authors would like to thank Tein van der Lugt and Augustin Vanrietvelde for insightful discussions about decompositional aspects of quantum theory (in particular, for suggesting the quantum CNOT gate as a potentially interesting process) as well as Elie Wolfe, Daniel Centeno and Marina Maciel Ansanelli for interesting exchanges regarding the relationship between this framework and causal modelling and the anonymous referees for Quantum who provided countless valuable suggestions to improve the manuscript. 

\bibliographystyle{unsrtnat}
\bibliography{generic2}

@inproceedings{timetravel,
  title={Categorical semantics for time travel},
  author={Pinzani, Nicola and Gogioso, Stefano and Coecke, Bob},
  booktitle={2019 34th Annual ACM/IEEE Symposium on Logic in Computer Science (LICS)},
  pages={1--20},
  year={2019},
  organization={IEEE},
  doi={10.1109/LICS.2019.8785670},
  url={https://doi.org/10.1109/LICS.2019.8785670}
}

@article{vilasini2019composable,
   title={Composable security in relativistic quantum cryptography},
   volume={21},
   ISSN={1367-2630},
   url={http://dx.doi.org/10.1088/1367-2630/ab0e3b},
   DOI={10.1088/1367-2630/ab0e3b},
   number={4},
   journal={New Journal of Physics},
   publisher={IOP Publishing},
   author={Vilasini, V and Portmann, Christopher and del Rio, Lídia},
   year={2019},
   pages={043057} }

@article{prbox,
  title={Quantum nonlocality as an axiom},
  author={Popescu, Sandu and Rohrlich, Daniel},
  journal={Foundations of Physics},
  volume={24},
  pages={379--385},
  year={1994},
  doi={10.1007/BF02058098},
  url={https://doi.org/10.1007/BF02058098}
}

@article{causalbox,
  title={Causal boxes: quantum information-processing systems closed under composition},
  author={Portmann, Christopher and Matt, Christian and Maurer, Ueli and Renner, Renato and Tackmann, Bj{\"o}rn},
  journal={IEEE Transactions on Information Theory},
  volume={63},
  number={5},
  pages={3277--3305},
  year={2017},
  doi={10.1109/TIT.2017.2675482},
  url={https://doi.org/10.1109/TIT.2017.2675482}
}

@article{malament1977,
  title={The class of continuous timelike curves determines the topology of spacetime},
  author={Malament, David B},
  journal={Journal of Mathematical Physics},
  volume={18},
  number={7},
  pages={1399--1404},
  year={1977},
  doi={10.1063/1.523436},
  url={https://doi.org/10.1063/1.523436}
}

@article{gogioso2021functorial,
  title={Functorial evolution of quantum fields},
  author={Gogioso, Stefano and Stasinou, Maria E and Coecke, Bob},
  journal={Frontiers in Physics},
  volume={9},
  pages={534265},
  year={2021},
  doi={10.3389/fphy.2021.534265},
  url={https://doi.org/10.3389/fphy.2021.534265}
}

@article{dariano2014determinism,
  title={Determinism without causality},
  author={D’Ariano, G. M. and Manessi, Franco and Perinotti, Paolo},
  journal={Physica Scripta},
  volume={2014},
  number={T163},
  pages={014013},
  year={2014},
  doi={10.1088/0031-8949/2014/T163/014013},
  url={https://doi.org/10.1088/0031-8949/2014/T163/014013}
}

@article{kent2012unconditionally,
  title={Unconditionally secure bit commitment by transmitting measurement outcomes},
  author={Kent, Adrian},
  journal={Physical Review Letters},
  volume={109},
  number={13},
  pages={130501},
  year={2012},
  doi={10.1103/PhysRevLett.109.130501},
  url={https://doi.org/10.1103/PhysRevLett.109.130501}
}

@article{kaniewski2013secure,
   title={Secure Bit Commitment From Relativistic Constraints},
   volume={59},
   ISSN={1557-9654},
   url={http://dx.doi.org/10.1109/TIT.2013.2247463},
   DOI={10.1109/tit.2013.2247463},
   number={7},
   journal={IEEE Transactions on Information Theory},
   publisher={Institute of Electrical and Electronics Engineers (IEEE)},
   author={Kaniewski, Jedrzej and Tomamichel, Marco and Hanggi, Esther and Wehner, Stephanie},
   year={2013},
   pages={4687–4699} }

@article{lunghi2015practical,
  title={Practical relativistic bit commitment},
  author={Lunghi, Tommaso and Kaniewski, Jedrzej and Bussi{\`e}res, Felix and Houlmann, Raphael and Tomamichel, Marco and Wehner, Stephanie and Zbinden, Hugo},
  journal={Physical Review Letters},
  volume={115},
  number={3},
  pages={030502},
  year={2015},
  doi={10.1103/PhysRevLett.115.030502},
  url={https://doi.org/10.1103/PhysRevLett.115.030502}
}

@article{adlam2015deterministic,
   title={Deterministic relativistic quantum bit commitment},
   volume={13},
   ISSN={1793-6918},
   url={http://dx.doi.org/10.1142/S021974991550029X},
   DOI={10.1142/s021974991550029x},
   number={05},
   journal={International Journal of Quantum Information},
   publisher={World Scientific Pub Co Pte Lt},
   author={Adlam, Emily and Kent, Adrian},
   year={2015},
   pages={1550029} }

@article{buhrman2014position,
  title={Position-based quantum cryptography: Impossibility and constructions},
  author={Buhrman, Harry and Chandran, Nishanth and Fehr, Serge and Gelles, Ran and Goyal, Vipul and Ostrovsky, Rafail and Schaffner, Christian},
  journal={SIAM Journal on Computing},
  volume={43},
  number={1},
  pages={150--178},
  year={2014},
  doi={10.1137/130913687},
  url={https://doi.org/10.1137/130913687}
}

@article{selby2021reconstructing,
  title={Reconstructing quantum theory from diagrammatic postulates},
  author={Selby, John H and Scandolo, Carlo Maria and Coecke, Bob},
  journal={Quantum},
  volume={5},
  pages={445},
  year={2021},
  doi={10.22331/q-2021-04-28-445},
  url={https://doi.org/10.22331/q-2021-04-28-445}
}

@inproceedings{unruh2014quantum,
  title={Quantum position verification in the random oracle model},
  author={Unruh, Dominique},
  booktitle={Advances in Cryptology -- CRYPTO 2014},
  subtitle={34th Annual Cryptology Conference, Santa Barbara, CA, USA, August 17--21, 2014, Proceedings, Part II},
  series={Lecture Notes in Computer Science},
  volume={8617},
  pages={1--18},
  publisher={Springer},
  year={2014},
  doi={10.1007/978-3-662-44381-1_1},
  url={https://doi.org/10.1007/978-3-662-44381-1_1}
}

@incollection{coecke2015categorical,
  title={Categorical quantum mechanics I: causal quantum processes},
  author={Coecke, Bob and Kissinger, Aleks},
  booktitle={Categories for the Working Philosopher},
  pages={286--328},
  publisher={Oxford University Press},
  year={2017},
  doi={10.1093/oso/9780198748991.003.0012},
  url={https://doi.org/10.1093/oso/9780198748991.003.0012}
}

@inproceedings{coecke2018picturing,
  title={Picturing quantum processes},
  subtitle={A first course on quantum theory and diagrammatic reasoning},
  author={Coecke, Bob and Kissinger, Aleks},
  booktitle={Diagrammatic Representation and Inference},
  subtitleaddon={10th International Conference, Diagrams 2018, Edinburgh, UK, June 18--22, 2018, Proceedings},
  series={Lecture Notes in Computer Science},
  volume={10871},
  pages={28--31},
  publisher={Springer},
  year={2018},
  doi={10.1007/978-3-319-91376-6_6},
  url={https://doi.org/10.1007/978-3-319-91376-6_6}
}

@article{gogioso2017categorical,
   title={Categorical Probabilistic Theories},
   volume={266},
   ISSN={2075-2180},
   url={http://doi.org/10.4204/EPTCS.266.23},
   DOI={10.4204/eptcs.266.23},
   journal={Electronic Proceedings in Theoretical Computer Science},
   publisher={Open Publishing Association},
   author={Gogioso, Stefano and Scandolo, Carlo Maria},
   year={2018},
   pages={367–385} }

@article{chiribella2010probabilistic,
  title={Probabilistic theories with purification},
  author={Chiribella, Giulio and D’Ariano, Giacomo Mauro and Perinotti, Paolo},
  journal={Physical Review A},
  volume={81},
  number={6},
  pages={062348},
  year={2010},
  doi={10.1103/PhysRevA.81.062348},
  url={https://doi.org/10.1103/PhysRevA.81.062348}
  }

@article{coecke2014terminality,
   title={Terminality implies non-signalling},
   volume={172},
   ISSN={2075-2180},
   url={doi.org/10.4204/EPTCS.172.3},
   DOI={10.4204/eptcs.172.3},
   journal={Electronic Proceedings in Theoretical Computer Science},
   publisher={Open Publishing Association},
   author={Coecke, Bob},
   year={2014},
   pages={27–35} }

@article{coecke2013causal,
  title={Causal categories: relativistically interacting processes},
  author={Coecke, Bob and Lal, Raymond},
  journal={Foundations of Physics},
  volume={43},
  pages={458--501},
  year={2013},
  doi={10.1007/s10701-012-9646-8},
  url={https://doi.org/10.1007/s10701-012-9646-8}
}

@article{marcovitch2007quantum,
  title={Quantum-mechanical realization of a Popescu-Rohrlich box},
  author={Marcovitch, Samuel and Reznik, Benni and Vaidman, Lev},
  journal={Physical Review A},
  volume={75},
  number={2},
  pages={022102},
  year={2007},
  doi={10.1103/PhysRevA.75.022102},
  url={https://doi.org/10.1103/PhysRevA.75.022102}
}

@article{cho2019disintegration,
  title={Disintegration and Bayesian inversion via string diagrams},
  author={Cho, Kenta and Jacobs, Bart},
  journal={Mathematical Structures in Computer Science},
  volume={29},
  number={7},
  pages={938--971},
  year={2019},
  doi={10.1017/S0960129518000488},
  url={https://doi.org/10.1017/S0960129518000488}
}

@article{ishizaka2008asymptotic,
  title={Asymptotic teleportation scheme as a universal programmable quantum processor},
  author={Ishizaka, Satoshi and Hiroshima, Tohya},
  journal={Physical Review Letters},
  volume={101},
  number={24},
  pages={240501},
  year={2008},
  doi={10.1103/PhysRevLett.101.240501},
  url={https://doi.org/10.1103/PhysRevLett.101.240501}
}

@article{dolev2019constraining,
  title={Constraining the doability of relativistic quantum tasks},
  author={Dolev, Kfir},
  year={2019},
  eprint={1909.05403},
  archivePrefix={arXiv},
  primaryClass={quant-ph},
  url={https://arxiv.org/abs/1909.05403}
}

@article{lorenz2021causal,
   title={Causal and compositional structure of unitary transformations},
   volume={5},
   ISSN={2521-327X},
   url={http://dx.doi.org/10.22331/q-2021-07-28-511},
   DOI={10.22331/q-2021-07-28-511},
   journal={Quantum},
   publisher={Verein zur Forderung des Open Access Publizierens in den Quantenwissenschaften},
   author={Lorenz, Robin and Barrett, Jonathan},
   year={2021},
   pages={511} }

@article{vanrietvelde2021routed,
   title={Routed quantum circuits},
   volume={5},
   ISSN={2521-327X},
   url={http://dx.doi.org/10.22331/q-2021-07-13-503},
   DOI={10.22331/q-2021-07-13-503},
   journal={Quantum},
   publisher={Verein zur Forderung des Open Access Publizierens in den Quantenwissenschaften},
   author={Vanrietvelde, Augustin and Kristjánsson, Hlér and Barrett, Jonathan},
   year={2021},
   pages={503} }

@article{renner2023commuting,
  title={Commuting operations factorise},
  author={Renner, Renato and Wolf, Ramona},
  year={2023},
  eprint={2308.05792},
  archivePrefix={arXiv},
  primaryClass={quant-ph},
  url={https://arxiv.org/abs/2308.05792}
}

@article{barrett2007information,
  title={Information processing in generalized probabilistic theories},
  author={Barrett, Jonathan},
  journal={Physical Review A},
  volume={75},
  number={3},
  pages={032304},
  year={2007},
  doi={10.1103/PhysRevA.75.032304},
  url={https://doi.org/10.1103/PhysRevA.75.032304}
}

@article{kissinger2017equivalence,
  title={Equivalence of relativistic causal structure and process terminality},
  author={Kissinger, Aleks and Hoban, Matty and Coecke, Bob},
  year={2017},
  eprint={1708.04118},
  archivePrefix={arXiv},
  primaryClass={quant-ph},
  url={https://arxiv.org/abs/1708.04118}
}

@article{oreshkov2019time,
   title={Time-delocalized quantum subsystems and operations: on the existence of processes with indefinite causal structure in quantum mechanics},
   volume={3},
   ISSN={2521-327X},
   url={http://dx.doi.org/10.22331/q-2019-12-02-206},
   DOI={10.22331/q-2019-12-02-206},
   journal={Quantum},
   publisher={Verein zur Forderung des Open Access Publizierens in den Quantenwissenschaften},
   author={Oreshkov, Ognyan},
   year={2019},
   pages={206} }

@article{centeno2024significance,
  title={On the Significance of Intermediate Latents: Distinguishing Quantum Causal Scenarios with Indistinguishable Classical Analogs},
  author={Centeno, Daniel and Wolfe, Elie},
  year={2024},
  eprint={2412.10238},
  archivePrefix={arXiv},
  primaryClass={quant-ph},
  url={https://arxiv.org/abs/2412.10238}
}

@article{gottesman1998theory,
  title={Theory of fault-tolerant quantum computation},
  author={Gottesman, Daniel},
  journal={Physical Review A},
  volume={57},
  number={1},
  pages={127},
  year={1998},
  doi={10.1103/PhysRevA.57.127},
  url={https://doi.org/10.1103/PhysRevA.57.127}
}

@article{henson2014theory,
  title={Theory-independent limits on correlations from generalized Bayesian networks},
  author={Henson, Joe and Lal, Raymond and Pusey, Matthew F},
  journal={New Journal of Physics},
  volume={16},
  number={11},
  pages={113043},
  year={2014},
  doi={10.1088/1367-2630/16/11/113043},
  url={https://doi.org/10.1088/1367-2630/16/11/113043}
}

@article{fong2018seven,
  title={Seven sketches in compositionality: An invitation to applied category theory},
  author={Fong, Brendan and Spivak, David I},
  year={2018},
  eprint={1803.05316},
  archivePrefix={arXiv},
  primaryClass={math.CT},
  url={https://arxiv.org/abs/1803.05316}
}

@article{selby2025generalised,
      title={Generalised Process Theories}, 
      author={John H. Selby and Maria E. Stasinou and Matt Wilson and Bob Coecke},
      year={2025},
      eprint={2502.10368},
      archivePrefix={arXiv},
      primaryClass={math.CT},
      url={https://arxiv.org/abs/2502.10368}, 
}

@article{kaszlikowski2000violations,
  title={Violations of local realism by two entangled N-dimensional systems},
  author={Kaszlikowski, Dagomir and others},
  journal={Physical Review Letters},
  volume={85},
  number={21},
  pages={4418},
  year={2000},
  doi={10.1103/PhysRevLett.85.4418},
  url={https://doi.org/10.1103/PhysRevLett.85.4418}
}

@article{zukowski1999strengthening,
  title={Strengthening the Bell theorem: conditions to falsify local realism in an experiment},
  author={Zukowski, Marek and Kaszlikowski, Dagomir and Baturo, Adam and Larsson, Jan-{\AA}ke},
  eprint={9910058},
  year={1999},
  archivePrefix={arXiv},
      primaryClass={quant-ph},
      url={https://arxiv.org/abs/quant-ph/9910058}, 
}

@article{carette2022propification,
  title={Propification and the Scalable Comonad},
  author={Carette, Titouan},
  year={2022},
  eprint={2205.07760},
  archivePrefix={arXiv},
  primaryClass={math.CT},
  url={https://arxiv.org/abs/2205.07760}
}

@article{patterson2021wiring,
   title={Wiring diagrams as normal forms for computing in symmetric monoidal categories},
   volume={333},
   ISSN={2075-2180},
   url={http://doi.org/10.4204/EPTCS.333.4},
   DOI={10.4204/eptcs.333.4},
   journal={Electronic Proceedings in Theoretical Computer Science},
   publisher={Open Publishing Association},
   author={Patterson, Evan and Spivak, David I. and Vagner, Dmitry},
   year={2021},
   pages={49–64} }

@book{mac1998categories,
  title={Categories for the Working Mathematician},
  author={Mac Lane, Saunders},
  year={1998},
  edition={2},
  publisher={Springer},
  doi={10.1007/978-1-4757-4721-8},
  url={https://doi.org/10.1007/978-1-4757-4721-8}
}

@article{vilasini2022embedding,
   title={Embedding cyclic information-theoretic structures in acyclic space-times: No-go results for indefinite causality},
   volume={110},
   ISSN={2469-9934},
   url={http://dx.doi.org/10.1103/PhysRevA.110.022227},
   DOI={10.1103/physreva.110.022227},
   number={2},
   journal={Physical Review A},
   publisher={American Physical Society (APS)},
   author={Vilasini, V. and Renner, Renato},
   year={2024}}

@article{huot2019universal,
   title={Universal Properties in Quantum Theory},
   volume={287},
   ISSN={2075-2180},
   url={http://doi.org/10.4204/EPTCS.287.12},
   DOI={10.4204/eptcs.287.12},
   journal={Electronic Proceedings in Theoretical Computer Science},
   publisher={Open Publishing Association},
   author={Huot, Mathieu and Staton, Sam},
   year={2019},
   pages={213–223} }

@phdthesis{fu2021membership,
  title={The membership problem for constant-sized quantum correlations is undecidable},
  author={Fu, Hong Hao},
  institution={University of Maryland, College Park},
  year={2021}
}

@article{sorkin1993impossible,
  title={Impossible measurements on quantum fields},
  author={Sorkin, Rafael D.},
  year={1993},
  eprint={gr-qc/9302018},
  archivePrefix={arXiv},
  primaryClass={gr-qc},
  url={https://arxiv.org/abs/gr-qc/9302018}
}

@article{van2025order,
  title={An order-theoretic circuit syntax and characterisation of the concept lattice},
  author={van der Lugt, Tein},
  year={2025},
  eprint={2507.05428},
  archivePrefix={arXiv},
  primaryClass={math.CT},
  url={https://arxiv.org/abs/2507.05428}
}

@article{evans2018margins,
  title={Margins of discrete Bayesian networks},
  author={Evans, Robin J},
  journal={The Annals of Statistics},
  volume={46},
  number={6A},
  pages={2623--2656},
  year={2018},
  publisher={JSTOR},
  doi={10.1214/17-AOS1631},
  url={https://doi.org/10.1214/17-AOS1631},
}

@inproceedings{chandran2009position,
  title={Position based cryptography},
  author={Chandran, Nishanth and Goyal, Vipul and Moriarty, Ryan and Ostrovsky, Rafail},
  booktitle={Annual International Cryptology Conference},
  pages={391--407},
  year={2009},
  publisher={Springer},
  doi={10.1007/978-3-642-03356-8_23},
  url={https://doi.org/10.1007/978-3-642-03356-8_23},
}

@article{jozsa2006introduction,
  title={An introduction to measurement based quantum computation},
  author={Jozsa, Richard},
  journal={NATO Science Series, III: Computer and Systems Sciences. Quantum Information Processing-From Theory to Experiment},
  volume={199},
  pages={137--158},
  year={2006}
}


\appendix

\section{Introduction to process theories}\label{app:PTs}

In this appendix, we will expand upon the description of process theories which we gave in the main text. As mentioned there, a process theory is a collection of systems (which we can graphically represent as labelled wires) and processes (which we can graphically represent as labelled boxes) and rules on how to compose them. Particularly important classes of processes are states, which are processes without input wires, effects, which are processes without output wires, the identity process on a system which corresponds to doing nothing and the swap which corresponds to the interchange of two systems
\begin{equation}
\tikzfig{Diagrams/state1} \quad \tikzfig{Diagrams/effect} \quad \tikzfig{Diagrams/tensorsystemstriv2} \quad \tikzfig{Diagrams/swap}.
\end{equation}
The underlying mathematical formalism is that of symmetric monoidal categories (SMC), although, we note that in our case it would be more accurate to say that the underlying mathematical formalism is that of (coloured) Props \cite{carette2022propification,patterson2021wiring,selby2025generalised}. This is because in our framework we care about how the input and output systems of a process are decomposed into subsystems. Below we first give an introduction to SMCs and then say how this relates to process theories. At the end of the section, we give an introduction to Props.

A category $\mathcal{C}$ consists of a collection of objects, commonly denote by $|\mathcal{C}|$ and morphisms between these objects $f: A \rightarrow B \in \mathcal{C}$ for $A, B \in |\mathcal{C}|$. A standard example of a category is $\mathbf{Set}$, where the objects are sets and the morphisms are functions between the sets. A monoidal category $(\mathcal{C}, \otimes, I)$ consists of a category $\mathcal{C}$ equipped with a bifunctor $\otimes: \mathcal{C} \otimes \mathcal{C} \rightarrow \mathcal{C}$, called the monoidal product, together with a distinguished unit object $I \in |C|$ which satisfies associativity and unitality
\begin{gather}\label{eq:assouni}
\begin{aligned}
&f \otimes (g \otimes h) \cong  (f \otimes g) \otimes h \\
&A \otimes I \cong I \otimes A \cong A 
\end{aligned}
\end{gather}
for all objects $A \in |C|$ and all morphisms $f, g, h \in \mathcal{C}$ up to isomorphism (denoted by $\cong$). Further, it satisfies the interchange law
\begin{equation}\label{eq:interchange}
(g \circ f) \otimes (k \circ h) \cong (g \otimes k) \circ (f \otimes h)
\end{equation}
for all morphisms $f, g, h, k \in \mathcal{C}$ up to isomorphism. A strict monoidal category is a monoidal category where the above isomorphisms are in fact equalities. It is always possible for any monoidal category to find a strict monoidal category which is equivalent to the former \cite{mac1998categories}, hence, we can without any real loss of generality restrict to the case of strict monoidal categories.

A symmetric strict monoidal category is then one where $\otimes$ also satisfies as a symmetry condition, i.e., there exist for all systems $A, B \in |\mathcal{C}|$ an isomorphism (called the swap) $\sigma_{AB}: A \otimes B \rightarrow B \otimes A$ such that $\sigma_{AB}^{-1} = \sigma_{BA}$ and for all morphisms  $f: A \rightarrow C, g: B \rightarrow D$ it holds that 
\begin{equation}\label{eq:symmetry}
\sigma_{CD} \circ (f \otimes g) = (g \otimes f) \circ \sigma_{AB}
\end{equation}
for all $f , g \in \mathcal{C}$. 

The connection to process theories is summarized in the table below
\begin{table}[h!]
\centering
\begin{tabular}{|l|l|l|}
\hline
\textbf{Process theory version} & \textbf{SMC version} \\ \hline
System & Object \(A, B, C, \dots \in \mathrm{Ob}(\mathcal{C})\) \\ \hline
Process &  Morphism \(f : A \to B\) \\ \hline
Sequential composition &  Categorical composition \(g \circ f : A \to C\) \\ \hline
Parallel composition & Monoidal product \(f \otimes g : A \otimes B \to C \otimes D\) \\ \hline
Trivial system & Monoidal unit \(I\) \\ \hline
Symmetry of systems & Natural isomorphism \(\sigma_{A,B} : A \otimes B \to B \otimes A\) \\ \hline
Composite system & Monoidal product \(A \otimes B\) \\ \hline
Composite process & Monoidal product of morphisms \(f \otimes g\) \\ \hline
\end{tabular}
\end{table}

In process theories, we generally work with a diagrammatic calculus. In this case, parallel composition of two systems is achieved by simply putting the two wires together
\begin{equation}
\tikzfig{Diagrams/tensorsystems}.
\end{equation}
There exists a trivial system $I$ which we can think of as ``nothing''. When we compose a system with the parallel system we just obtain the original system
\begin{equation}
\tikzfig{Diagrams/tensorsystemstriv1} \quad = \quad \tikzfig{Diagrams/tensorsystemstriv2}.
\end{equation}
We can also represent parallel composition of processes in the same way as systems
\begin{equation}
\tikzfig{Diagrams/parcomp}
\end{equation}
We can sequentially compose two processes in which case we need to ensure that the types match, 
\begin{equation}
\tikzfig{Diagrams/seqcomp}
\end{equation}
i.e., the output wire of $\mathtt{f}$ is of the same type as the output wire of $\mathtt{g}$ (here, they are both the system $B$). Using these operations we can build more complex diagrams like
\begin{equation}
\tikzfig{Diagrams/diagram}.
\end{equation}
By following the mantra ``only connectivity matters'', associativity \cref{eq:assouni}, the interchange law \cref{eq:interchange} and symmetry \cref{eq:symmetry} are then automatically fulfilled in the diagrammatic calculus. We allow boxes to be ``moved around the page'' and wires to be arbitrarily deformed in a diagram as long as we keep the border the same. For example, for the symmetry condition, we have that
\begin{equation}
\tikzfig{Diagrams/symmetry1} \quad = \quad \tikzfig{Diagrams/symmetry2}.
\end{equation}
We leave it up to the reader to verify for themselves that one side of the equation can be obtained from the other by sliding the boxes $\mathtt{f}, \mathtt{g}$ around on the page without changing the positions of the ends of the incoming and outgoing wires.  Another example is the identity process which we represent as a wire without a box. Sequentially composing a process with the identity process is supposed to simply yield the process again. This is obvious in the diagrammatic calculus as lengthening the wire on a box would not change the connectivity of the diagram in any way
\begin{equation}
\tikzfig{Diagrams/trivcomp1} \quad = \quad \tikzfig{Diagrams/trivcomp2}
\end{equation}

We can cast (finite-dimensional) quantum theory as a process theory. There are many ways to do this, we present two of them here.

The systems are the spaces of trace-class operators of finite-dimensional Hilbert spaces $\mathcal{L}(\mathbb{C}^d)$ with the trivial system $I = \mathcal{L}(\mathbb{C}) \cong \mathbb{C}$ and processes are CPTP maps $\mathtt{F}: \mathcal{L}(\mathbb{C}^d \rightarrow \mathcal{L}(\mathbb{C}^{d'})$. States are then normalised density operators $\rho: \mathbb{C} \rightarrow \mathcal{L}(\mathbb{C}^d) \cong \mathcal{L}(\mathbb{C}^d)$ and there exists a unique effect for each system, the partial trace $\text{tr}: \mathcal{L}(\mathbb{C}^d) \rightarrow \mathbb{C}$. A measurement is represented by a CPTP map $\mathtt{F}: \mathcal{L}(\mathcal{C}^{d_1}) \otimes \mathcal{L}(\mathcal{C}^{d_x}) \rightarrow  \mathcal{L}(\mathcal{C}^{d_1}) \otimes \mathcal{L}(\mathcal{C}^{d_a})$ where the $d_1, d_2$-dimensional systems carry the quantum state before and after the measurement while the systems $d_x, d_a$ represent the classical measurement setting and the classical outcome. Hence, we assume that the state on system $d_a$ is diagonal in some preferred basis, at least whenever the input to $d_x$ is diagonal\footnote{Another solution, which is perhaps more elegant, would be to introduce a new type of system to model classical wires, instead of reusing the quantum systems as we do here. We will not get into this here, however, as classical wires do not figure strongly into our results.}.

Alternatively, we can drop trace-preservation and choose the processes to be the CPTNI maps $\mathtt{F}: \mathcal{L}(\mathbb{C}^d) \rightarrow \mathcal{L}(\mathbb{C}^{d'})$. In this case, the states are subnormalised density operators and besides the partial trace there exist many effects for each system which correspond to different measurement outcomes of a measurement where we discard the quantum state afterwards. Hence, we could think of this formalisation as quantum theory with post-selection.

We can think of the latter formulation as allowing post-selection while the former one does not.

This is a good point to talk about terminal process theory, which are ones satisfying \cref{causal}, reproduced below.

\causal*

We see that of our two formulations of quantum theory above only the former satisfies the terminality condition with the partial trace playing the role of the discard. In the second formulation, we can essentially use post-selection to signal. The diagram
\begin{equation}
\tikzfig{Diagrams/noncausalsig}
\end{equation}
depends on which effect we plug in. If we think of the state as being shared by Alice and Bob, then this means the state that Bob has depends on which effect we plug in on Alice's side, i.e., she could signal to him. The terminality condition thus encodes the requirement that signalling is only possible with some direct connection. 

We note that in the framework of operational probabilistic theories (OPT) the terminality condition instead states there is a unique \textit{deterministic} effect \cite{chiribella2010probabilistic}. While this seems like a weaker requirement, both this and \cref{causal} are physically the same as they both formalise the same idea, namely that future measurements should not influence past outcomes. The reason for these different definitions are ultimately due to differences in the underlying frameworks (OPTs versus process theories). Importantly, we do not lose anything by assuming there are no effects other than the discard. In the OPT framework, a measurement (called a test) is formalised as a collection of subnormalised effects (called an event), labelled by an index $\{e_x\}_{x \in X}$ for some outcome set $X$. The subnormalised effects $e_x$ are formally part of the framework but not considered to be physical by themselves. The terminality condition of OPTs then states that for every system there exists only one test from that system to the trivial system with a single outcome, i.e., $|X| =1$. In the framework of process theories, we would not model a measurement as a collection of effects, but as a process with a classical output wire which encodes the measurement outcome as we have done above. Hence, either framework can model what the other can model and they make the same predictions.

The above constitutes the standard narrative in the literature regarding the relationship between process theories, their diagrammatic representation, and (strict) symmetric monoidal categories. This, however, is not quite the full story, as, typically, monoidal categories come equipped with equalities such as $A\otimes B = C\otimes D\otimes E$, and these equalities do not play (particularly) nicely with the diagrammatic representation. If we try to incorporate them we end up with diagrammatic equations which (at first glance) do not look well typed, for example:
\beq
\tikzfig{Diagrams/Prop1}
\eeq
Typically this causes little problem when working with these things in practice so the situation can be (and typically is) glossed over. However, in our case, the LHS and the RHS of the above will have different embeddability properties! 

We therefore need to consider only (strict) symmetric monoidal categories which do not have this kind of problematic equality, which means working with (strict) monoidal categories in which the object monoid is free. These are known as (coloured) Props. Conveniently, and quite surprisingly, there is no real loss of generality to this. Any symmetric monoidal category is equivalent to a (coloured) Prop \cite{carette2022propification}. Essentially, given any (strict) symmetric monoidal category $\mathcal{C}$, we can construct a Prop $\mathcal{P_C}$ which has objects given by finite lists of objects in $\mathcal{C}$, where the monoidal product is given by concatenation of these lists, and where the monoidal unit is the empty list.  

\section{Convertability and the preorder are equivalent}\label{conversionappendix}

In this appendix, we will prove \cref{conversion}. In this proof, we will need the concept of a free process theory. A more detailed review of some of the concepts discussed here can be found in \cite{fong2018seven}.

\begin{restatable}{defi}{freeproc}\label{freeproc}
Let $\mathcal{F}$ be a set of processes. The free process theory $\mathbf{Free}(\mathcal{F})$ is the process theory which contains $\mathcal{F}$, where any non-identity process in $\mathbf{Free}(\mathcal{F}) \backslash \mathcal{F}$ can be decomposed using only processes in $\mathcal{F}$ and the identity processes and there are no further equalities between diagrams except those enforced by the requirement that $\mathbf{Free}(\mathcal{F})$ is a process theory. We call the processes in $\mathcal{F}$ primitive.
\end{restatable}

It will also be useful to spell out explicitly what it means for there to be ``no equalities between diagrams except those enforced by the requirement that $\mathbf{Free}(\mathcal{F})$ is a process theory''. In any process theory, the following are always allowed operations on a diagram:

\begin{itemize}
\item Composing multiple constituent processes

\item Replacing a single process with one of its decompositions

\item Adding the trivial system

\item Adding the trivial scalar

\item Adding the identity process
\end{itemize}

Hence, if a diagram can be converted into another using these steps, the diagrams must be equal regardless of any other features of the process theory. In an arbitrary process theory, there can be more than these trivial equalities, an example are the terminality conditions in causal theories \cref{causalconditions}. A free process theory is then such that there are no other equalities.

For the converse direction of the proof, we will need a free process theory that for some system $A$ (which cannot be further decomposed) has a primitive process from $A^{\otimes m}$ to $A^{\otimes n}$ for all $m, n \in \mathbb{N}_0$. 
\begin{restatable}{defi}{freeproc2}\label{freeproc2}
We define the free process theory $\mathbf{FreeProc} := \mathbf{Free}(\mathcal{F})$ for $\mathcal{F} = \{\mathtt{F}_{mn}: A^{\otimes m} \rightarrow A^{\otimes n}|m, n \in \mathbb{N}_0\}$, where $A$ is some non-trivial system such that $A = B \otimes C$ implies that either $B$ or $C$ is the trivial system. 
\end{restatable}

This process theory is essentially equivalent to what \cite{fong2018seven} calls the free prop on a signature.

For the proof, it will also be useful to consider some examples on how applying the above operations changes the associated FPO type. Consider a fragment of a circuit and the associated fragment of the FPO

\begin{equation}
\tikzfig{Diagrams/convertingdiags1}  \quad \substack{\mathcal{G} \\ \rightarrow} \quad \tikzfig{Diagrams/convertingfpo1}.
\end{equation}

We can compose the boxes in the diagram into a single box, obtaining (denoting any changes in the diagram or FPO type in blue)

\begin{equation}
\tikzfig{Diagrams/convertingdiags2}  \quad \substack{\mathcal{G} \\ \rightarrow} \quad \tikzfig{Diagrams/convertingfpo2}.
\end{equation}

We can add the trivial system $I$ between $\mathtt{F}$ and $\mathtt{H}$

\begin{equation}
\tikzfig{Diagrams/convertingdiags3}  \quad \substack{\mathcal{G} \\ \rightarrow} \quad \tikzfig{Diagrams/convertingfpo3}.
\end{equation}

We can add the trivial scalar $1$

\begin{equation}
\tikzfig{Diagrams/convertingdiags4}  \quad \substack{\mathcal{G} \\ \rightarrow} \quad \tikzfig{Diagrams/convertingfpo4}.
\end{equation}

We can add the identity process $\mathtt{id}$ on one of the wires

\begin{equation}
\tikzfig{Diagrams/convertingdiags5}  \quad \substack{\mathcal{G} \\ \rightarrow} \quad \tikzfig{Diagrams/convertingfpo5}.
\end{equation}

\conversion*
\begin{proof} 
We first prove that statement 1 implies statement 2. Let $\mathtt{F}$ be some arbitrary process in some arbitrary process theory and assume there exists some implementation $i_\mathtt{F}$ such that $\mathcal{G}(i_\mathtt{F}) = S_1 \in T_1$. Let $S_2 \in T_2$. Since $S_1 \succ S_2$, there exists a FOP map $\mathcal{E}: S_1 \rightarrow S_2$. The idea of the proof is to use the existence of $\mathcal{E}$ to construct a valid implementation $j_\mathtt{F}$ with $\mathcal{G}(j_\mathtt{F}) \cong S_2$. For every $y \in S_2$, define the pre-image $\mathcal{E}^{-1}(y)$. If $\mathcal{E}^{-1}(y) = \emptyset$, add the scalar 1 to $i_\mathtt{F}$. Note that the scalar 1 exists in every process theory as it is the trivial transformation on the trivial system. If $\mathcal{E}^{-1}(y) \neq \emptyset$, compose all boxes in $i_\mathtt{F}$ corresponding to elements in $\mathcal{E}^{-1}(y)$. Let us now show that this yields a valid diagram, i.e., one without cycles. This diagram has the following property: if one can reach a box $y_2$ from another box $y_1$ along a future-directed path, then $y_1 \leq y_2$ in $S_2$. To see this, note that there must exist boxes $x_1, x_2$ in $i_\mathtt{F}$ such that $x_2$ can be reached from $x_1$ and $\mathcal{E}(x_i) = y_i$ for $i=1,2$. Further, since $x_1 \leq x_2$ in $S_1$ by definition of $\mathcal{G}$, we have $y_1 \leq y_2$ due to order-preservation of $\mathcal{E}$. This property ensures that the new diagram has no cycles, i.e., is indeed a valid diagram, as otherwise $S_2$ would not be a partial order, but a preorder.   

The elements of the associated partial order of this new diagram are now in one-to-one correspondence with the elements of $S_2$. To complete the conversion, for every pair $x, y \in S_2$ with $x < y$ add a wire carrying the trivial system from the box corresponding to $x$ to the box corresponding to $y$  unless there already is a wire connecting them in this way (if $x$ or $y$ is a frame element which in the diagram is connected only to another frame element along an identity wire, this requires to first add an explicit identity process on the wire). Note that this yields another valid diagram without cycles. We start from a diagram without cycles. If there exists a future directed path from $x$ to $y$, then as we have shown $x \leq y$ in $S_2$, hence, $y \not < x$ and we do not add a wire from $y$ to $x$. Thus, no cycles can appear as a result of this step. Call the new diagram $j_\mathtt{F}$ and by construction we have $\mathcal{G}(j_\mathtt{F}) \cong S_2$. Since we only used composition and added the trivial system and identity process, the diagram $j_\mathtt{F}$ is still equal to $\mathtt{F}$ and thus an implementation. 

\
 
Let us now show that statement 2 implies statement 1. For this purpose, consider $\mathbf{FreeProc}$ from \cref{freeproc2}. We can build a process that has $T_1$ as an implementation by composing the basic processes $\mathtt{F}_{mn}$ such that the resulting diagram has the shape of the Hasse diagram of $T_1$. That is for every internal element of $T_1$, we have one instance of a process $\mathtt{F}_{mn}$ with $m$ equal to the number of parents and $n$ equal to the number of children and the frame elements correspond to a single copy of the system $A$ connected to an instance of $\mathtt{F}_{1n}$ in case of an input where $n$ is the number of children of the frame element and $\mathtt{F}_{m1}$ in case of an output where $m$ is the number of parents of the frame element. The processes are then wired together so that they have the same connectivity as the Hasse diagram. Let us call the resulting process $\mathtt{F}$ and the implementation we constructed $i_\mathtt{F}$. By construction, $\mathcal{G}(i_\mathtt{F}) \in T_1$. Thus, by assumption there exists another implementation $j_\mathtt{F}$ such that $\mathcal{G}(j_\mathtt{F}) \in T_2$. Since both are equal to the same process $\mathtt{F}$ and since $\mathbf{FreeProc}$ is a free process theory, there exists some procedure, i.e., list of steps, consisting of applying the basic axioms of process theories to $i_\mathtt{F}$ to obtain $j_\mathtt{F}$ (as was discussed in the beginning of this appendix). Note further that by construction the constituent processes in $i_\mathtt{F}$ are primitive, i.e., they cannot be further decomposed. This leaves the following valid steps: composing multiple constituent processes, adding the trivial system between two constituent processes and adding the identity process (including adding it as a scalar). Assume that a single of these steps is enough. We show that in each case, there exists a FOP map $\mathcal{E}: \mathcal{G}(i_\mathtt{F}) \rightarrow \mathcal{G}(j_\mathtt{F})$, i.e., $\mathcal{G}(i_\mathtt{F}) \succ \mathcal{G}(j_\mathtt{F})$, i.e., $T_1 \succ T_2$. 

\begin{itemize}
\item Composing multiple constituent processes: by construction, $\mathcal{G}(i_\mathtt{F})$ can be viewed as a coarse-graining of $\mathcal{G}(j_\mathtt{F})$, namely the elements corresponding to the composed processes are coarse-grained into a single element. Let $X$ be the set of elements in $\mathcal{G}(i_\mathtt{F})$ that are coarse-grained into a single element $x$ in $\mathcal{G}(j_\mathtt{F})$, then $\mathcal{E}(y) = x$ if $y \in X$ and $\mathcal{E}(y) = y$ if $y \not \in X$ defines a FOP map.

\item Adding the trivial system: In this case, $\mathcal{G}(i_\mathtt{F})$ and $\mathcal{G}(j_\mathtt{F})$ can be viewed as containing the same elements and if $x < y$ in $\mathcal{G}(i_\mathtt{F})$, then $x < y$ in $\mathcal{G}(j_\mathtt{F})$ (since adding the trivial system somewhere can at most add additional ordering relations). Thus, $\mathcal{E}(x) = x$ is a FOP map from $\mathcal{G}(i_\mathtt{F})$ to $\mathcal{G}(j_\mathtt{F})$.

\item Adding the trivial scalar: In this case, $\mathcal{G}(j_\mathtt{F})$ is the union of $\mathcal{G}(i_\mathtt{F})$ and a disconnected element. The map $\mathcal{E}(x) = x$ is a FOP map from $\mathcal{G}(i_\mathtt{F})$ to $\mathcal{G}(j_\mathtt{F})$.

\item Adding the identity process:  In this case, $\mathcal{G}(j_\mathtt{F})$ is the union of $\mathcal{G}(i_\mathtt{F})$ and an additional element. For all $x, y \in \mathcal{G}(i_\mathtt{F})$, it holds that if $x \leq y$ in $\mathcal{G}(i_\mathtt{F})$, then the same is true in $\mathcal{G}(j_\mathtt{F})$. The map $\mathcal{E}(x) = x$ is a FOP map from $\mathcal{G}(i_\mathtt{F})$ to $\mathcal{G}(j_\mathtt{F})$.
\end{itemize}
In all three cases, we see that $\mathcal{G}(i_\mathtt{F}) \succ \mathcal{G}(j_\mathtt{F})$. By transitivity, we then have $\mathcal{G}(i_\mathtt{F}) \succ \mathcal{G}(j_\mathtt{F})$ even if multiple steps are needed and thus $T_1 \succ T_2$. 
\end{proof}

\section{Additional properties of minimal representatives}\label{sec:addprop}

Given that labellings of minimal representatives always order-embed into other equivalent FPOs, we can ask if there is a map with convenient features in the other direction. In fact, it turns out that there always exists a FOP map which one can view as essentially a projection onto the minimal representative. 
\begin{restatable}{prop}{intomin}\label{intomin}
Let $R$ be a minimal representative and $T \sim R$. Then, for all $S \in R, S' \in T$ and all FOP maps $\mathcal{E}: S \rightarrow S'$, there exists a surjective FOP map $\mathcal{P}: S' \rightarrow S$ such that $\mathcal{P} \circ \mathcal{E} \circ \mathcal{P} = \mathcal{P}$.
\end{restatable}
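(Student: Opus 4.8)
The plan is to recycle the machinery already set up in the proof of \cref{minrep} and \cref{mincrit}. Since $T\sim R$, the labellings $S$ and $S'$ lie in the same equivalence class under $\sim$, so alongside the given FOP map $\mathcal{E}\colon S\to S'$ there also exists a FOP map $\mathcal{E}'\colon S'\to S$. First I would look at the composite self-map $\sigma:=\mathcal{E}'\circ\mathcal{E}\colon S\to S$. This is a FOP map from a labelling of the minimal representative $R$ to itself, so by \cref{mincrit} it is a relabelling, i.e.\ an order-isomorphism of the finite partial order $S$; equivalently, this is the permutation observation made at the end of the proof of \cref{minrep}. In particular $\sigma$ is a permutation of the underlying finite set, hence of finite order: there is some $n\geq 1$ with $\sigma^{n}=\mathrm{id}_S$ (using the convention $\sigma^{0}=\mathrm{id}_S$).

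Next I would simply define $\mathcal{P}:=\sigma^{\,n-1}\circ\mathcal{E}'\colon S'\to S$. It is FOP, being a composite of the FOP map $\mathcal{E}'$ with the map $\sigma^{n-1}$, which is FOP since it is an iterate of an order-isomorphism. The single computation that does all the work is
\begin{equation}
\mathcal{P}\circ\mathcal{E}=\sigma^{\,n-1}\circ\mathcal{E}'\circ\mathcal{E}=\sigma^{\,n-1}\circ\sigma=\sigma^{\,n}=\mathrm{id}_S .
\end{equation}
From $\mathcal{P}\circ\mathcal{E}=\mathrm{id}_S$ it follows at once that $\mathcal{P}$ is surjective (it has $\mathcal{E}$ as a right inverse) and that $\mathcal{P}\circ\mathcal{E}\circ\mathcal{P}=(\mathcal{P}\circ\mathcal{E})\circ\mathcal{P}=\mathrm{id}_S\circ\mathcal{P}=\mathcal{P}$, which is exactly the claimed identity. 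One may also remark that $\mathcal{E}\circ\mathcal{P}\colon S'\to S'$ is then idempotent, which is the sense in which $\mathcal{P}$ "projects" $S'$ onto the minimal representative and justifies the terminology in the statement.

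I do not expect a genuine obstacle here: the only delicate point is confirming that $\sigma$ is invertible with FOP inverse (so that its iterates are FOP and it has finite order), but this is precisely \cref{mincrit} applied to the pair $S,S$ inside the minimal representative $R$, so no new argument is required. If one prefers not to invoke \cref{mincrit} directly, the same conclusion is extracted verbatim from the closing paragraph of the proof of \cref{minrep}, where $\mathcal{E}'\circ\mathcal{E}$ is shown to be a permutation of a finite set.
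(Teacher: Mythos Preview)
Your proof is correct and follows the same underlying idea as the paper's: pick an auxiliary FOP map $\mathcal{E}'\colon S'\to S$, iterate the self-map on $S$, and define $\mathcal{P}$ as a suitable power times $\mathcal{E}'$. The difference is that you invoke \cref{mincrit} up front to conclude that $\sigma=\mathcal{E}'\circ\mathcal{E}$ is already a bijection of $S$, whereas the paper re-derives this inside the proof by passing to a stable image $\tilde{S}=\operatorname{Im}((\mathcal{E}'\circ\mathcal{E})^m)$ and only at the end showing $\tilde{S}=S$ via minimality. Your shortcut is legitimate (there is no circularity: \cref{mincrit} depends only on \cref{minrep}), and it buys you the cleaner and strictly stronger statement $\mathcal{P}\circ\mathcal{E}=\mathrm{id}_S$, from which both surjectivity of $\mathcal{P}$ and the idempotency relation $\mathcal{P}\circ\mathcal{E}\circ\mathcal{P}=\mathcal{P}$ are immediate. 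The paper's longer route is self-contained but gains nothing extra.
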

\begin{proof}  
To begin, consider an arbitrary FOP map $\mathcal{E}':S'\to S$ which necessarily exists as, by assumption, $S\sim S'$. Now, as all of the partial orders we consider are finite, we have that $|\text{Im}(\mathcal{E}' \circ \mathcal{E})| < \infty$. This, together with the fact that $(\mathcal{E}' \circ \mathcal{E})(X) \subseteq X$ for any $X \subseteq S$, implies that there exists $m$ such that for all $m' > m$, 
\beq
\tilde{S} := \text{Im}((\mathcal{E}' \circ \mathcal{E})^{m'}) = \text{Im}((\mathcal{E}' \circ \mathcal{E})^m).
\eeq
In other words, $(\mathcal{E}' \circ \mathcal{E})|_{\tilde{S}}$ is a bijection and as its domain is finite, it is a permutation of a finite number of elements. For any permutation of a finite number of elements, there exists $n$ such that applying the map $n$ times yields the identity, i.e., in our case \beq(\mathcal{E} \circ \mathcal{E}')^n|_{\tilde{S}} = \text{id}_{\tilde{S}}.\eeq Note that we can choose $n > m$ (if this is not the case replace $n$ with $nm$).

We can therefore define $\mathcal{P}:= \mathcal{E}' \circ (\mathcal{E} \circ \mathcal{E}')^{n-1}$ which satisfies \beq (\mathcal{E}\circ \mathcal{P})|_{\tilde{S}}= (\mathcal{E} \circ \mathcal{E}')^n|_{\tilde{S}} =\text{id}_{\tilde{S}}.\eeq Now for any $x\in S'$, we have by definition
\beq
(\mathcal{E}\circ \mathcal{P})(x)=(\mathcal{E} \circ \mathcal{E}')^n (x) \in \text{Im}[(\mathcal{E} \circ \mathcal{E}')^n] = \tilde{S}
\eeq
and so 
\beq \mathcal{E} \circ \mathcal{P} \circ \mathcal{E} \circ \mathcal{P}(x) = \text{id} \circ \mathcal{E} \circ \mathcal{P}(x) = \mathcal{E} \circ \mathcal{P}(x).\eeq  
As this is true for all $x\in S'$ and as $\mathcal{E}$ is an order embedding (\cref{minrep}), and hence is injective, this simplifies to
\beq
\mathcal{P} \circ \mathcal{E} \circ \mathcal{P} = \mathcal{P}.
\eeq
All that remains is to show that $\mathcal{P}$ is surjective. By definition, we have that $\tilde{S}\subseteq S$ and hence $\tilde{S}\succ S$. Moreover, we have that $\mathcal{P}\circ \mathcal{E}:S\rightarrow \tilde{S}$ is a FOP map such that $S\succ \tilde{S}$. Putting these together we have that $S\sim \tilde{S}$. The fact that $\tilde{S}\subseteq S$ tells us that $|\tilde{S}|\leq |S|$, while the fact that $S$ is a labelling of a minimal representative and $S\sim \tilde{S}$ tell us that $|\tilde{S}|\geq |S|$. Putting these together, we find that $|S|=| \tilde{S}|$.  By \cref{minrep}, we then have $S \cong \tilde{S}$ and by $\cref{mincrit}$, $\mathcal{P}\circ \mathcal{E}: S \rightarrow \tilde{S}$ must be a relabelling and hence, $\mathcal{P}$ must be surjective.
\end{proof}

The combination of this result with \cref{mincrit} gives us another nice corollary, which in simple terms, lets us assume that it suffices to restrict our attention to FOP self-maps that are projections instead of all FOP self-maps for \cref{mincrit}.

\begin{restatable}{coro}{minproj}\label{minproj}
Let $T \in \mathcal{T}_{MN}$. Then $T$ is a minimal representative iff for all $S \in T$ and all FOP maps $\mathcal{E}: S \rightarrow S$ such that $\mathcal{E} \circ \mathcal{E} = \mathcal{E}$, $\mathcal{E}$ is a relabelling.
\end{restatable}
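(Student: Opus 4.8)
The plan is to deduce both directions from \cref{mincrit} and \cref{intomin}, with essentially no new work. The forward implication is immediate: if $T$ is a minimal representative, then by \cref{mincrit} every FOP map between two labellings of $T$ is a relabelling, and in particular every FOP self-map $\mathcal{E}: S \rightarrow S$ (whether or not it is idempotent) is a relabelling, which yields the stated condition restricted to the idempotent self-maps.

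For the converse, I would assume that every idempotent FOP self-map on every labelling of $T$ is a relabelling, and let $R$ be the unique minimal representative of the $\sim$-equivalence class of $T$ guaranteed by \cref{minrep}. Fix a labelling $S \in R$ and a labelling $S' \in T$. Since $R \sim T$, there is a FOP map $\mathcal{E}: S \rightarrow S'$, and \cref{intomin} then produces a surjective FOP map $\mathcal{P}: S' \rightarrow S$ with $\mathcal{P} \circ \mathcal{E} \circ \mathcal{P} = \mathcal{P}$. The key step is to note that $\mathcal{E} \circ \mathcal{P}: S' \rightarrow S'$ is a FOP self-map which is idempotent, since $(\mathcal{E} \circ \mathcal{P}) \circ (\mathcal{E} \circ \mathcal{P}) = \mathcal{E} \circ (\mathcal{P} \circ \mathcal{E} \circ \mathcal{P}) = \mathcal{E} \circ \mathcal{P}$. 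By hypothesis $\mathcal{E} \circ \mathcal{P}$ is therefore a relabelling, hence bijective; this forces $\mathcal{P}$ to be injective, and together with its surjectivity $\mathcal{P}$ is a bijection. Hence $|S| = |S'|$, i.e., $|R| = |T|$, and by the equality clause of \cref{minrep} this gives $R = T$, so $T$ is a minimal representative.

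I do not expect a genuine obstacle here, as the corollary is a short packaging of the two preceding results. The only points needing a little care are checking that \cref{intomin} applies — one uses that a FOP map $S \rightarrow S'$ exists precisely because $R$ and $T$ lie in the same $\sim$-class — and that the idempotency computation $(\mathcal{E}\circ\mathcal{P})\circ(\mathcal{E}\circ\mathcal{P}) = \mathcal{E}\circ\mathcal{P}$ relies exactly on the relation $\mathcal{P}\circ\mathcal{E}\circ\mathcal{P} = \mathcal{P}$. One could alternatively finish by invoking \cref{mincrit} once $\mathcal{P}$ is known to be a relabelling, but routing through the cardinality inequality of \cref{minrep} is the cleanest.
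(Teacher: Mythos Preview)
Your proposal is correct and takes essentially the same approach as the paper: both use \cref{mincrit} for the forward direction, and for the converse both invoke \cref{intomin} to build the idempotent FOP self-map $\mathcal{E}\circ\mathcal{P}$ on a labelling of $T$. The only cosmetic difference is that the paper argues by contraposition (showing $\mathcal{E}\circ\mathcal{P}$ fails to be surjective when $T$ is not minimal, via $|\text{Im}(\mathcal{E}\circ\mathcal{P})|\leq|R|<|T|$), whereas you argue directly that bijectivity of $\mathcal{E}\circ\mathcal{P}$ forces $|R|=|T|$ and then appeal to the equality clause of \cref{minrep}.
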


\begin{proof}
``$\Rightarrow$'': By \cref{minproj}, if $S$ is a labelling of a minimal representative, then any self-map is a relabelling.

``$\Leftarrow$'': We prove by contraposition. That is assume $T$ is not a minimal representative. Let $R \sim T$ be a minimal representative, $S' \in R$ and $\mathcal{E}: S' \rightarrow S$ an arbitrary FOP map, then by \cref{intomin} there exists a FOP map $\mathcal{P}: S \rightarrow S'$ such that $\mathcal{P} \circ \mathcal{E} \circ \mathcal{P} = \mathcal{P}$. Define $\mathcal{P}' := \mathcal{E} \circ \mathcal{P}$, then 
\beq
\mathcal{P}' \circ \mathcal{P}' = \mathcal{E} \circ \mathcal{P} \circ \mathcal{E} \circ \mathcal{P} = \mathcal{E} \circ \mathcal{P} = \mathcal{P}'.
\eeq
Note that $\mathcal{P}': S \rightarrow S$ as the composition of FOP maps is also a FOP map. Finally, $\mathcal{P}'$ is not surjective as 
\beq
|\text{Im}(\mathcal{P}')| = |\text{Im}(\mathcal{E} \circ \mathcal{P})| = |\text{Im}(\mathcal{P})| \leq |S'| < |S|,
\eeq
where we used that $\mathcal{E}$ is injective due to \cref{minrep}, that the cardinality of the image is less or equal than the cardinality of the domain in the third and finally that minimal representatives are strictly the smallest elements of equivalence classes in terms of their order.

\end{proof}
Further, minimal representatives turn out to be minimal in some more ways. They have strictly the fewest order relations between elements and their Hasse diagrams have the smallest size (i.e., number of edges). 
\begin{restatable}{coro}{minrepsize}\label{minrepsize}
Let $R$ be a minimal representative and $T \neq R$ is equivalent to $R$. The number of order relations in $R$, $r_R$ ($r_{T}$), is the number of unique pairs $x, y \in S$ ($S'$) for any $S \in R$ ($S' \in T$) such that $x \leq y$. Then, $r_R < r_{T}$.
\end{restatable}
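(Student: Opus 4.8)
The plan is to build an injection from the $\leq$-related pairs of a labelling of $R$ into the $\leq$-related pairs of a labelling of $T$ using a FOP map, and then to show this injection is not surjective, which gives the strict inequality. First I would dispense with the well-definedness claim implicit in the statement (that $r_R$, $r_T$ depend only on the FPO type): any two labellings of the same type are related by a relabelling, which by \cref{relabel} is in particular an order isomorphism, hence induces a bijection between the two sets of $\leq$-related pairs, so the count is independent of the chosen labelling. Then fix once and for all some $S \in R$ and $S' \in T$.

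Next, since $T \sim R$ we have $S \succ S'$, so by \cref{def:succ} there is a FOP map $\mathcal{E}: S \rightarrow S'$; because $R$ is a minimal representative, \cref{minrep} tells us that $\mathcal{E}$ is an order-embedding, i.e.\ injective with $x \leq y \iff \mathcal{E}(x) \leq \mathcal{E}(y)$. I would then consider the map sending a $\leq$-related pair $(x,y)$ of $S$ to $(\mathcal{E}(x), \mathcal{E}(y))$: this lands among the $\leq$-related pairs of $S'$ by order-preservation, and it is injective because $\mathcal{E}$ is. Counting, this already yields $r_R \leq r_T$.

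For the strict inequality, I would use that $T \neq R$ forces $|S| < |S'|$ by the ``equality iff'' clause of \cref{minrep}, so $\mathcal{E}$ is not surjective; choose $z \in S' \setminus \text{Im}(\mathcal{E})$. The reflexive pair $(z,z)$ is $\leq$-related but cannot be in the image of the injection above, since a preimage $(x,y)$ would require $\mathcal{E}(x) = z$. Hence the injection misses at least one pair and $r_R < r_T$, as claimed.

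I do not expect a genuine obstacle here; the only point worth flagging is that the count of ``order relations'' must be read as including the reflexive pairs $x \leq x$. This is essential for strictness in general: when $T$ is obtained from $R$ by adjoining a disjoint internal element (the $S_n$ construction of \cref{sec:minreps}) the strict comparabilities of $S$ and $S'$ coincide, and it is precisely the extra reflexive pair $(z,z)$ that makes up the difference. With that convention the argument above is complete and nothing deeper is needed.
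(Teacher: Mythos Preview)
Your proposal is correct and essentially identical to the paper's proof: both invoke \cref{minrep} to get a (non-surjective) order-embedding $\mathcal{E}: S \rightarrow S'$, use it to inject $\leq$-pairs of $S$ into those of $S'$, and obtain strictness from the reflexive pair $(z,z)$ for some $z \in S' \setminus \text{Im}(\mathcal{E})$. Your added remark that reflexivity is genuinely needed for strictness (via the disjoint-element example) is a nice clarification the paper leaves implicit.
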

\begin{proof}
We first note that the number of order relations in order-isomorphic sets must be the same. Since $R$ is the minimal representative of $T$, there exists a (non-surjective) FOE $\mathcal{E}: S \rightarrow S'$. Thus, $S$ is a relabelling of $\text{Im}(\mathcal{E}) \subsetneq S'$. Since a set has at least as many order relations as any of its subsets we find $r_R \leq r_{T}$. Since $\text{Im}(\mathcal{E}) \subsetneq S'$, there exists $x \in S'$ such that $x \not \in \text{Im}(\mathcal{E})$ and since $x \leq x$ due to reflexivity of $\leq$, we find that $r_{T} \geq r_R + 1 > r_R$. 
\end{proof}
\begin{restatable}{coro}{minrephasse}\label{minrephasse}
Let $R$ be a minimal representative and $T \neq R$ is equivalent to $R$. The size of the Hasse diagram (i.e., the number of edges) of $R$ is smaller than or equal to the size of the Hasse diagram of $T$. Their sizes are the same iff the FPOs of FPO type $T$ are relabellings of the union of an FPO of FPO type $R$ and an arbitrary number of disconnected elements.
\end{restatable}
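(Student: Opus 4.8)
The plan is to reduce the whole statement to one structural fact about the canonical map: the order-embedding from a labelling of a minimal representative into any equivalent FPO sends covering relations to covering relations. Fix labellings $S\in R$ and $S'\in T$. Since $T\sim R$ a FOP map $\mathcal{E}:S\to S'$ exists, and by \cref{minrep} it is an order-embedding, with $|S|<|S'|$ because $R$ is the strict minimum of its $\sim$-class. I would first prove the \emph{covering claim}: if $x\lessdot y$ in $S$ then $\mathcal{E}(x)\lessdot\mathcal{E}(y)$ in $S'$. Suppose not, so some $z\in S'$ satisfies $\mathcal{E}(x)<z<\mathcal{E}(y)$. Then $z\notin\mathcal{E}(S)$, since $z=\mathcal{E}(w)$ would give $x<w<y$ by order-reflection of $\mathcal{E}$, contradicting $x\lessdot y$. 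Put $\tilde S:=\mathcal{E}(S)\cup\{z\}$. It is an FPO of the same class with the same frame as $\mathcal{E}(S)$ and $S'$: the inputs/outputs of $\mathcal{E}(S)$ are exactly the inputs/outputs of $S'$ (a FOP map matches the frame lists bijectively), so they are already minimal/maximal in $S'$, and $z$ is added internally. Hence the inclusions $\mathcal{E}(S)\hookrightarrow\tilde S\hookrightarrow S'$ are FOP, giving $S\succ\mathcal{E}(S)\succ\tilde S\succ S'\succ S$, so $\tilde S\sim S$ while $|\tilde S|=|S|+1$ — contradicting minimality of $R$. This establishes the claim.

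Granting the covering claim, the inequality is immediate: $(x\lessdot y)\mapsto(\mathcal{E}(x)\lessdot\mathcal{E}(y))$ is a well-defined injection (injective because $\mathcal{E}$ is) from the edge set of the Hasse diagram of $S$ into that of $S'$, so $|\mathrm{Hasse}(R)|=|\mathrm{Hasse}(S)|\le|\mathrm{Hasse}(S')|=|\mathrm{Hasse}(T)|$.

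For the equality characterisation I would do both directions. ``$\Leftarrow$'': if a labelling of $T$ is a relabelling of $S_0\sqcup D$ with $S_0$ a labelling of $R$ and $D$ a set of disconnected internal elements, then $D$ contributes no Hasse edges, so $|\mathrm{Hasse}(T)|=|\mathrm{Hasse}(S_0)|=|\mathrm{Hasse}(R)|$ (and one checks $S_0\sqcup D\sim S_0$ by mapping $D$ anywhere, so such $T$ is indeed in scope). ``$\Rightarrow$'': if $|\mathrm{Hasse}(S)|=|\mathrm{Hasse}(S')|$ the injection above is a bijection, so every covering pair of $S'$ has both endpoints in $\mathcal{E}(S)$; hence any $z\in S'\setminus\mathcal{E}(S)$ is isolated in the Hasse diagram of $S'$, and since $S'$ is finite this forces $z$ to be incomparable to everything (a strict relation at $z$ would be witnessed by a saturated chain through a covering relation at $z$). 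Each such $z$ is internal, the frame of $S'$ lying in $\mathcal{E}(S)$. Thus $S'=\mathcal{E}(S)\sqcup D$ as posets with $D$ disconnected internal, and since $\mathcal{E}:S\to\mathcal{E}(S)$ is a frame-preserving isomorphism, $S'$ is a relabelling of $S\sqcup D$.

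The main obstacle is the covering-preservation claim; the rest is routine bookkeeping about Hasse diagrams. The subtle point is that one cannot ``insert'' an intermediate element into the image of a minimal representative, and the clean way to see this is exactly the $\tilde S$ construction combined with the \emph{strict} minimality of $|R|$ from \cref{minrep} — so the argument genuinely uses that strictness, not merely the order-embedding property.
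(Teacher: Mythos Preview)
Your argument hinges on the \emph{covering claim} (if $x\lessdot y$ in $S$ then $\mathcal{E}(x)\lessdot\mathcal{E}(y)$ in $S'$), but this claim is false, and your proposed proof of it does not in fact produce a contradiction. Having $\tilde S\sim S$ with $|\tilde S|=|S|+1$ is entirely consistent with minimality of $R$: \cref{minrep} says $|R|\le |T'|$ for every equivalent $T'$, so finding an equivalent FPO that is \emph{larger} than $S$ is exactly what one expects, not a contradiction. Concretely, in the class $\MINOtype{0}{2}$ take $S=\{c,O_1,O_2\}$ with $c<O_1$, $c<O_2$ (this is a labelling of a minimal representative by \cref{0I2O}) and $S'=S\cup\{d\}$ with the additional relations $c<d<O_1$. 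One checks $S'\sim S$ (inclusion gives $S\succ S'$; sending $d\mapsto O_1$ gives $S'\succ S$), yet for the inclusion $\mathcal{E}:S\hookrightarrow S'$ the covering $c\lessdot O_1$ in $S$ is destroyed in $S'$ by the intermediate $d$. Your $\tilde S$ here is just $S'$ itself, with four elements, and nothing about minimality forbids that.

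Because the covering claim fails, your edge injection $(x\lessdot y)\mapsto(\mathcal{E}(x)\lessdot\mathcal{E}(y))$ is not well defined, and both the inequality and the ``$\Rightarrow$'' direction of the equality characterisation collapse. The paper's proof does not attempt to preserve covers; instead it uses the projection-like map $\mathcal{P}:S'\to S$ from \cref{intomin} (satisfying $\mathcal{P}\circ\mathcal{E}\circ\mathcal{P}=\mathcal{P}$). When $\mathcal{E}(x)\lessdot\mathcal{E}(y)$ fails, it picks one of the Hasse edges along a saturated chain from $\mathcal{E}(x)$ to $\mathcal{E}(y)$, chosen so that applying $\mathcal{P}$ to its endpoints recovers $(x,y)$; this is what makes the assignment injective and drives the equality analysis. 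The essential missing ingredient in your approach is precisely this second map $\mathcal{P}$ --- without it there is no control over elements of $S'\setminus\mathcal{E}(S)$ sitting strictly between images of covering pairs.
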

\begin{proof}
Let $S \in R, S' \in T$, $\mathcal{E}: S \rightarrow S'$ an FOE and $\mathcal{P}: S' \rightarrow S$ a FOP map as in \cref{intomin}. Denote the Hasse diagram of $S$ (of $S'$) with $H$ (with $H'$). Denote with $E \subseteq S \times S$ (with $E' \subseteq S' \times S'$) the edges in $H$ (in $H'$). That is, we say that $(x, y) \in E$ iff $x < y$. Hence, $|E|$ ($|E'|$) is equal to the size of $H$ (of $H'$). We will show that $|E| \leq |E'|$ by constructing an injective map $\mathcal{D}: E \rightarrow E'$. Due to order-preservation and injectiveness of $\mathcal{E}$ if $(x, y) \in E$, then either $(\mathcal{E}(x), \mathcal{E}(y)) \in E'$ or there exist $x_1,...,x_n \in S' \backslash \text{Im}(\mathcal{E})$ such that $(\mathcal{E}(x), x_1), (x_1, x_2), ..., (x_n, \mathcal{E}(y)) \in E'$. If $x_i \in \text{Im}(\mathcal{E})$, then due to the fact that $\mathcal{E}$ is order-reflecting, it follows that 
\beq
\mathcal{E}(x) < x_i < \mathcal{E}(y) \Rightarrow x < \mathcal{E}^{-1}(x_i) < y
\eeq
and $(x, y)$ would not be an edge. 

In the former case, define $\mathcal{D}(x, y) = (\mathcal{E}(x), \mathcal{E}(y))$. In the latter case, define $\mathcal{D}(x, y) = (x_1, \mathcal{E}(y))$ if $\mathcal{P}(x_1) = x$, otherwise $\mathcal{D}(x, y)  = (\mathcal{E}(x), x_n)$. 

We then have that if $\mathcal{D}(x, y) = (x', y')$, then $\mathcal{P}(x') = x$ and $\mathcal{P}(y') = y$, i.e.,
\begin{equation}\label{pd}
\mathcal{D}(\mathcal{P}(x'), \mathcal{P}(y')) = (x', y').
\end{equation}
To see this, distinguish by cases: if $x' = \mathcal{E}(x)$, then $\mathcal{P}(x') = \mathcal{P}(\mathcal{E}(x))$. As $\mathcal{P} \circ \mathcal{E}$ is surjective, there exists $\tilde{x} \in S$ such that $\mathcal{P}(\mathcal{E}(\tilde{x})) = x$. Then,
\beq
x = \mathcal{P} \circ \mathcal{E}(\tilde{x}) = \mathcal{P} \circ \mathcal{E} \circ \mathcal{P} \circ \mathcal{E}(\tilde{x}) = \mathcal{P} \circ \mathcal{E}(x) = \mathcal{P}(x')
\eeq
where we used that $\mathcal{P} = \mathcal{P} \circ \mathcal{E} \circ \mathcal{P}$ (\cref{intomin}).

If $x' \neq \mathcal{E}(x)$, we have $\mathcal{P}(x') = x$ by definition of $\mathcal{D}$. Analogously, it follows that $\mathcal{P}(y') = y$. 

Let us now show that $\mathcal{D}$ is injective. Hence, assume that $(x, y), (x', y') \in E$ are such that 
\beq
\mathcal{D}(x, y) = \mathcal{D}(x', y') = (\tilde{x}, \tilde{y}) \in E'.
\eeq
Then it follows from \cref{pd} that $x = \mathcal{P}(\tilde{x}) = x'$ and analogously $y = \mathcal{P}(\tilde{y}) = y'$. Thus, $\mathcal{D}$ is injective and the first claim follows.

If the sizes of the Hasse diagrams are equal, then $\mathcal{D}$ must be bijective. Then, $\mathcal{D}^{-1}: E' \rightarrow E$ is well-defined and bijective and due to \cref{pd}, we have for all $(x', y') \in E'$ that 
\begin{equation}\label{dinv}
\mathcal{D}^{-1}(x', y') = (\mathcal{P}(x'), \mathcal{P}(y')) \in E.
\end{equation} 
Thus, we also have that
\begin{equation}\label{edgeeq}
(x', y') \in E' \Leftrightarrow (\mathcal{P}(x'), \mathcal{P}(y')) \in E.
\end{equation}
We now wish to show that $S'$ is the union of a relabelling of $S$ and an arbitrary number of disconnected elements. This is equivalent to showing that for all $x' \in S'$ which are not unrelated to all other elements, $x' \in \text{Im}(\mathcal{E}) \cong S$.

If $x'$ is not unrelated to all other elements, then there must exist at least one edge that includes $x'$, i.e., there exists $y' \in S'$ such that $(x', y') \in E'$ or $(y', x') \in E'$. Assume w.l.o.g. the former as the dual case is analogous. Then, due to \cref{dinv,edgeeq}, we have that $\mathcal{D}^{-1}(x', y') = (\mathcal{P}(x'), \mathcal{P}(y')) \in E$. On the other hand, using $\mathcal{P} \circ \mathcal{E} \circ \mathcal{P} = \mathcal{P}$ (\cref{intomin}), we have that
\beq
(\mathcal{P} \circ \mathcal{E} \circ \mathcal{P}(x'), \mathcal{P} \circ \mathcal{E} \circ \mathcal{P}(y')) = (\mathcal{P}(x'), \mathcal{P}(y')) \in E
\eeq
which due \cref{edgeeq} implies that $(\mathcal{E} \circ \mathcal{P}(x'), \mathcal{E} \circ \mathcal{P}(y')) \in E'$. We then have
\beq
\mathcal{D}^{-1}(\mathcal{E} \circ \mathcal{P}(x'), \mathcal{E} \circ \mathcal{P}(y')) = (\mathcal{P} \circ \mathcal{E} \circ \mathcal{P}(x'), \mathcal{P} \circ \mathcal{E} \circ \mathcal{P}(y')) = (\mathcal{P}(x'), \mathcal{P}(y')) = \mathcal{D}^{-1}(x', y')
\eeq
where we again used \cref{dinv} and that $\mathcal{P} \circ \mathcal{E} \circ \mathcal{P} = \mathcal{P}$. As $\mathcal{D}^{-1}$ is injective, it hence follows that
\beq
(\mathcal{E} \circ \mathcal{P}(x'), \mathcal{E} \circ \mathcal{P}(y')) = (x', y')
\eeq
and in particular $\mathcal{E} \circ \mathcal{P}(x') = x'$. This implies that $x' \in \text{Im}(\mathcal{E})$, completing the proof.

The converse statement is trivial.
\end{proof}
Minimal representatives also have the shortest longest (anti)chains (a chain is a totally ordered subset of a partial order while an antichain is a subset of a partial order such that none of its elements are related). Although, in this case we can no longer assume that the inequality is strict, even when excluding the FPOs that are order-isomorphic to the minimal representative with additional disconnected elements as above.
\begin{restatable}{coro}{minchain}\label{minchain}
Let $R$ be a minimal representative. Let $n$ be the length of the longest (anti)chain of $R$. Then, for all $T \sim R$, it holds that the longest (anti)chain in $T$ has at least length $n$. 
\end{restatable}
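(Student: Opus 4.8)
The plan is to read both statements directly off the order-embedding property supplied by \cref{minrep}, with no need for the projection map of \cref{intomin}. Fix a labelling $S\in R$ and an arbitrary labelling $S'\in T$. Since $S\sim S'$ and $R$ is a minimal representative, \cref{minrep} guarantees a FOP map $\mathcal{E}\colon S\to S'$ and, moreover, tells us that any such map is an order-embedding, so $x\leq y\Leftrightarrow \mathcal{E}(x)\leq\mathcal{E}(y)$; in particular $\mathcal{E}$ is injective, since $\mathcal{E}(x)=\mathcal{E}(y)$ forces $x\leq y$ and $y\leq x$, hence $x=y$ by antisymmetry. The whole argument then consists of transporting a witnessing chain, respectively antichain, across $\mathcal{E}$.

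For the chain case: take a longest chain $x_1<x_2<\dots<x_n$ in $S$. Order-preservation gives $\mathcal{E}(x_1)\leq\mathcal{E}(x_2)\leq\dots\leq\mathcal{E}(x_n)$, and injectivity upgrades each $\leq$ to $<$, so this is a chain of length $n$ in $S'$; hence the longest chain of $T$ has length at least $n$. For the antichain case: take a longest antichain $\{x_1,\dots,x_n\}$ in $S$. Its images under $\mathcal{E}$ are distinct by injectivity, and for $i\neq j$ the relation $\mathcal{E}(x_i)\leq\mathcal{E}(x_j)$ is impossible, because order-reflection would yield $x_i\leq x_j$, contradicting incomparability; so $\{\mathcal{E}(x_1),\dots,\mathcal{E}(x_n)\}$ is an antichain of size $n$ in $S'$, and the longest antichain of $T$ has length at least $n$. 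Since the choice of $S'\in T$ was arbitrary, this holds for the FPO type $T$.

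I do not expect a genuine obstacle here. The only point deserving care is to explain, in contrast with \cref{minrepsize} and \cref{minrephasse}, why strictness cannot be recovered: the construction only exhibits a chain (antichain) in $S'$ that is \emph{at least as long}, and $\mathcal{E}$ failing to be surjective does not by itself produce a longer one. To make the accompanying remark precise I would round off the proof by exhibiting a small witness --- an FPO type $T$ inequivalent to $R$ that is not of the form ``$R$ together with disconnected elements'' --- for which the longest chain (and, in a separate example, the longest antichain) has exactly the same length as in $R$, thereby confirming that the inequality cannot be improved to a strict one.
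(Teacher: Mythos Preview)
Your proposal is correct and follows essentially the same approach as the paper: invoke the order-embedding from \cref{minrep}, use order-preservation plus injectivity to transport a longest chain, and order-reflection plus injectivity to transport a longest antichain. The paper's proof is nearly word-for-word the same, only more terse; your added discussion of non-strictness and the intention to exhibit witnesses goes slightly beyond what the paper actually proves but is consistent with its informal remark preceding the corollary.
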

\begin{proof}
Let $S \in R, S' \in T$. By \cref{minrep}, there exists an FOE from $S$ into $S'$. An FOE is in particular order-preserving which implies that it maps chains to chains. However, an order-embedding is also injective and thus the image of the longest chain in $S$ is a chain of the same length. Thus, the longest chain $S'$ has at least length $n$. The result for antichains is proved analogously, using the fact that the order embedding is order-reflecting instead of using order-preservation. As these properties are clearly invariant under relabelling the statements are true for the FPO types as well.
\end{proof}

\begin{restatable}{prop}{2parents}\label{2parents}
Let $R$ be a minimal representative and $S \in R$. For all $x \in S$ internal, it holds that for all $y > x$, there exists $z > x$ such that $y, z$ are unrelated. An analogous statement holds if $y < x$. 
\end{restatable}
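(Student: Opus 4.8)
The plan is to argue by contradiction, producing a non-injective frame- and order-preserving self-map of $S$, which is impossible since $R$ is a minimal representative: by \cref{mincrit} every FOP self-map of a labelling of a minimal representative is a relabelling, hence injective. The case in which no element lies strictly above $x$ is vacuous, so I would suppose there is some $y > x$ that is comparable to every element of the strict up-set $U_x := \{z \in S \mid z > x\}$, and derive a contradiction.

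The construction is a single ``collapse''. Set $A := \{z \in U_x \mid z < y\}$ and define $f \colon S \to S$ by $f(x) := y$, $f(z) := y$ for every $z \in A$, and $f := \mathrm{id}$ on all remaining elements. The first thing to record, and the only genuinely structural point, is that every element of $A$ is internal: such a $z$ satisfies $x < z < y$, so it is neither minimal nor maximal in $S$ and therefore not a frame element by \cref{fpodef}; and $x$ is internal by hypothesis. Hence $f$ fixes every frame element and is automatically frame-preserving. It is also non-injective, since $f(x) = y = f(y)$ while $x \ne y$.

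It then remains to check that $f$ is order-preserving, and this is the step that needs care (though it is routine): given $u \le v$, one distinguishes whether each of $u, v$ lies in $\{x\} \cup A$. If both do, then $f(u) = f(v) = y$; if neither does, $f$ acts as the identity on $u$ and $v$; and in the two mixed cases the required inequality reduces, using that $y$ is comparable to all of $U_x$ together with the definition of $A$, to the elementary implications $w < x \Rightarrow w < y$ and $z \in U_x \setminus A \Rightarrow z \ge y$. Once order-preservation is established, $f$ is a non-injective FOP self-map of $S$, contradicting \cref{mincrit}. I expect no obstacle beyond this case check; the real content is the observation that the elements strictly between $x$ and $y$ are forced to be internal, so collapsing them onto $y$ cannot interfere with the frame.

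The statement for $y < x$ is the order-theoretic dual and needs no new ideas: one runs the same argument in $S^{\mathrm{op}}$ (equivalently, with $L_x := \{z \mid z < x\}$, $C := \{z \in L_x \mid z > y\}$, and $f$ collapsing $\{x\} \cup C$ onto $y$), since inputs and outputs merely trade places under order reversal while the elements of $C$ are still internal by the same neither-minimal-nor-maximal argument.
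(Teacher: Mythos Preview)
Your argument is correct and is essentially the same as the paper's: you build exactly the same non-bijective FOP self-map collapsing the interval $\{a : x \le a \le y\}$ onto $y$ (your $f$ and the paper's $\mathcal{E}$ agree pointwise, since $f(y)=y$ either way), and then invoke \cref{mincrit} for the contradiction. Your observation that elements of $A$ are neither minimal nor maximal, hence automatically internal, is a slightly cleaner way to verify frame-preservation than the paper's case split on inputs and outputs.
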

\begin{proof}
Assume there exists $y \in S$ such that no such $z \in S$ exists, i.e., for all $z > x$ it holds that either $y \leq z$ or $y \geq z$. We use \cref{mincrit} (or more specifically \cref{minproj}) to show that $S$ is not a minimal representative. Define the map $\mathcal{E}: S \rightarrow S$ as follows
\begin{equation}
\mathcal{E}(a) = \begin{cases}  y, &x \leq a \leq y \\ a, &\text{otherwise} \end{cases}.
\end{equation}
This is frame-preserving: for all $I \in S$ which are inputs, $x \not \leq I$ as $I$ is minimal in $S$ and $x \neq I$. Hence, $\mathcal{E}(I) = I$. On the other hand, for all $O \in S$ which are outputs, if $x \leq O \leq y$, then $O = y$ as $O$ is maximal in $S$. Hence, $\mathcal{E}(O) = y = O$. If $O$ is not in between $x$ and $y$, then $\mathcal{E}(O) = O$ by definition.

Additionally, $\mathcal{E}$ is order-preserving: For $a < b$, if $\mathcal{E}(a) = y$ it follows that $x \leq a \leq y$ by definition of $\mathcal{E}$ and thus by transitivity, $x \leq a < b$. Then by assumption, either $b \leq y$ in which case $\mathcal{E}(b) = y$ as well, or $b \geq y$ and thus $\mathcal{E}(b) = b \geq y \geq a$. If $\mathcal{E}(a) = a$, then either $\mathcal{E}(b) = b$, which is trivial, or $\mathcal{E}(b) = y$ in which case $b \leq y$ and thus by transitivity also $a < y$. We hence have $\mathcal{E}(a) = a < y = \mathcal{E}(b)$.

Thus, $\mathcal{E}$ is FOP. It is also not injective as $\mathcal{E}(x) = \mathcal{E}(y) = y$. 
 
The dual statement with $y < x$ follows analogously.
\end{proof}
The number of equivalence classes and thus minimal representatives quickly increases in terms of the number of inputs and outputs. Therefore, it is useful to have tools to construct new minimal representatives directly from already known ones. For example, the ``parallel composition'' of two minimal representatives is again a minimal representative. Let us first define what we mean by parallel composition of partial orders.
\begin{restatable}{defi}{parpo}\label{parpo}
Let $S_1 \in T_1 \in \mathcal{T}_{\MINOtype{N_1}{M_1}}, S_2 \in T_2 \in \mathcal{T}_{\MINOtype{N_2}{M_2}}$. Their parallel composition $S_1 \otimes S_2 \in \mathcal{S}_{(N_1+N_2)(M_1+M_2)}$ is the framed partial order with the following properties: the elements, inputs and outputs are the disjoint union of the elements, inputs and outputs of $S_1$ and $S_2$ and the order relations are given by $x \leq y$ iff $x, y \in S_1$ and $x \leq y$ in $S_1$ or $x, y \in S_2$ and $x \leq y$ in $S_2$. The FPO type $T_1 \otimes T_2$ is the FPO type of $S_1 \otimes S_2$.
\end{restatable}
This definition is consistent with parallel composition of processes, i.e., $\mathcal{G}(i_\mathtt{F} \otimes j_\mathtt{F}) = \mathcal{G}(i_\mathtt{F}) \otimes \mathcal{G}(j_\mathtt{F})$ for two implementations $i_\mathtt{F}, j_\mathtt{F}$. This follows directly from the definition of $\mathcal{G}$. 
\begin{restatable}{prop}{parmin}\label{parmin}
Let $T_1, T_2 \in \mathcal{T}_{\MINO}$. Then $T = T_1 \otimes T_2$ is a minimal representative iff $T_1, T_2$ are minimal representatives. 
\end{restatable}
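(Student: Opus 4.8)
The plan is to prove both implications through the characterisation of minimal representatives in \cref{mincrit}: an FPO type is a minimal representative exactly when every FOP map between two of its labellings is a relabelling. Two structural facts about parallel composition (\cref{parpo}) will be used throughout: in $S_1\otimes S_2$ there are no order relations between the elements coming from $S_1$ and those coming from $S_2$, so each factor is a union of connected components; and the frame list of $T_1\otimes T_2$ is the concatenation of those of $T_1$ and $T_2$, so a FOP map between labellings of $T_1\otimes T_2$ carries the frame elements of the first factor to frame elements of the first factor, and likewise for the second.

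For the direction ``$T_1\otimes T_2$ minimal $\Rightarrow$ $T_1,T_2$ minimal'' I would argue by contraposition. Suppose, without loss of generality, that $T_1$ is not a minimal representative, so by \cref{mincrit} there are labellings $S_1,S_1'\in T_1$ and a FOP map $\phi\colon S_1\to S_1'$ that is not a relabelling. Choosing any $S_2\in T_2$, the map $\phi\otimes\mathrm{id}_{S_2}\colon S_1\otimes S_2\to S_1'\otimes S_2$ is FOP (order-preservation and the frame condition hold factor by factor), its source and target are labellings of $T_1\otimes T_2$, and it is not a relabelling: an order-isomorphism $S_1\otimes S_2\to S_1'\otimes S_2$ respecting the frame would restrict to an order-isomorphism $S_1\to S_1'$, forcing $\phi$ to have been a relabelling. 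Hence $T_1\otimes T_2$ violates \cref{mincrit}.

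For the converse, assume $T_1,T_2$ are minimal representatives and take, by \cref{mincrit}, labellings $S,S'\in T_1\otimes T_2$ and a FOP map $\mathcal{E}\colon S\to S'$; I must show $\mathcal{E}$ is a relabelling. Composing with relabellings, we may take $S=S_1\otimes S_2$ and $S'=S_1'\otimes S_2'$ with $S_i,S_i'\in T_i$. The crucial input is the lemma that \emph{no connected component of a labelling of a minimal representative consists solely of internal elements}: if $C\subseteq S_i$ were such a component, then sending all of $C$ to a single frame element of $S_i$ and fixing everything else would be a non-injective FOP self-map of $S_i$ (legitimate precisely because there are no relations across component boundaries and $C$ contains no frame element), contradicting \cref{mincrit} for $T_i$; the only exceptions are the degenerate classes $\mathcal{T}_{[0\,0]}$, for which the statement is checked by hand. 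Granting the lemma, and using that an order-preserving map sends a Hasse-connected set to a Hasse-connected set: each component $C$ of $S_1$ contains a frame element $p$, so $\mathcal{E}(C)$ is a connected set containing $\mathcal{E}(p)=p\in S_1'$, whence $\mathcal{E}(C)\subseteq S_1'$; similarly $\mathcal{E}(S_2)\subseteq S_2'$. Thus $\mathcal{E}$ restricts to FOP maps $\mathcal{E}_1\colon S_1\to S_1'$ and $\mathcal{E}_2\colon S_2\to S_2'$ with $\mathcal{E}=\mathcal{E}_1\otimes\mathcal{E}_2$; by \cref{mincrit} each $\mathcal{E}_i$ is a relabelling, and a parallel composite of relabellings is again a relabelling, so $\mathcal{E}$ is one.

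The bookkeeping — that $\phi\otimes\mathrm{id}$, $\mathcal{E}_1\otimes\mathcal{E}_2$, and the various restrictions are FOP or relabellings as claimed, and that order-preserving maps preserve Hasse-connectedness — is routine. The main obstacle is the lemma on purely-internal components: it is exactly what prevents $\mathcal{E}$ from mixing the two tensor factors, and it is also where a little care about the $M=N=0$ degeneracies is required, since there a lone isolated internal element is its own minimal representative even though adjoining it to an FPO with a nontrivial frame is not.
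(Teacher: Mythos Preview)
Your argument is correct and follows essentially the same approach as the paper: both directions use \cref{mincrit}, the contrapositive direction extends a bad FOP map on one factor by the identity, and the forward direction shows that any FOP map on the product restricts factor by factor because each component of a factor is anchored to the frame. Where the paper reduces ``without loss of generality'' to indecomposable factors (without justifying that reduction), you instead prove the small lemma that no connected component of a labelling of a minimal representative with nonempty frame is purely internal; this is the same anchoring idea, packaged a bit more cleanly. Your explicit flag on the $\MINOtype{0}{0}$ degeneracy is a genuine observation: the paper's proof silently assumes each factor has frame elements (``it contains at least the respective frame elements''), and indeed the proposition fails if one factor is the single isolated internal point.
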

The intuition here is that if $T$ were not a minimal representative, then by $\cref{mincrit}$ there exists a non-bijective self-map $\mathcal{E}: S \rightarrow S$ for some $S \in T$. The fact that $T_1$ and $T_2$ are essentially ``space''-like separated means the restrictions of $\mathcal{E}$ to $S_i$ must be self-maps on $S_i$. But then we would have at least one non-bijective self map on a minimal representative. 

\begin{proof}
``$\Leftarrow$'': Assume $T_1, T_2$ are minimal representatives. W.l.o.g., we can assume that $T_1, T_2$ cannot be further decomposed, i.e., if $T_1 = T'_1 \otimes T'_2$, then either $T_1 = T'_1$ and $T'_2$ is the FPO type of the empty set or vice versa, and analogously for $T_2$. Let $S \in T, S_i \in T_i$ for $i=1,2$ such that $S_1 \otimes S_2 = S$ and $\mathcal{E}: S \rightarrow S$ be a FOP map. Note that $\text{Im}(\mathcal{E}) \cap S_i \neq \emptyset$ as it contains at least the respective frame elements. We now claim that $x \in S_i$ implies $\mathcal{E}(x) \in S_i$ for $i=1,2$. Since we assumed that $T_i$, and thus $S_i$, cannot be further decomposed, there exists some path $x = x_0, x_1, ... x_n, x_{n+1} = O_j$ (in the sense that $x_k$ and $x_{k+1}$ are related for all $k=0,...,n+1$) from $x$ to some arbitrary frame element $O_j \in S_i$. We now prove the claim by induction. The frame element $O_j = x_{n+1}$ is mapped to itself and thus in $S_i$. Assume now for some $m \leq n+1$, all $x_k$ for $k \geq m$ are mapped into $S_i$ by $\mathcal{E}$. Since $x_{k-1}$ is related to $x_k$, $\mathcal{E}(x_{k-1})$ is related to $\mathcal{E}(x_k)$. But two elements can only be related if they are either both in $S_1$ or both in $S_2$. Thus, since $\mathcal{E}(x_k) \in S_i$, we find that $\mathcal{E}(x_{k_1}) \in S_i$. By induction, we have $\mathcal{E}(x) \in S_i$. Hence, $\mathcal{E}_i := \mathcal{E}|_{S_i}$ defines a FOP self-map on $S_i$. Since $S_i$ is a labelling of a minimal representative, $\mathcal{E}_i$ must be surjective (\cref{mincrit}) and since this holds for both $i=1,2$, we have that $\mathcal{E}$ must be surjective. Since this holds any $\mathcal{E}$, $T$ is a minimal representative due to \cref{minproj}.

``$\Rightarrow$:'' Assume now that $T_1$ is not a minimal representative. Then, there exists $S_1 \in T_1$ and a non-surjective FOP map $\mathcal{E}_1: S_1 \rightarrow S_1$. Define for some $S_2 \in T_2$,

\begin{equation}
\mathcal{E}(x) = \begin{cases} \mathcal{E}_1(x), \text{ if } x \in S_1 \\ x, \text{ if } x \in S_2 \end{cases}.
\end{equation}

Evidently, this is frame-preserving. Let now $x < y$, then $x \in S_1$ iff $y \in S_1$. Thus, either $\mathcal{E}(x) = \mathcal{E}_1(x) < \mathcal{E}_1(y) = \mathcal{E}(y)$ if $x, y \in S_1$ or $\mathcal{E}(x) = x < y = \mathcal{E}(y)$ if $x, y \in S_2$. In both cases, we have order-preservation. Thus, $\mathcal{E}$ is a FOP map and it is non-surjective since $\mathcal{E}_1$ is non-surjective. Therefore, $T$ is not a minimal representative.
\end{proof}

This result tells us that it suffices to find the minimal representatives with a single connected component, i.e., where every element is connected to every other element via a sequence or path of elements such that two neighbours are related. We can make a related statement for a more stringent notion of connectivity that excludes frame elements from the path. Let us first formally define both notions.

\begin{restatable}{defi}{path}\label{path}
Let $S$ be an FPO and $x, y \in S$. A path from $x$ to $y$ is a sequence $x_0, x_1, ... x_n \in S$ such that $x_0 = x, x_n = y$ and $x_i, x_{i+1}$ are related for all $0 \leq i \leq n$. If such a path exists, we say that $x$ is connected to $y$. The path length is the number of elements in the sequence. A path is called internal if all $x_i \neq x, y$ are internal and we say that $x$ and $y$ are internally connected. The (internal) distance of $x$ and $y$ is the minimum over all (internal) paths from $x$ to $y$. 
\end{restatable}

With this in hand, we can define the internal connected component.

\begin{restatable}{defi}{intcon}\label{intcon}
Let $S$ be an FPO and $x \in S$. The connected component of $x$ $\mathsf{Con}(x)$ is the set of all $y \in S$ that are connected to $x$. The internal connected component of $x$, $\mathsf{ConInt}(x)$ is defined as the set of all $y \in S$ that are internally connected to $x$.
\end{restatable}

For this notion of internal connection, we can show an analogous, although slightly weaker, result to \cref{parmin}. 

\begin{restatable}{prop}{onecon}\label{onecon}
Let $R$ be a minimal representative and $S \in R$. For all $x \in S$, it holds that $\mathsf{ConInt}(x)$ is a labelling of a minimal representative. Additionally, for $x, y \in S$, if $\mathsf{ConInt}(x) \neq \mathsf{ConInt}(y)$, then $S_x := \mathsf{ConInt}(x) \cup I \cup O, S_y := \mathsf{ConInt}(y) \cup I \cup O$ are unrelated under $\succ$. 
\end{restatable}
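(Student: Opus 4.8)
The plan is to establish both assertions with a single device: every frame- and order-preserving (FOP) self-map of the sub-FPO obtained by restricting $S$ to an \emph{internal connection component} (closed up with the frame of $S$) extends, via the identity on the complement, to a FOP self-map of $S$; since $S$ is a labelling of a minimal representative, \cref{mincrit} forces that extension to be a relabelling, from which the conclusions fall out. The engine for this is the following structural fact about \cref{path,intcon}: if $a\in\mathsf{ConInt}(x)$ is internal and $a$ is comparable to $b$ in $S$, then $b\in\mathsf{ConInt}(x)$. Indeed $(a,b)$ is an internal path, so $b\in\mathsf{ConInt}(a)$, and concatenating an internal path from $x$ to $a$ with an internal path from $a$ to $b$ is again internal (the joining vertex $a$ is internal), so $\mathsf{ConInt}(a)\subseteq\mathsf{ConInt}(x)$. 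Equivalently, every element of $S\setminus\mathsf{ConInt}(x)$ is incomparable to every internal element of $\mathsf{ConInt}(x)$, so the only comparabilities ``crossing the boundary'' of $\mathsf{ConInt}(x)$ run through frame elements of $S$ that lie inside $\mathsf{ConInt}(x)$. When the base point is internal one moreover gets $\mathsf{ConInt}(a)=\mathsf{ConInt}(x)$ for all internal $a\in\mathsf{ConInt}(x)$, so the internal parts of the internal connection components are exactly the classes of the ``internally connected'' relation restricted to internal elements, and the frame part of $\mathsf{ConInt}(x)$ is determined by that class; I would take $x$ (and later $y$) internal, the frame-base-point cases being the trivial one-element component or already covered by this analysis.

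For the first claim, write $C=\mathsf{ConInt}(x)$ with the order induced from $S$ and with frame the inputs and outputs of $S$ contained in $C$ (still minimal, resp.\ maximal, in $C$, so $C$ is a valid FPO). Given a FOP self-map $\mathcal E_C\colon C\to C$, define $\mathcal E\colon S\to S$ by $\mathcal E|_C=\mathcal E_C$ and $\mathcal E|_{S\setminus C}=\mathrm{id}$. It is frame-preserving since $\mathcal E_C$ fixes the frame elements of $C$ and $\mathcal E$ fixes the rest. For order-preservation I check $a<b$ in $S$ by cases: both in $C$ (use order-preservation of $\mathcal E_C$), both outside $C$ (trivial), and the mixed case, where the structural fact forces the endpoint lying in $C$ to be a frame element, which $\mathcal E$ fixes, so the inequality survives. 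Hence $\mathcal E$ is FOP, so by \cref{mincrit} it is a relabelling, in particular surjective; being the identity off $C$, its restriction $\mathcal E_C$ is then surjective on $C$, and a surjective order-preserving self-map of a finite poset is an order-automorphism (it has finite order, so its inverse is a power of it). Thus every FOP self-map of $C$ is a relabelling, and \cref{mincrit} (equivalently \cref{minproj}) applied to $C$ shows $C$ is a labelling of a minimal representative.

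For the second claim, suppose for contradiction $S_x\succ S_y$, witnessed by a FOP map $\phi\colon S_x\to S_y$; as $S_x$ and $S_y$ carry the frame of $S$, $\phi$ fixes all frame elements. Since $S\setminus S_x$ consists of internal elements not in $\mathsf{ConInt}(x)$, the extension $\psi\colon S\to S$ with $\psi|_{S_x}=\phi$ and $\psi|_{S\setminus S_x}=\mathrm{id}$ is well defined, and the same boundary analysis --- now also using that $x$ is internal, to rule out two comparable internal elements straddling the boundary --- shows $\psi$ is FOP. By \cref{mincrit}, $\psi$ is a relabelling, hence surjective; since $\psi(S)=\phi(S_x)\cup(S\setminus S_x)$, surjectivity forces $S_x\subseteq\phi(S_x)\subseteq S_y$. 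Restricting to internal elements, the internal part of $\mathsf{ConInt}(x)$ is contained in that of $\mathsf{ConInt}(y)$; but these are classes of the internally-connected relation, hence equal or disjoint, so they coincide, whence $\mathsf{ConInt}(x)=\mathsf{ConInt}(y)$, contradicting the hypothesis. The symmetric argument excludes $S_y\succ S_x$, so $S_x$ and $S_y$ are unrelated under $\succ$.

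The step I expect to be the main obstacle is exactly this boundary analysis shared by both parts: verifying that extending a FOP map by the identity across the ``cut'' separating an internal connection component from the rest of $S$ never breaks order-preservation. This is where the definition of an internal path does the real work --- it pins any comparability crossing the cut to a frame element, which the extended map fixes --- and once it is in place everything else is bookkeeping plus direct appeals to \cref{mincrit} and to the fact that minimal representatives are strictly the smallest FPOs in their equivalence classes (\cref{minrep}).
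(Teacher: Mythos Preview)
Your proposal is correct and follows essentially the same approach as the paper: both arguments hinge on extending a FOP map on (a frame-closure of) an internal connection component by the identity on the complement, using the ``boundary'' observation that any comparability crossing the cut must involve a frame element, and then invoking \cref{mincrit} on $S$. The only cosmetic differences are that the paper runs part one by contrapositive (start from a non-surjective FOP self-map on $\mathsf{ConInt}(x)$) and in part two derives non-surjectivity of the extended map directly, whereas you argue positively and squeeze out the contradiction via $S_x\subseteq S_y$; both routes are equivalent.
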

\begin{proof}
Assume there exists $x \in S$ such that $\mathsf{ConInt}(x)$ is not a labelling of a minimal representative. Then, there exists a non-surjective FOP map $\mathcal{E}: \mathsf{ConInt}(x) \rightarrow \mathsf{ConInt}(x)$. Let us show that the trivial extension of $\mathcal{E}$ to $S$ that acts as the identity on $S \backslash \mathsf{ConInt}(x)$ defines a non-surjective FOP map which would imply that $R$ is not a minimal representative. Let $y < z$. If neither or both are in $\mathsf{ConInt}(x)$, then order preservation follows from the fact that both $\mathcal{E}$ and the identity are order-preserving. Thus, assume $y \in \mathsf{ConInt}(x)$ and $z \not \in \mathsf{ConInt}(x)$. By the definition of internal connected components, this is only possible if $y$ is in the frame. But then both $y, z$ are mapped to themselves and order-preservation is trivial. A dual argument can be used to show order-preservation when $y \not \in \mathsf{ConInt}(x)$ and $z \in \mathsf{ConInt}(x)$. Thus, we have found a non-surjective FOP map on $S$ but this is impossible since $S$ is a minimal representative. Thus, $\mathsf{ConInt}(x)$ must be a labelling of a minimal representative.

Let now $x, y \in S$ such that w.l.o.g. $S_x \succ S_y$. Then, by definition, there exists a FOP map $\mathcal{E}': S_x \rightarrow S_y$. Define now the following self-map on $S$,

\begin{equation}
\mathcal{E}''(z) = \begin{cases}  \mathcal{E}'(z), & z \in S_x \\ z, &\text{otherwise} \end{cases}.
\end{equation}

Since both $\mathcal{E}'$ and the identity are frame-preserving, this is also frame-preserving.Order-preservation follows from an analogous proof to the one we used to show that internal connection components are labellings of minimal representatives. Thus, $\mathcal{E}''$ is a FOP map and it is not surjective as $\mathcal{E}''(z) \in S_y \subsetneq S$. Hence, by contradiction, we have that $S_x$ and $S_y$ are unrelated.

\end{proof}

\section{Proofs for relevant minimal representatives in terminal theories}\label{sec:minproofs}

In this appendix, we prove that the FPO types in \cref{sec:simplemin} are indeed minimal representatives and that they are all the relevant minimal representatives for terminal theories up to the restrictions stated. The ordering relations under $\succ$ between minimal representatives are easy to check so we will omit their proofs. In all proofs, we will assume a labelling such that the ordered list of inputs is $\mathcal{I} = (I_1, I_2,..., I_M)$ unless there is only one input, in which case, we will use simply $I$. We will deal with outputs analogously.

The following is a useful lemma which we will use explicitly and implicitly several times.

\begin{restatable}{lemma}{collapse}\label{collapse}
Let $R$ be a minimal representative and $S \in R$. If there exists some $x \in S$ such that for all inputs $I_i$ and all outputs $O_j$ it holds that $I_i \leq x \leq O_j$, then, $x$ is the only internal element of $S$ or $S$ consists only of the frame.
\end{restatable}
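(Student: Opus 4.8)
The plan is to use the characterization of minimal representatives via \cref{minproj}: a labelling $S$ of a minimal representative admits no non-surjective FOP self-map. So I would argue by contraposition — assuming there is an internal element $x$ with $I_i \leq x \leq O_j$ for all inputs $I_i$ and outputs $O_j$, and assuming $S$ has at least one other internal element, I would construct an explicit idempotent FOP self-map $\mathcal{E}: S \to S$ that is not surjective, contradicting minimality.

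The natural candidate is the map that collapses everything onto $x$ while fixing the frame: set $\mathcal{E}(a) = x$ for every internal element $a$, and $\mathcal{E}(A) = A$ for every frame element $A$. First I would check this is frame-preserving, which is immediate by definition. Then I would check order-preservation. Take $a < b$ in $S$. There are three cases. If both $a,b$ are internal, then $\mathcal{E}(a) = x = \mathcal{E}(b)$, fine by reflexivity. If $a$ is a frame element (necessarily an input $I_i$, since frame elements below something are minimal) and $b$ is internal, then $\mathcal{E}(a) = I_i \leq x = \mathcal{E}(b)$ by hypothesis. If $a$ is internal and $b$ is a frame element (necessarily an output $O_j$), then $\mathcal{E}(a) = x \leq O_j = \mathcal{E}(b)$, again by hypothesis. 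The remaining case, both frame, is impossible since inputs are minimal and outputs are maximal, so no frame element lies strictly below another unless... actually one must also handle a frame input below a frame output, where $\mathcal{E}$ fixes both and order-preservation is trivial. So $\mathcal{E}$ is FOP. It is clearly idempotent: $\mathcal{E}(\mathcal{E}(a)) = \mathcal{E}(x) = x = \mathcal{E}(a)$ for internal $a$, and it fixes the frame. By the assumption that $S$ has an internal element $y \neq x$, the map is not injective ($\mathcal{E}(x) = \mathcal{E}(y) = x$), hence not surjective, contradicting \cref{minproj}. Therefore $x$ is the only internal element, or — if no internal element other than $x$ existed to begin with but one must still handle the degenerate reading — $S$ consists only of the frame; but the hypothesis names an internal $x$, so the conclusion is that $x$ is the unique internal element.

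I expect the only subtlety — hardly an obstacle — is being careful with the edge cases in the order-preservation check, in particular confirming that a frame element sitting below an internal element must be an input (since it is then non-maximal, and by the FPO axioms the only non-maximal frame elements are inputs, which are moreover minimal) and dually for outputs; this is exactly what lets the hypothesis $I_i \leq x \leq O_j$ cover all the mixed cases. The phrasing ``or $S$ consists only of the frame'' in the statement is presumably there to cover the vacuous situation where the hypothesis is read as ``if such an $x$ exists'' with $x$ possibly absent; under the reading that $x$ is genuinely an internal element, the clean conclusion is that it is the unique one, and the argument above delivers precisely that.
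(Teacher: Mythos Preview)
Your proposal is correct and essentially identical to the paper's proof: both define the same collapsing map sending every internal element to $x$ and fixing the frame, verify it is FOP via the hypothesis $I_i \leq x \leq O_j$, and conclude from minimality that it must be surjective. One small correction to your reading of the statement: the clause ``or $S$ consists only of the frame'' is not there to cover an absent $x$, but rather the case where $x$ itself is a frame element (e.g., a sole input below all outputs); your map still works in that case and shows there can be no internal elements at all, which is exactly what the paper notes parenthetically.
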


\begin{proof}
Assuming that such an $x$ exists, then we can define a FOP map $\mathcal{E}:S\to S$ such that
\begin{equation}
\mathcal{E}(y) = \begin{cases}  x, &y \text{ internal} \\ y, &\text{otherwise} \end{cases}.
\end{equation}
Clearly, this is frame-preserving. The only non-trivial situation when it comes to checking order-preservation is when we have an internal element that is related to a frame element. Let now $I_i$ be an input and $y$ an internal element such that $I_i < y$. Then, by our assumption about $x$, $I_i < x = \mathcal{E}(y)$. Analogously, we can check order-preservation in the case of an output and an internal element that are related. Hence, $\mathcal{E}$ is order-preserving. Since $R$ is, by assumption, a minimal representative, then any FOP map on $S$, such as $\mathcal{E}$, must be surjective. It is clear that $\mathcal{E}$ can only be surjective when there exists no internal element besides $x$ (or no internal element if $x$ is part of the frame) which completes the proof. 
\end{proof}

\subsection{Example 1: No inputs}

\begin{restatable}{prop}{0I2O}\label{0I2O}
The relevant minimal representatives for terminal theories with no inputs and two outputs are 
\begin{equation}\label{eq:0I2Ob}
\tikzfig{Diagrams/0I2O}.
\end{equation}
\end{restatable}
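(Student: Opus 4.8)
The plan is to establish three things in turn: (i) the two FPO types displayed in \cref{eq:0I2Ob} are minimal representatives; (ii) they are relevant for causal theories; and (iii) there are no other relevant minimal representatives in the class $\mathcal{S}_{\MINOtype{0}{2}}$.

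For (i) and (ii), I would take the first type to be the two-element antichain $\{O_1,O_2\}$ and the second to be $\{c,O_1,O_2\}$ with $c<O_1$ and $c<O_2$. In either case a FOP self-map fixes $O_1$ and $O_2$ by frame-preservation, and in the second case $c$ is the only element lying below both $O_1$ and $O_2$, so it too is fixed; hence the only FOP self-map of each is the identity, and \cref{mincrit} (or \cref{minproj}) shows both are minimal representatives. Since in both FPO types every maximal element lies in the frame (note $c$ is not maximal), the characterisation of relevant minimal representatives for causal theories from \cref{sec:causaltheories} shows they are relevant.

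For (iii), let $T\in\mathcal{T}_{\MINOtype{0}{2}}$ be a minimal representative that is relevant for causal theories, and pick $S\in T$; then every maximal element of $S$ lies in the frame $\{O_1,O_2\}$. If $S$ has no internal element then $S=\{O_1,O_2\}$ with $O_1\neq O_2$ both maximal, i.e.\ the two-element antichain. Otherwise $S$ has an internal element; any such element $x$ is not maximal, so by finiteness some maximal element --- necessarily $O_1$ or $O_2$ --- lies strictly above it, and thus every internal element lies below $O_1$ or below $O_2$. The crux is to show that some internal element lies below \emph{both} outputs. Supposing not, the internal elements partition into $V_1$ (below $O_1$ only) and $V_2$ (below $O_2$ only), and using maximality of the outputs one checks that $V_1$ and $V_2$ are mutually unrelated and that any element below an element of $V_i$ again lies in $V_i$. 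Consequently the map $\mathcal{E}$ sending every element of $V_1\cup\{O_1\}$ to $O_1$ and fixing all other elements is an idempotent FOP self-map of $S$ with image $S\setminus V_1$; since $S$ is a minimal representative, \cref{minproj} forces $\mathcal{E}$ to be a relabelling, hence $V_1=\emptyset$, and symmetrically $V_2=\emptyset$, contradicting the presence of an internal element. So there is an internal $x$ with $x\le O_1$ and $x\le O_2$, and \cref{collapse} --- whose hypothesis on inputs is vacuously satisfied --- forces $x$ to be the unique internal element, i.e.\ $S$ is the common-cause FPO. Up to relabelling (in particular, swapping the two outputs, which acts trivially on both types) these are exactly the two FPO types in \cref{eq:0I2Ob}.

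The step I expect to be the main obstacle is the order-preservation check for the collapsing map $\mathcal{E}$ in part (iii): it relies on the fact that, when no internal element is a common cause of the two outputs, an element below a $V_i$-element cannot escape $V_i$, and one must also be careful to invoke \cref{collapse} correctly in the zero-input setting.
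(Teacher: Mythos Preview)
Your proof is correct and follows essentially the same line as the paper's. The only difference is that, to show every internal element must lie below both outputs, the paper simply invokes \cref{2parents} (applied with $y=O_1$, the only element unrelated to $O_1$ and above $x$ must lie below $O_2$), whereas you construct the collapsing map $\mathcal{E}$ directly---which amounts to inlining the special case of the proof of \cref{2parents}. Both arguments then finish identically via \cref{collapse}.
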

\begin{proof}
It is obvious that both FPO types are minimal representatives. Any FPO type containing only the frame is a minimal representative (as any FOP self-map has to map the frame to itself), meaning the FPO type on the left is a minimal representative. For the FPO type on the left, if it is not a minimal representative, then there exists a FOP map which maps the internal element to the frame. However, there is no frame element that is in the past of both outputs and as such no such FOP map can exist.

We will now show that these are the only two minimal representatives. Assume now $R$ is a minimal representative with a labelling $S \in R$. If there exist no internal elements, then the FPO type of $S$ is the one on the left-hand side of \cref{eq:0I2O}. Assume now there exists $x \in S$ internal. By \cref{2parents} and the fact that relevant minimal representatives for terminal theories do not have internal maximal elements, $x$ is in the past of both outputs. Hence, $x$ satisfies the requirements of \cref{collapse} and is thus the only internal element. Thus, the FPO type of $S$ is one on the right-hand side of \cref{eq:0I2O}.
\end{proof}

\begin{restatable}{prop}{0I3O}\label{0I3O}
The non-trivial\footnote{Here, a trivial minimal representative is one which can be obtained via \cref{parmin} or permutation of frame elements} relevant minimal representatives for terminal theories with no inputs and three outputs are 
\begin{equation}\label{eq:0I3Ob}
\tikzfig{Diagrams/0I3O}.
\end{equation}
\end{restatable}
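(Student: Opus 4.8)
The plan is to follow the same template as the proof of \cref{0I2O}, but with a finer case analysis, leaning on the structural results of \cref{sec:addprop} — especially \cref{collapse,2parents,parmin,onecon,minproj} — together with the classification of classical causal structures over three observed variables from \cite{evans2016graphs}. First I would reduce to connected framed partial orders: by \cref{parmin} and the remark following \cref{onecon}, it suffices to classify the relevant minimal representatives of class \MINOtype{0}{3} whose underlying partial order has a single connection component, since a disconnected one is a parallel composition of minimal representatives of classes \MINOtype{0}{1} and \MINOtype{0}{2}, which are exactly the ``trivial'' ones excluded in the statement. Since we want the relevant minimal representatives for causal theories, I may assume, as in \cref{sec:causaltheories}, that $S \in R$ has no internal maximal elements; because there are no inputs, this means every internal element lies below at least one of $O_1, O_2, O_3$ and, conversely, every element strictly below an internal element is itself internal.

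The heart of the argument is then a short case split. If some internal element $x$ satisfies $x \le O_j$ for all three outputs, \cref{collapse} forces $x$ to be the unique internal element, yielding the ``common cause of all three outputs'' structure. Otherwise I claim every internal element lies below \emph{exactly} two outputs and is a minimal element of $S$. The ``at least two'' part is immediate from \cref{2parents}: if $x$ is internal with $x < O_i$, there is $z > x$ unrelated to $O_i$, and $z$ is either a second output above $x$ or an internal element which, being below some output by causality, again places a second output above $x$; ``exactly two'' then follows since ``below all three'' was handled by \cref{collapse}. Minimality of internal elements (equivalently, the absence of internal ``chains'' and ``fans'') I would obtain by restricting, via \cref{onecon}, to the internal connection component of a putative non-minimal internal element: such a component is itself a minimal representative of class \MINOtype{0}{1}, \MINOtype{0}{2}, or \MINOtype{0}{3}; the first two collapse by \cref{0I2O} (and its \MINOtype{0}{1} analogue), and a \MINOtype{0}{3} internal component containing a non-minimal element is ruled out by exhibiting a non-surjective frame- and order-preserving self-map and invoking \cref{minproj}. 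Once every internal element is a minimal element below a fixed pair of outputs, $S$ is determined by the family $\mathcal{A}$ of two-element subsets of $\{O_1,O_2,O_3\}$ that occur, with no repeats (two internal elements below the same pair again yield a non-surjective self-map), so $\mathcal{A}$ is an antichain; up to permutation of the frame the non-trivial connected possibilities are $\{O_1 O_2, O_1 O_3\}$ and $\{O_1 O_2, O_1 O_3, O_2 O_3\}$, each checked to be a minimal representative by \cref{mincrit}. Together with the common-cause structure these are precisely the FPO types in \cref{eq:0I3Ob}; relevance for causal theories then follows exactly as in \cref{sec:causaltheories}, and the ordering relations among the three are routine.

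The step I expect to be the main obstacle is establishing that internal elements are minimal, i.e.\ excluding internal chains and fan-in/fan-out patterns. The bookkeeping with \cref{2parents} there is delicate, since each application can spawn a fresh internal element, and naive ``collapse'' self-maps fail precisely when an element below $x$ is unrelated to the element one wants to collapse $x$ onto. Packaging the elimination through the internal-connection-component reduction of \cref{onecon} together with the projection-style self-map criterion \cref{minproj}, rather than constructing maps by hand, is, I think, the right way to keep this manageable and to guarantee termination using the finiteness of $S$.
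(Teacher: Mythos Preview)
Your outline is broadly correct and reaches the right classification, but it takes a detour that the paper's proof avoids entirely. The step you flag as the ``main obstacle''---showing that every internal element is minimal---is not needed as a separate lemma.

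The paper's argument is this: once you know (via \cref{2parents}, causality, and \cref{collapse}) that every internal element lies below \emph{exactly two} of the three outputs, you simply pick, for each pair $\{O_i,O_j\}$ that occurs, one internal element $x_{ij}$ below that pair, and define
\[
\mathcal{E}(y)=\begin{cases} x_{ij}, & y\text{ internal and below exactly }O_i,O_j,\\ y, & y\text{ a frame element}.\end{cases}
\]
Order-preservation is immediate once you observe that if $y<z$ are both internal then they are below the \emph{same} pair (since $y<z<O_i,O_j$ forces $y<O_i,O_j$, and $y$ is below exactly two outputs). Surjectivity of this single map then forces at most one internal element per pair---so minimality of internal elements and ``no repeats'' fall out together in one stroke.

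By contrast, your route first isolates minimality via \cref{onecon} and \cref{minproj}, and the $\MINOtype{0}{3}$ subcase of that step is exactly where your sketch is vaguest: you write that a non-minimal internal element in a $\MINOtype{0}{3}$ internal connection component is ``ruled out by exhibiting a non-surjective FOP self-map'', but you do not say which map, and the naive choice (send $x$ to some $w<x$) is not obviously order-preserving when $x$ has several incomparable predecessors. If the map you have in mind is the collapse-by-output-profile map above, then you are really using the paper's argument anyway, just wrapped in an extra layer of reduction. So the proposal is not wrong, but the detour through minimality is both unnecessary and the least justified part of the write-up; the direct construction is cleaner and should replace it.
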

\begin{proof}
Let $S$ be a labelling of one of the depicted FPO types and $\mathcal{E}: S \rightarrow S$ be a FOP map. Note that a necessary condition for $\mathcal{E}$ being a FOP map is that if $x < O_i$, then $\mathcal{E}(x) \leq O_i$ for all $x \in S$, but we see from the depicted partial orders that this is only possible if $\mathcal{E}(x) = x$ and thus $\mathcal{E}$ is surjective. This shows that the depicted FPO types are indeed minimal representatives.

Let us now show that we have found all minimal representatives. Let $S$ be a labelling of a minimal representative. Because of \cref{collapse}, we can assume that there is no internal element that is in the past of all three outputs, as the only such case is the third minimal representative that we have already found. Using additionally \cref{2parents} and terminality, for any internal element $x$, there are exactly two outputs in its future. For each pair of outputs $O_i, O_j$ with $i \neq j$ choose an internal element $x_{ij} < O_i, O_j$ if such an element exists. Define
\begin{equation}
\mathcal{E}(x) = \begin{cases}  x_{ij}, & x < O_i, O_j \text{ internal} \\ x, &\text{otherwise} \end{cases}.
\end{equation}
This is frame-preserving as an output is only in the past of itself and thus gets mapped to itself. Let now $x < y$. If (w.l.o.g.) $y < O_1, O_2$, then by transitivity $x < O_1, O_2$ and thus $\mathcal{E}(x) = x_ij = \mathcal{E}(y)$. If $y = O_1$, then $\mathcal{E}(x) = x_{1i} < O_1$ for either $i=2$ or $i=3$. Hence, $\mathcal{E}$ is order-preserving. If $\mathcal{E}$ is surjective as required by $S$ being a labelling of a minimal representative, then $S$ has for each pair $i, j$ at most one element $x_{ij}$ with the property $x_{ij} < O_i, O_j$. The non-trivial relevant minimal representatives that fulfill this property are exactly those depicted in \cref{eq:0I3O}.
\end{proof}

\subsection{Example 2: One input}

\begin{restatable}{prop}{1I2O}\label{1I2O}
The non-trivial relevant minimal representatives for terminal theories with one input and two outputs are 
\begin{equation}\label{eq:1I2Ob}
\tikzfig{Diagrams/1I2O}.
\end{equation}
\end{restatable}
\begin{proof}
The right FPO type is a minimal representative, as any FPO type that only contains the frame is a minimal representative. For the FPO type on the left, if it is not a minimal representative, then there exists a FOP map which maps the internal element to the frame. However, there is no frame element that is in the past of both outputs and as such no such FOP map can exist. 

Let us now show that we have found all non-trivial relevant minimal representatives. Note first that any internal element has to be connected to both outputs by \cref{2parents} and the fact that we are considering terminal theories. Moreover, note that if an internal element exists it cannot be connected to the input due to a combination of \cref{2parents,collapse}. The internal element would be the unique internal element of the FPO but then it cannot have more than one unrelated elements in the past as required by \cref{2parents}. Further, there cannot be more than one internal element (otherwise, we can obtain a non-surjective FOP map by choosing one internal element and mapping all other internal elements to that element). 

This leaves the FPO types we listed as options.
\end{proof}

\begin{restatable}{prop}{1I3O}\label{1I3O}
The non-trivial relevant minimal representatives for terminal theories with one input and three outputs are 
\begin{equation}\label{eq:1I3Ob}
\tikzfig{Diagrams/1I3O_v3}.
\end{equation}
\end{restatable}
\begin{proof}
Let $S$ be a labelling of a minimal representative and assume for now that if the input $I < x$, then $x$ is necessarily an output. Note that this is the case for all the FPO types in \cref{eq:1I3Ob} except the ones in the third row. We will show that given the assumption, an FPO type is a minimal representative iff it is one of those. 

Any FOP map $\mathcal{E}: S \rightarrow S$ must be surjective, which implies that, in particular, any FOP map $\mathcal{E}': S \backslash \{I\} \rightarrow S \backslash \{I\}$ must be surjective. To see this, note that
\begin{equation}
\mathcal{E}(x) = \begin{cases}  \mathcal{E}'(x), & x \neq I \\ I, &x = I \end{cases}.
\end{equation}
is a FOP map on $S$. Frame-preservation is obvious, while the only non-trivial case for order-preservation is for $I < x$.  But by assumption, then $x$ is an output and both $I$ and $x$ are mapped to themselves by $\mathcal{E}$.

By \cref{mincrit}, if $S$ is a labelling of a minimal representative, then $S \backslash \{I\}$ is a labelling of a minimal representative of class \MINOtype{0}{3}. Thus, we can view $S$ as the union of a labelling of a minimal representative of class \MINOtype{0}{3} with $I$ up to adding order relations between $I$ and other elements. Equivalently, this is a necessary condition for $S$ to be a labelling of a minimal representative.

Let us now strengthen the above condition to find one which is both necessary and sufficient. Note that any FOP map $\mathcal{E}: S \rightarrow S$ which is non-surjective would have to map some internal element $x \in S$ to $I$ (as otherwise $\mathcal{E}|_{S \backslash \{I\}}$ would be a self-map on $S \backslash \{I\}$ and non-surjective). A sufficient condition for a $S$ to be a labelling of a minimal representative is thus that no such FOP maps exist. This is the case if there exists no no $x \in S$ internal such that the future of $x$ is contained in the future of $I$.

Combining both conditions we obtain the following: $S$ is a labelling of a minimal representative iff $S \backslash \{I\}$ is a labelling of a minimal representative of class \MINOtype{0}{3} and there exists no $x \in S$ internal such that the future of $x$ is contained in the future of $I$. Comparing to \cref{eq:0I3O}, we then see that the top two rows as well as the bottom row of \cref{eq:1I3O} are exactly the FPO types that fulfill these two conditions.

Let us now show that an FPO $S$ as in the third row of \cref{eq:1I3O} is a labelling of a minimal representative. Let $\mathcal{E}: S \rightarrow S$ be a FOP map. Denote the internal elements with $x_1, ..., x_N$ from left to right. Additionally, denote $x_0 = I$ and $x_{N+1} = O_3$. Note that 
\beq
\forall i, m: \mathcal{E}(x_i) = x_m \Rightarrow \mathcal{E}(x_{i\pm1}) = x_m, x_{m\pm1}.
\eeq 
Iterating this fact, we find that 
\beq
\forall i,j, m: \mathcal{E}(x_i) = x_m \Rightarrow \mathcal{E}(x_{i+j}) = x_{m + k}, \quad k \in \mathbb{Z}, |k| < j.
\eeq
Plugging in $i=0$ and using frame preservation, i.e., $\mathcal{E}(x_0) = \mathcal{E}(I) = I = x_0$, together with the fact that the index is a non-negative number we find that 
\beq
\mathcal{E}(x_j) = x_k, \quad k \leq j. 
\eeq
Conversely, plugging in $i = N+1$, we find that 
\beq
\mathcal{E}(x_j) = x_k, \quad k \geq j.
\eeq
Thus, $\mathcal{E}(x_j) = x_j$ for all $j$, meaning $\mathcal{E}$ is the identity and thus surjective.

Finally, we show that there are no more minimal representatives. We already showed this if  the input is not related to any internal element.  Thus, assume now there exists $x > I$ internal. By \cref{2parents} and terminality we then have w.l.o.g. $I < x < O_1, O_2$. However, $I \not < O_3$ as otherwise by \cref{collapse} the minimal representative would be the one in the bottom right corner (and $x$ could not even exist). 

Further, assume for now that $S$ has only a single internal connection component $\mathsf{ConInt}(x)$ (cf. \cref{intcon}). Since we can assume that we also have only a single connection component due to \cref{parmin}, we can even assume $S = \mathsf{ConInt}(x)$. Let now $I = x_0, x_1, ..., x_m, x_{m+1} = O_3$ be a shortest internal path from $I$ to $O_3$. For all $y \in S \backslash \{O_1, O_2\}$, define $p(y)$ to be the length of the shortest internal path from $I$ to $y$ (note that this is well-defined as we have assumed that all elements of $S$ are internally connected). Define

\begin{equation}
\mathcal{E}(y) = \begin{cases}  x_i, & y \neq O_1, O_2 \text{ and } p(y) = i \\ x_m, &y \neq O_1, O_2 \text{ and } p(y) > m+1 \\ y, &\text{otherwise} \end{cases}
\end{equation}

Since $p(I) = 0, p(O_3) = m+1$, this is frame-preserving. Let now $y < z$. We consider the cases of $z$ being an output or not an output.

\textit{\underline{Case 1:} $z = O_j$ is an output:} We distinguish further the cases $p(y) < m$ and $p(y) \geq m$.

\textit{\underline{Case 1a:} $p(y) < m$:} It cannot be that $z = O_3$ as $y < O_3$ is impossible. This is because $I$ to $y$ and then $O_3$ would define an internal path from $I$ to $O_3$ with length $p(y) + 1 < m+1$, which is not possible by assumption. Thus, we have that $z = O_1, O_2$ and by \cref{2parents} and terminality, we also have that $y < O_1, O_2$. Note that the latter holds in particular for $x_{p(y)}$ and so we have order-preservation in this case, i.e.,
\beq
\mathcal{E}(y) = x_{p(y)} < O_k = \mathcal{E}(O_k), \quad k = 1,2.
\eeq
\textit{\underline{Case 1b:} $p(y) \geq m$:} By definition, we have that $\mathcal{E}(y) = x_m$. Note that $x_m$ is related to both $x_{m-1}$ and $O_3$ because they are neigbouring elements of a path. Thus, we have $x_m < O_3$ and also $x_m < x_{m-1} < O_1, O_2$ (if $x_m > x_{m-1}$, then we could drop $x_m$ from the internal path to obtain a shorter one, i.e., $I,x_1,...,x_{m-1},O_3$ defines an internal path from $I$ to $O_3$ of length $m$. Hence, it must be that $x_m < x_{m-1}$). Thus, the order is preserved, $\mathcal{E}(y) = x_m < O_j = \mathcal{E}(O_j)$ for $j=1,2,3$. 

\textit{\underline{Case 2}: $z$ is internal:} On the other hand, if $y=I$, then $p(z) = 1$ and $\mathcal{E}(y) = x_1 > I$. Consider now the case where $y, z$ are internal. As $y, z$ are related and by definition of $p(y), p(z)$ as the length of the shortest path to $y, z$, we have that $|p(y) - p(z)| = 1$. We distinguish the cases $p(y) = p(z)$ and $p(z) = p(y) \pm 1$.

\textit{\underline{Case 2a}: $p(y) = p(z)$:} The map $\mathcal{E}$ maps $y$ and $z$ to the same element, $\mathcal{E}(y) = x_{p(y)} = \mathcal{E}(z)$ and the order is preserved. 

\textit{\underline{Case 2b}: $p(y) = p(z) \pm 1$:} Denote the shortest (internal) path from $I$ to $y$ with $I = y_0, y_1,..., y_{p(y) -1}, y_{p(y)} = y$. Then, it holds for all $j < p(y) - $, $y_{2j} < y_{2j \pm 1}$, i.e., the even-numbered elements are minimal elements of the path (not necessarily of $S$!), the odd-numbered elements are maximal elements of the path. We can show this by induction. It holds for $I = y_0$ as inputs have no elements in their past. Assume it holds up to some $j$. Then, $y_{2j + 1}$ is a maximal element in the path and $y_{2j+2}$ is related to it. As it cannot be larger than a maximal element it holds that $y_{2j+1} > y_{2j+2}$. On the other hand, if $y_{2j+2} > y_{2j+3}$, then by transitivity $y_{2j+1} > y_{2j+3}$ and we could drop $y_{2j+2}$ from the path to obtain an even shorter than the shortest one. By contradiction, we thus also have $y_{2j+2} < y_{2j+3}$.

Note now that $y < z$ are the last two elements in a shortest internal path from $I$ to $z$. By the above, it must then be that $p(y)$ is even and $p(z)$ odd. Then, 
\beq
\mathcal{E}(y) = x_{p(y)} < x_{p(y) \pm 1} = \mathcal{E}(z)
\eeq
by definition of $\mathcal{E}$ on the one hand, while on the other hand $x_{p(y)} < x_{p(y) \pm 1}$ is again the property of shortest internal paths we proved above. 

This exhausts all cases and, thus, $\mathcal{E}$ defines a FOP map. Since $S$ is, by assumption, a labelling of a minimal representative, $\mathcal{E}$ is surjective due to \cref{mincrit}. For that, we must have that the elements $x_1, ... x_{m}$ are the only internal elements. Thus, $S$ is a labelling of one of the minimal representatives in the third row of \cref{eq:1I3O}.

Assume now $S$ is a labelling of a minimal representative which has more than one internal connection component. By \cref{onecon}, every internal connected component (possibly supplemented with missing frame elements) is a labelling of a minimal representative. Additionally, since we assume that there exists an internal element $x < I$, $\mathsf{ConInt}(x)$ can only be a labelling of one of the minimal representatives from the third row of \cref{eq:1I3O}.

By assumption, there exists $y$ internal such that $\mathsf{ConInt}(y) \neq \mathsf{ConInt}(x)$. By \cref{onecon} this immediately implies that $\mathsf{ConInt}(y)$ is not a labelling of a minimal representative from the third row of \cref{eq:1I3O} as these are all related under $\succ$. We see by inspection of the remaining minimal representatives that then $y$ is the only internal element in $\mathsf{ConInt}(y)$ and additionally $I \not < y$\footnote{Note that we also need to consider trivial minimal representatives here}. Then,
\begin{equation}
\mathcal{E}(z) = \begin{cases}  x_{m}, & z=y \\ z, &\text{otherwise} \end{cases}
\end{equation}
defines a non-surjective FOP map on $S$ since $x_{m} < O_1, O_2, O_3$. Thus, $S$ is not actually a labelling of a minimal representative.
\end{proof}

\subsection{Example 3: two inputs}

\begin{restatable}{prop}{2I2O}\label{2I2O}
The non-trivial relevant minimal representatives for terminal theories with two inputs and two outputs are 
\begin{equation}\label{eq:2I2Ob}
\tikzfig{Diagrams/2I2O_v3}.
\end{equation}
\end{restatable}
\begin{proof}
First note that due \cref{2parents} and terminality any internal element in a minimal representative is in the past of both outputs. This means it also cannot be mapped into an output by a FOP map. We distinguish by number of internal elements.

\textit{\underline{Case 1:} No internal elements:} Any FPO type without any internal elements is a minimal representative and it is is obvious that the listed ones in \cref{eq:2I2O} are the only (non-trivial) ones. 

\textit{\underline{Case 2:} One internal elements:} The FPO types with exactly one internal element in \cref{eq:2I2O} are also minimal representatives as otherwise we would be able to map the internal element to an element of the frame but it is clear that this is not possible in an order-preserving manner. These are also the only non-trivial minimal representatives with only one element . To see this, let $S$ be a labelling of a  minimal representative such that $x$ is its unique internal element. By \cref{2parents} and terminality, $x < O_1, O_2$. Also by \cref{2parents}, either $x$ is minimal in $S$ or $x > I_1, I_2$. In the latter case, this is then necessarily the FPO type which is the least useful minimal representative. In the former case, the only one that is potentially missing is the case where exactly one of the inputs is less than both outputs, w.l.o.g., $I_1 < O_1, O_2$. But then mapping $x$ into $I_1$ would define a non-surjective FOP map and $S$ would not be a labelling of a minimal representative. 

\textit{\underline{Case 3:} More than one internal elements:} That the last group of FPO types are minimal representatives follows from a proof completely analogous to the one used to show that the FPO types in the third row of \cref{eq:1I3O} are minimal representatives. That these are the only minimal representatives if there is only a single internal connected component can also be shown in an analogous manner.

Note now that all the minimal representatives with a single internal connected component are related under $\succ$. This means that by \cref{onecon} no minimal representative can have more than one internal connected component. Thus, we have found all minimal representatives.

\end{proof}

\section{Proofs for relevance of minimal representatives in cd process theories and quantum theory}\label{relevancecq}

In this appendix, we prove \cref{markovreduction,classicalall} from \cref{sec:markov} as well as  \cref{cliffcasc,zigzagcnot} from \cref{sec:quantum}.

\markovreduction*
\begin{proof}
Let $R$ be a minimal representative of an arbitrary process $\mathtt{F} \in \mathbf{Proc}$ and let $S \in R$. If $S$ has no internal elements that are not minimal in $S$, we are done. Thus, assume that there exists $x \in S$ internal and not minimal in $S$. Since $R$ is a minimal representative, due to \cref{2parents}, $x$ must have at least two unrelated elements in its past and since $\mathbf{Proc}$ is terminal, we can assume w.l.o.g. that $x$ is not maximal and thus also has at least two unrelated elements in its future. For simplicity, we will consider the case where $x$ has precisely two such elements in its past and future respectively but the following readily generalises to arbitrary numbers.

Let $i_\mathtt{F}$ be an implementation such that $S \cong \mathcal{G}(i_\mathtt{F})$. Let $\mathtt{g}$ be the process associated to the element $x$. By assumption, there exists a dilation $\mathtt{G}$ which is deterministic. We can then rewrite $\mathtt{g}$ as follows

\begin{gather}\label{deterministicreduction}
\begin{aligned}
\tikzfig{Diagrams/process2in2out} \quad &= \quad \tikzfig{Diagrams/bigGcopydisc} \\
&= \quad \tikzfig{Diagrams/copythenbigG}
\end{aligned}
\end{gather}
Rewriting $\mathtt{g}$ in this way, yields a new implementation $j_\mathtt{F} \in \mathcal{I}_\mathtt{F}$ which then also yields a new FPO as follows (depicting only the part of the FPO corresponding to the boxes in \cref{deterministicreduction})
\begin{equation}
\tikzfig{Diagrams/detred1} \quad \substack{\cref{deterministicreduction} \\ \rightarrow} \quad \tikzfig{Diagrams/detred2} \quad \sim \quad \tikzfig{Diagrams/detred3} \quad \sim \quad \tikzfig{Diagrams/detred4} \quad \succ \quad \tikzfig{Diagrams/detred1}.
\end{equation}
The first two diagrams are the FPO versions of \cref{deterministicreduction}. The second FPO is then equivalent under $\succ$ to the FPO $S \backslash \{x\}$, that is $S$ with $x$ removed, but any order relations left intact, which is the fourth FPO depicted above\footnote{To see this, note that the third FPO is a subset of the second and the second can be mapped in a FOP way by mapping the rightmost two elements of the second into either of the bottom two elements of the third. The third FPO is then equivalent to the fourth. Again, the fourth is a subset of the third and the third can be mapped in a FOP way into the four by mapping the depicted elements into the elements just outside the depicted area.}. 

Let now $R'$ be the minimal representative of $S \backslash \{x\}$ (note that this is not necessarily the FPO type of $S \backslash \{x\}$). As we have $S \backslash \{x\} \succ S$, we then have $R' \succ R$. As $S \backslash \{x\}$ corresponds to an implementation of $\mathtt{F}$, we have that $R' \in \mathcal{R}_\mathtt{F}$ by definition and since $\mathtt{F}$ was arbitrary we find that $R' \succ_\mathbf{Proc} R$ by definition. Further, $|R'| \leq |S \backslash \{x\}| = |S| - 1 = |R| - 1$ and thus $R' \neq R$ by \cref{minrep}. Hence, $R$ is not relevant.

Note that the above can be repeated until the resulting minimal representative has no internal elements left that are not internal (each step reduces the number of internal elements, hence, the algorithm must halt at some point). 

We show now that if all internal elements of $R'$ are minimal, then $R'$ cannot be quasi-relevant. This shows both that $R$ has to be irrelevant, since $R$ being quasi-relevant implies that $R' \succ_\mathbf{Proc} R$ is quasi-relevant and that there are no quasi-relevant minimal representatives.

We claim that if $R'$ is quasi-relevant, then there exists a sequence $\{R_n\}_{n \in \mathbb{N}}$ of minimal representatives such that all internal elements of $R_n$ are minimal and $R_{n+1} \succ_\mathbf{Proc} R_n$, $R_{n+1} \neq R_n$ for all $n$. However, this is absurd as the number of such minimal representatives is finite (for a given class \MINO). 

We prove the claim by induction. Set $R' = R_1$. Assume now we can find such $R_n$ for all $n \leq N$ for some $N \in \mathbb{N}$. Since $R_n$ is quasi-relevant by assumption, there exists $R'_{n+1} \succ_\mathbf{Proc} R_n \succ_\mathbf{Proc} R$ with $R'_{n+1} \neq R_n$. Additionally, as we have shown previously there exists $R_{n+1} \succ_\mathbf{Proc} R'_{n+1}$ such that $R_{n+1}$ has only minimal internal elements. This completes the induction step.

Now let us consider what happens if the theory is deterministic. If every process is deterministic in the first place, then states are also deterministic. Let $i_\mathtt{F}$ be an implementation such that $\mathcal{G}(i_\mathtt{F}) =: S$ has only minimal internal elements. Let $\rho$ be a state in $i_\mathtt{F}$ and use that it is deterministic to rewrite 
\begin{equation}
\tikzfig{Diagrams/state} \quad = \quad \tikzfig{Diagrams/statecopydisc} \quad = \quad \tikzfig{Diagrams/copiedstate},
\end{equation} 
which has the following effect on the FPO type
\begin{equation}
\tikzfig{Diagrams/statefpo} \quad \rightarrow \quad \tikzfig{Diagrams/statecopydiscfpo} \quad \rightarrow \quad \tikzfig{Diagrams/copiedstatefpo} \succ \tikzfig{Diagrams/statefpo},
\end{equation} 
Doing this with every state and combining them with their outputs, yields an implementation whose associated framed partial order has no internal elements. Showing that the original minimal representative is indeed irrelevant and that there are no quasi-relevant minimal representatives is analogous to what we did for the first part of the proof. 
\end{proof}

\classicalall*

\begin{proof}
Assume $R$ were irrelevant. Then, by definition, there exists a relevant minimal representative $R'$ with $R' \succ R$ such that every process that has $R$ as a minimal representative also has $R'$. By \cref{markovreduction}, we can assume that all internal elements of $R, R'$ are minimal.

We consider first the case $\MINOtype{0}{N}$. We choose labellings $S, S'$ of $R, R'$ as follows: in both cases, we denote the frame elements (which are all outputs) with $O_1,..., O_N$ and we denote an internal element $x_{\mathcal{K}}$ where $\mathcal{K} \subseteq \{1,...,N\}$ iff $x_\mathcal{K} < O_i$ iff $i \in \mathcal{K}$. The requirement that $R, R'$ are minimal representatives implies that this is well-defined (i.e., there is at most one element with this property for each set $\mathcal{K}$). Since $R \not \succ R'$, there must exist $x_\mathcal{K} \in S$ such that $x_{\mathcal{K}'} \not \in S'$ for any superset $\mathcal{K} \subseteq \mathcal{K}'$. W.l.o.g., we assume $\mathcal{K} = \{1,...,k\}$. Consider now the probability distribution
\begin{equation}
P(a_1,...,a_k) = \begin{cases} 1/2, &\text{ if } a_1 = a_2 = ... = a_k \\ 0, &\text{ else.} \end{cases}
\end{equation}
This is compatible with $S$ (identifying the variable $a_i$ with the outputs $O_i$ and treating the outputs $O_j$ for $j>k$ as trivial) as 
\begin{equation}
P(a_1,...,a_k) = \sum_\lambda \prod_{i \leq k} P(a_i |\lambda) P(\lambda)
\end{equation}
with $P(a_i|\lambda) = \delta_{a_i, \lambda}$ and $P(\lambda) = \frac{1}{2}$. Let us now show that there exists no implementation of this probability distribution compatible with $S'$. Since the outputs $O_j$ for $j>k$ are trivial, they correspond to tracing out and using that probabilistic classical theory is terminal, we find that if there exists an implementation compatible with $S'$ it can be written as
\begin{equation}\label{eq:classicalall}
P(a_1,...,a_k) = \sum_{\vec{\lambda}}\prod_{i \leq k} \prod_{\mathcal{K}: x_{\mathcal{K}} \in S'} P_i(a_i|\vec{\lambda}_i) P_{\mathcal{K}}(\lambda_{\mathcal{K}}) 
\end{equation}
where $\vec{\lambda}$ refers to all variables $\lambda_{\mathcal{K}}$ and $\vec{\lambda}_i$ refers to all variables $\lambda_{\mathcal{K}}$ with $i \in \mathcal{K}$. We assume w.l.o.g., that $P_{\mathcal{K}}(\lambda_{\mathcal{K}}) \neq 0$ for all $\lambda_{\mathcal{K}}$. If any if the variables $a_i$ differs from another one, we have that
\begin{equation}
0= P(a_1,...,a_k) = \sum_{\vec{\lambda}}\prod_{i \leq k} \prod_{\mathcal{K}: x_{\mathcal{K}} \in S'} P_i(a_i|\vec{\lambda}_i) P_{\mathcal{K}}(\lambda_{\mathcal{K}}) 
\end{equation}
which since all terms on the RHS are non-negative implies that for all $\vec{\lambda}$, 
\begin{equation}\label{eq:classicalall2}
0 = \prod_{i \leq k} P_i(a_i|\vec{\lambda}_i) \text{ if } \exists a_i \neq a_j.
\end{equation}
 Define now $a_i(\vec{\lambda}_i)$ for $i \neq 1$ such that $P_i(a_i(\vec{\lambda}_i|\vec{\lambda_i}) \neq 0$. We must then have $P_1(a_1|\vec{\lambda_1}) = \delta_{a_1, f_1(\vec{\lambda_1})}$ for some binary function $f_1$ in order to satisfy \cref{eq:classicalall2}. Due to symmetry, we must then also have $P(a_i|\vec{\lambda_i}) =  \delta_{a_i, f_i(\vec{\lambda_i})}$. Note that for a given choice of $\vec{\lambda}$, we must then have $f_1(\lambda_1) = f_2(\lambda_2) = ... = f_k(\lambda_k)$ to satisfy \cref{eq:classicalall2}. However, this means that changing the value of $\lambda_\mathcal{K}$ for any $\mathcal{K}$ cannot change the value of any of the $f_i$ as there always exists $j \not \in \mathcal{K}$ and so $f_j$ does not depend on the value of $\lambda_\mathcal{K}$. This means that all $f_i$ must be independent of $\vec{\lambda}$ but this is impossible. Hence, we have found that $P(a_1,...,a_k)$ is not compatible with $S'$, i.e., $R'$, which shows that $R$ is relevant.



Consider now the case \MINO with $M \neq 0$. We choose labellings $S$ for $R$ and $S'$ for $R'$ where the $i$-th element of the input is denoted $I_i$ and the $j$-th element of the output is denoted $O_j$. We first show that $I_i < O_j$ in $S$ iff $I_i < O_j$ in $S'$. Due to $S' \succ S$, we have $I_i < O_j$ in $S'$ implies $I_i < O_j$ in $S$ due to order-preservation. If now it holds $I_i < O_j$ in $S$, but not in $S'$, then the process that discards all inputs except for $I_i$ which it outputs at $O_j$ has $R$ as a minimal representative but not $R'$. 

Thus, $R, R'$ have to differ in their internal elements. We can then consider processes which discard the inputs and we have reduced the problem to the case of \MINOtype{0}{N}.
\end{proof}

In order to prove \cref{cliffcasc,zigzagcnot}, we first prove a lemma giving a sufficient condition for when a process in an arbitrary theory has an implementation for \textit{every} zigzag.

\begin{restatable}{lemma}{zigzagcascade}\label{zigzagcascade}
Let $\mathtt{F} \in \mathbf{Proc}$ be a \MINOtype{2}{2} process such that 
\begin{equation}\label{zigzagcasccond}
\tikzfig{Diagrams/zigzagcascade} \quad \in \quad \mathcal{I}_F \quad \text{or} \quad \tikzfig{Diagrams/zigzagcascadealt} \quad \in \quad \mathcal{I}_F
\end{equation}

is an implementation of $\mathtt{F}$ for some processes $\mathtt{A}, \mathtt{B}, \mathtt{C}, \rho \in \mathbf{Proc}$ or respectively $\mathtt{A}', \mathtt{B}', \mathtt{C}', \rho' \in \mathbf{Proc}$.

Then, $\mathtt{F}$ has an implementation for every zigzag minimal representative.
\end{restatable}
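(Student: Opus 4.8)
The plan is to prove the lemma by producing, for every integer $n \geq 1$, an implementation of $\mathtt{F}$ whose associated framed partial order is (a labelling of) the $n$-zigzag minimal representative; since the zigzag minimal representatives in the $\MINOtype{2}{2}$ case are exhausted by this countable family (the ellipsis in \cref{eq:2I2O}), this establishes the lemma. The two cascades in \cref{zigzagcasccond} are interchanged by the left--right reflection that swaps the two input/output pairs, and under this reflection the family of zigzag minimal representatives is mapped to itself, so it suffices to assume that $\mathtt{F}$ has an implementation of the first form, built from $\mathtt{A}, \mathtt{B}, \mathtt{C}, \rho$.

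First I would read off from the cascade its building blocks: a resource $\rho$, an end-block $\mathtt{A}$ that absorbs the input $X$ and emits the output $A$, an end-block $\mathtt{C}$ that absorbs $Y$ and emits $B$, and a middle-block $\mathtt{B}$ that can be repeated --- intuitively the gate-teleportation relay unit (as remarked after \cref{cliffcasc} for the Clifford case). For each $n$, I would then form a diagram $i_\mathtt{F}^{(n)}$ by splicing $n$ copies of this relay unit into the chain joining $\mathtt{A}$ to $\mathtt{C}$, padding the passive legs with identity processes. The first key step is to check that $i_\mathtt{F}^{(n)}$ is still an implementation of $\mathtt{F}$, i.e. that the inserted segment does not change the process; this has to follow purely from the wiring of the cascade in \cref{zigzagcasccond} together with the generic diagrammatic moves (composition, insertion of identities and trivial systems), and it is exactly the place where the specific structure of \cref{zigzagcasccond} is used.

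The second key step is to compute $\mathcal{G}(i_\mathtt{F}^{(n)})$. Applying the definition of $\mathcal{G}$ and carrying out its two identification steps, one checks that the $n$ spliced relay units become the $n$ alternating minimal/maximal internal elements of a zigzag, that the end-blocks get identified with the corresponding frame elements, and that no relation between an input and an output is created beyond those demanded by the zigzag shape (in particular the two-way non-signalling of $\mathtt{F}$ is reflected in the absence of the extra relations of the two-way communication minimal representative). To conclude that this FPO is the $n$-zigzag minimal representative and not merely some non-minimal labelling, I would invoke \cref{mincrit}: one verifies that every FOP self-map of it is a relabelling, which for a single internal connection component follows from the argument already used for the zigzags in the third row of \cref{eq:1I3O} (tracking how the indices of the chain elements are forced to be fixed by frame-preservation at the two ends). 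Together with \cref{conversion} for any zigzags shorter than the one realized by the base cascade (using that a longer zigzag $\succ$ a shorter one), this yields implementations for all zigzag minimal representatives.

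The main obstacle I anticipate is the interaction of the two key steps: arranging the splicing so that, simultaneously, the diagram still equals $\mathtt{F}$ \emph{and} its FPO is exactly the $n$-zigzag rather than something strictly higher in $\succ$ (such as two-way communication). It is precisely the failure of the latter possibility that makes the lemma non-trivial and that underlies \cref{zigzagcnot} --- the quantum CNOT gate having the zigzags as implementations but not the two-way communication minimal representative.
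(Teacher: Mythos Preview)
You have misread the structure of the cascade in \cref{zigzagcasccond}. The diagram is not built only from $\mathtt{A}, \mathtt{B}, \mathtt{C}, \rho$; it also contains a box labelled $\mathtt{F}$ itself. This self-referential feature is the whole point of the hypothesis and is what makes the paper's proof a two-line induction: the $1$-zigzag case is the hypothesis, and for the inductive step one substitutes the already-obtained $N$-zigzag implementation of $\mathtt{F}$ for the $\mathtt{F}$-box inside the cascade, producing an $(N{+}1)$-zigzag implementation.

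Your proposed mechanism --- repeating a ``relay unit'' $\mathtt{B}$ --- does not follow from the hypothesis. Nothing about $\mathtt{B}$ appearing once in one implementation guarantees that splicing in further copies of $\mathtt{B}$ still yields $\mathtt{F}$; you acknowledge that this step ``has to follow purely from the wiring of the cascade'' but never show it, and in general it is false. The obstacle you correctly flag (arranging that the spliced diagram simultaneously equals $\mathtt{F}$ and has the right FPO) dissolves once the self-reference is seen: equality with $\mathtt{F}$ is automatic because you are replacing a box already equal to $\mathtt{F}$ by an implementation of $\mathtt{F}$, and the zigzag lengthens by one because the $N$-zigzag sub-diagram replaces a single node of the outer $1$-zigzag template. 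Your appeals to \cref{mincrit} and \cref{conversion} at the end are then unnecessary.
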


\begin{proof}
We prove by induction over all $N$-zigzags.

The $N=1$ case is true by assumption.

Assume now $\mathtt{F}$ has an implementation for the $N$-zigzag. Plug in this decomposition for the box $\mathtt{F}$ in \cref{zigzagcasccond}. This yields a new implementation whose minimal representative is the $N+1$-zigzag.
\end{proof}

Using this condition, we can prove \cref{cliffcasc}.

\cliffcasc*
\begin{proof}
We show that $U$ satisfies the antecedent of \cref{zigzagcascade}. For readability, we only explicitly show the case where all wires of $U$ correspond to qubits. The generalisation is straightforward.

Since $U$ is an element of the Clifford group, by definition the equality
\begin{equation}\label{cliff}
\tikzfig{Diagrams/cliffordproof0a} \quad = \quad \tikzfig{Diagrams/cliffordproof0b}
\end{equation}
must hold, for some $f, g$ which are functions from the set $\{0,1,2,3\}$ to itself and $\theta$ is a function from $\{0,1,2,3\}$ to $\{0, i\pi/2, i\pi, 3i\pi/2\}$. Note that in order for $U$ to satisfy unitarity, it must hold that if $f(i) = f(j)$ and $g(i) = g(j)$, then $i = j$.

Consider the diagram (note that sums are to be treated coherently and traces are applied after the sum)
\begin{equation}
\tikzfig{Diagrams/cliffordproof1}.
\end{equation}
Composing the red, green and blue shaded regions into a single CPTP map each\footnote{The red shaded region is an isometry. It corresponds to coherently outputting the state $\ket{jkjk}$ depending on which Bell state is on the two input wires. The green and blue shaded regions are CPTP maps as they correspond to measuring the system on the two wires on the right hand side and applying unitaries conditioned on the outcome}, the minimal representative is the 1-zigzag. Using that the composition of the cup and the cap is the identity, we then get
\begin{equation}
\tikzfig{Diagrams/cliffordproof2} \quad = \quad \tikzfig{Diagrams/cliffordproof3},
\end{equation}
where we used \cref{cliff} twice to obtain the equality. Note that the traced out state simply gives the scalar 1. Hence, $U$ satisfies the antecedent of \cref{zigzagcascade} and we are done.
\end{proof}

Next, we need a necessary condition for when a quantum process can be implemented with the two-way communication minimal representation.

\begin{restatable}{lemma}{evcond}\label{evcond}
Let $U: A \otimes B \rightarrow C \otimes D$ be a \MINOtype{2}{2} isometry in quantum theory. Let $p_{\psi, \phi}(\lambda)$ be the characteristic polynomial of the reduced state $\text{tr}_D U(\ket{\psi}^A \otimes \ket{\phi}^B)$. If $U$ has an implementation corresponding to the 2-way communication minimal representative, then the following relation holds for all $\ket{\psi}^A, \ket{\psi'}^A, \ket{\phi}^B, \ket{\phi'}^B$
\begin{equation}\label{eq:evcond}
p_{\psi, \phi}(\lambda)p_{\psi', \phi'}(\lambda) = p_{\psi', \phi}(\lambda) p_{\psi, \phi'}(\lambda).
\end{equation} 
\end{restatable}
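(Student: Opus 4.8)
The plan is to turn the hypothesis into an explicit circuit identity and then compute the reduced state directly. Write $\rho_{\psi\phi} := U\big(\ket{\psi}^A\ket{\phi}^B\big)\big(\bra{\psi}^A\bra{\phi}^B\big)U^\dagger$, so that $p_{\psi,\phi}$ is the characteristic polynomial of $\mathrm{tr}_D\,\rho_{\psi\phi}$. By assumption there is an implementation $i_U \in \mathcal{I}_U$ with $\mathcal{G}(i_U)$ the two-way communication minimal representative, and since quantum theory is causal every constituent process of $i_U$ is CPTP. First I would write $U$ in the form dictated by this causal structure: a fixed shared resource state, a local process on Alice's side acting on $A$ together with her share of the resource, a local process on Bob's side acting on $B$ together with his share, and the two boxes that produce the output wires $C$ and $D$ (the latter two being identified with the output frame elements under $\mathcal{G}$).

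Next I would discard $D$. Because the box producing $D$ is trace-preserving, discarding its output is the same as discarding its inputs, which collapses the entire ``$D$-side'' of the circuit; what remains is an expression for $\mathrm{tr}_D\,\rho_{\psi\phi}$ built only from the shared resource, Alice's and Bob's local processes with their $D$-bound outputs discarded, and the box producing $C$. The crucial structural step is then to upgrade this to a block-diagonal form. Here one uses that $U$ is an isometry: pure product inputs $\ket{\psi}^A\ket{\phi}^B$ are sent to pure outputs, and since product states span $\mathcal{B}(A\otimes B)$ this forces the constituent processes to act isometrically on the relevant subspaces; together with the two-way communication structure --- in which nothing is exchanged between the two sides once both have accessed the resource --- this should force that, conditioned on the shared resource, the information reaching $C$ originates from a single side. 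Concretely, I expect to establish that there is a fixed (i.e.\ $\psi,\phi$-independent) unitary $V$ on $C$ such that $V^\dagger\,(\mathrm{tr}_D\,\rho_{\psi\phi})\,V = \sigma(\psi)\oplus\tau(\phi)$, a block-diagonal operator whose first block depends only on $\psi$ and whose second only on $\phi$.

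Granting such a form the conclusion is immediate. The characteristic polynomial of a block-diagonal operator is the product of those of its blocks and is invariant under conjugation by $V$, so $p_{\psi,\phi}(\lambda) = f(\psi,\lambda)\,g(\phi,\lambda)$, where $f(\psi,\lambda)$ is the characteristic polynomial of $\sigma(\psi)$ and $g(\phi,\lambda)$ that of $\tau(\phi)$; hence $p_{\psi,\phi}(\lambda)\,p_{\psi',\phi'}(\lambda) = f(\psi,\lambda)g(\phi,\lambda)f(\psi',\lambda)g(\phi',\lambda) = p_{\psi',\phi}(\lambda)\,p_{\psi,\phi'}(\lambda)$, which is \cref{eq:evcond}. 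The main obstacle is precisely the middle step: extracting the block-diagonal form of $\mathrm{tr}_D\,\rho_{\psi\phi}$ from the bare facts that $U$ is an isometry and that $i_U$ realises the two-way communication minimal representative. This requires a careful dilation argument to pin down the constituent processes and careful bookkeeping of how the shared resource controls the flow of information into $C$ once $D$ has been discarded; it is also here that one genuinely uses that the causal structure is the two-way communication one rather than, say, a longer zigzag, for which no such rigidity holds.
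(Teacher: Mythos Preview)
Your proposal has a genuine gap at exactly the point you flag as the main obstacle, and the heuristic you offer for the block-diagonal form does not survive scrutiny. First, the two-way communication minimal representative contains \emph{no} shared resource: it is the frame-only FPO with $I_1,I_2<O_1,O_2$, realised by four boxes $\mathtt{A},\mathtt{B}$ (at the inputs) and $\mathtt{C},\mathtt{D}$ (at the outputs), with $\mathtt{A}$ and $\mathtt{B}$ each wired forward to \emph{both} $\mathtt{C}$ and $\mathtt{D}$. Discarding $D$ via causality therefore yields $\mathrm{tr}_D\rho_{\psi\phi}=\mathtt{C}\big[\alpha(\psi)\otimes\beta(\phi)\big]$, a CPTP map applied to a \emph{tensor product} of a $\psi$-dependent and a $\phi$-dependent state. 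Nothing here forces a \emph{direct-sum} form $\sigma(\psi)\oplus\tau(\phi)$ after a fixed unitary: the eigenvalues of $\alpha(\psi)\otimes\beta(\phi)$ are pairwise products, not a disjoint union, and $\mathtt{C}$ is an arbitrary channel. So neither the block-diagonal structure nor the factorisation $p_{\psi,\phi}(\lambda)=f(\psi,\lambda)g(\phi,\lambda)$ follows from the causal picture alone; the isometry of $U$ has to be exploited in a far more specific way than ``pure inputs go to pure outputs''.

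The paper's argument is quite different and never posits a structural decomposition of $\mathrm{tr}_D\rho_{\psi\phi}$ itself. It purifies $\mathtt{C},\mathtt{D}$ to isometries $V_{\mathtt{C}},V_{\mathtt{D}}$; because $U$ is already an isometry, the purified circuit must equal $U\otimes\ket{\chi}$ for a fixed $\ket{\chi}$ (uniqueness of Stinespring up to the environment). This identity lets one express $\mathtt{A}(\psi)$ and $\mathtt{B}(\phi)$ as partial traces of $U'(\psi\otimes\phi')$ and $U'(\psi'\otimes\phi)$ for auxiliary fixed $\phi',\psi'$. Substituting these expressions back into the circuit and purifying the new top boxes produces a second identity of the same shape, now $U\otimes\ket{\chi}\otimes\ket{\xi}$; tracing out the $D$-side and using that the purifying isometry preserves characteristic polynomials gives a relation among $p_{\psi,\phi}$, $p_{\psi',\phi}$, $p_{\psi,\phi'}$ and the fixed polynomials $p_\chi,p_\xi$. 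Specialising $\psi=\psi'$, $\phi=\phi'$ eliminates the auxiliary polynomials and delivers \cref{eq:evcond}. The iterated purify-then-substitute manoeuvre is the idea you are missing.
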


\begin{proof}
Assume $U$ can be decomposed according to the two-way communication minimal representative
\begin{equation}\label{2waycomm}
\tikzfig{Diagrams/2waycomm}.
\end{equation}
W.l.o.g., we can assume that such a decomposition exists where $\mathtt{A}, \mathtt{B}$ are isometries. If they are not, let $V_\mathtt{A}, V_\mathtt{B}$ be isometries purifying $\mathtt{A}, \mathtt{B}$ and note that 
\begin{equation}\label{2waycommwlog}
\tikzfig{Diagrams/unitary} \quad = \quad  \tikzfig{Diagrams/2waycomm} \quad = \quad \tikzfig{Diagrams/2waycommwlog}.
\end{equation}
Hence, the RHS of the above equation is a decomposition of $U$ whose minimal representative is the two-way communication one (after composing the processes in the blue respectively green shaded regions into a single box).

Let now $V_\mathtt{C}, V_\mathtt{D}$ be purifications of the CPTP maps $\mathtt{C}, \mathtt{D}$. Then, it holds that
\begin{equation}\label{2waycommpur}
\tikzfig{Diagrams/2waycommpur} \quad = \quad \tikzfig{Diagrams/unitary} \quad \tikzfig{Diagrams/purifyingstate} \quad =: \quad \tikzfig{Diagrams/evconduprime}
\end{equation} 
up to ordering of the output wires.

The LHS is a purification of $U$. To see this, note that the processes $\mathtt{A}, \mathtt{B}, V_\mathtt{C}, V_{\mathtt{D}}$ are all isometries, hence, the overall diagram is an isometry. Tracing out the wires $C_E, D_E$ yields the original decomposition of $U$ from \cref{2waycomm} and so the LHS is a purification of $U$. A purification of an isometry must be product (this is because the RHS is also \textit{a} purification of $U$ and the purification of a channel is unique up to an isometry on the purifying systems). From this, we can derive the following equality for $\mathtt{A}$
\begin{equation}
\tikzfig{Diagrams/evcond1} \quad = \quad \tikzfig{Diagrams/evcond2} \quad = \quad \tikzfig{Diagrams/evcond3} \quad = \quad \tikzfig{Diagrams/evcond4}
\end{equation}
where $\phi'$ is an arbitrary pure state. Here, we used that $\mathtt{B}$ is CPTP in the first equality, then we added an identity process on the four outgoing wires before the trace in the second equality (the correctness of this equality can be easily checked by starting from the RHS to the LHS) and finally we used \cref{2waycommpur} in the last equality. An analogous equality holds for $\mathtt{B}$. Plugging these into the LHS of \cref{2waycommpur}, combining boxes so as to obtain the form of the 2-way communication minimal representative and then purifying the top two boxes, we find up to ordering of the output wires
\begin{equation}
\tikzfig{Diagrams/evcond5} \quad = \quad \tikzfig{Diagrams/unitary} \quad \tikzfig{Diagrams/purifyingstate}\quad \tikzfig{Diagrams/purifyingstate2}.
\end{equation}
Tracing out the systems $D, D_E, D'_E$ and using terminality, we find for all states $\ket{\psi}^A, \ket{\phi}^ B$
\begin{equation}
\tikzfig{Diagrams/evcond6} \quad = \quad \tikzfig{Diagrams/unitarymarginal} \quad \tikzfig{Diagrams/purifyingstatemarginal} \quad \tikzfig{Diagrams/purifyingstate2marginal}.
\end{equation}
Since $V'_{\mathtt{C}}$ is an isometry it does not change the characteristic polynomial of the state it acts on and we have that (using additionally that $U' = U \otimes \ket{\chi}$)
\begin{equation}
\tikzfig{Diagrams/evcond7} \quad \tikzfig{Diagrams/purifyingstatemarginal} \quad \tikzfig{Diagrams/purifyingstatemarginal} \quad \text{ and } \quad \tikzfig{Diagrams/unitarymarginal} \quad \tikzfig{Diagrams/purifyingstatemarginal} \quad \tikzfig{Diagrams/purifyingstate2marginal}.
\end{equation}
have the same characteristic polynomial. Hence, we have for all states $\ket{\psi}^A, \ket{\phi}^B$
\begin{equation}
p_{\psi, \phi}(\lambda)p_{\chi}(\lambda) p_{\xi}(\lambda) = p_{\psi', \phi}(\lambda) p_{\psi, \phi'}(\lambda)p_{\chi}(\lambda)  p_{\chi}(\lambda).
\end{equation}
where $p_{\chi}(\lambda), p_{\xi}(\lambda)$ refer to the characteristic polynomials of the reduced states of $\ket{\chi}, \ket{\xi}$. We can cancel one $p_{\chi}(\lambda)$ on each side, using polynomial division (note that $p_{\chi}(\lambda)$ cannot be the zero polynomial as it is the characteristic polynomial of a quantum state), obtaining
\begin{equation}\label{charpol}
p_{\psi, \phi}(\lambda)p_{\xi}(\lambda) = p_{\psi', \phi}(\lambda) p_{\psi, \phi'}(\lambda) p_{\chi}(\lambda).
\end{equation}
Note that this equality holds for all $\lambda$, even the zeros of $p_{\chi}(\lambda)$ due to the properties of polynomial division (alternatively, one can think of it as being due to continuity. The above equation holds for all $\lambda$ except the zeros of $p_{\chi}(\lambda)$ but since these are isolated points and polynomials are continuous functions, the equality holds at those points to). Plugging in $\psi'$ for $\psi$ and $\phi'$ for $\phi$, we obtain
\begin{equation}
p_{\psi', \phi'}(\lambda)p_{\xi}(\lambda) = p_{\psi', \phi'}(\lambda) p_{\psi', \phi'}(\lambda) p_{\chi}(\lambda) \Rightarrow p_{\xi}(\lambda) = p_{\psi', \phi'}(\lambda) p_{\chi}(\lambda).
\end{equation}
Plugging this back into \cref{charpol}, we obtain \cref{eq:evcond}. Note that while we initially fixed $\psi', \phi'$, only the states $\ket{\chi}, \ket{\xi}$ depended on this choice. Since \cref{eq:evcond} is independent of $\ket{\chi}, \ket{\xi}$, it must hold for all $\ket{\psi}^A, \ket{\psi'}^A, \ket{\phi}^B, \ket{\phi'}^B$.
\end{proof}

Combining \cref{cliffcasc} and \cref{evcond} yields \cref{zigzagcnot}
\zigzagcnot*
\begin{proof}
Since $\text{CNOT}$ belongs to the Clifford group every zigzag minimal representative is an implementation of $\text{CNOT}$ by \cref{cliffcasc}. We need to show that $\text{CNOT}$ cannot be implemented by the 2-way communication minimal representative.

Note that 
\begin{gather}
\begin{aligned}
\text{CNOT}(\ket{00}) &= \ket{00} \\
\text{CNOT}(\ket{0+}) &= \ket{0+} \\
\text{CNOT}(\ket{+0}) &= \frac{1}{\sqrt{2}} (\ket{00} + \ket{11}) \\
\text{CNOT}(\ket{++}) &= \frac{1}{2} (\ket{00} + \ket{01} + \ket{10} + \ket{11}).
\end{aligned}
\end{gather}
Three of these states are separable states, which is equivalent to their reduced density matrices having a single eigenvalue $1$. Hence, in order to satisfy \cref{evcond}, the last one, $\text{CNOT}(\ket{+0}) = \frac{1}{\sqrt{2}} (\ket{00} + \ket{11})$, should also be separable, but this is not the case, as it is the maximally entangled state. Thus, $\text{CNOT}$ cannot be decomposed using the 2-way communication minimal representative.

\end{proof} 
\end{document}